\definecolor{lcolor}{rgb}{0.6,0.3,0.3}
\definecolor{qcolor}{rgb}{0.19,0.55,0.91}
\definecolor{hcolor}{rgb}{0.9,0.2,0.5}
\definecolor{scolor}{rgb}{0.9,0.5,0.2}
\newcommand{\tc}[2]{\textcolor{#1}{#2}}
\newcommand{\p}{\partial}
\newcommand{\Z}{\mathbbm{Z}}
\newcommand{\R}{\mathbbm{R}}
\newcommand{\C}{\mathbbm{C}}
\newcommand{\one}{\mathbbm{1}}
\newcommand{\eps}{\epsilon}
\newcommand{\step}{\vskip 3mm}
\newcommand{\ax}{{\mathfrak a}}
\newcommand{\gx}{{\mathfrak g}}
\newcommand{\mx}{{\mathfrak m}}
\newcommand{\px}{{\mathfrak p}}
\newcommand{\qx}{{\mathfrak q}}
\DeclareMathOperator{\sheafHom}{\mathscr{Hom}\text{\kern -3pt {\calligra\large om}}\,}
\DeclareMathOperator{\image}{image}
\DeclareMathOperator{\End}{End}
\DeclareMathOperator{\Hom}{Hom}
\DeclareMathOperator{\Res}{Res}
\renewcommand{\subset}{\subseteq}
\newtheoremstyle{mytheoremstyle}
{14pt}% space above
{14pt}% space below
{\itshape}% body font
{20pt}% indent amount
{\bfseries}% theorem head font
{.}% punctuation after theorem head
{.5em}% space after theorem head
{}% theorem head spec
\newtheoremstyle{myremarkstyle}
{10pt}% space above
{10pt}% space below
{}% body font
{20pt}% indent amount
{\itshape}% theorem head font
{.}% punctuation after theorem head
{.5em}% space after theorem head
{}% theorem head spec
\newtheoremstyle{myproofstyle}
{12pt}% space above
{12pt}% space below
{}% body font
{20pt}% indent amount
{\itshape}% theorem head font
{.}% punctuation after theorem head
{.5em}% space after theorem head
{}% theorem head spec
\theoremstyle{mytheoremstyle}
\newtheorem{theorem}{Theorem}
\newtheorem{definition}[theorem]{Definition}
\newtheorem{lemma}[theorem]{Lemma}
\newtheorem{corollary}[theorem]{Corollary}
\newcounter{assp}
\theoremstyle{myremarkstyle}
\newtheorem{remark}{Remark}
\theoremstyle{myproofstyle}
\DeclareMathOperator{\Tor}{Tor}
\DeclareMathOperator{\Der}{Der}
\DeclareMathOperator{\Coder}{Coder}
\newcommand{\pd}{\nrightarrow}
\newcommand{\pdd}{\mathbin{{\let\scriptstyle\textstyle\substack{%
                    \nleftarrow \\[-1em] \nrightarrow}}}}
\newcommand{\ndd}{\mathbin{{\let\scriptstyle\textstyle\substack{%
                    \leftarrow \\[-1em] \rightarrow}}}}
\newcommand{\xxx}[3]{\rule{#1 mm}{0pt}$\mathllap{#2}#3$}
\newcommand{\yma}{\tc{scolor}{M}}
\definecolor{cobalt}{rgb}{0.0, 0.28, 0.67}
\definecolor{burgundy}{rgb}{0.5, 0.0, 0.13}
\definecolor{chromeyellow}{rgb}{1.0, 0.65, 0.0}
\definecolor{cinnabar}{rgb}{0.89, 0.26, 0.2}
\definecolor{crimson}{rgb}{0.86, 0.08, 0.24}
\definecolor{jade}{rgb}{0.0, 0.66, 0.42}
\definecolor{lightseagreen}{rgb}{0.13, 0.7, 0.67}
\newcommand{\cx}{\tc{lightseagreen}{\mathfrak{c}}}
\title{\boldmath A homotopy BV algebra for Yang-Mills\\
                 and color-kinematics}
\author{Michael Reiterer}
\affiliation{The Hebrew University of Jerusalem, Israel}
\emailAdd{michael.reiterer@protonmail.com}
\abstract{%
Yang-Mills gauge theory on Minkowski space
supports a Batalin-Vilkovisky-infinity algebra structure,
all whose operations are local.
To make this work,
the axioms for a BV-infinity algebra
are deformed by a quadratic element,
here the Minkowski wave operator.
This homotopy structure implies BCJ/color-kinematics duality;
a cobar construction yields a strict algebraic structure
whose Feynman expansion for Yang-Mills tree amplitudes
complies with the duality.
It comes with a `syntactic kinematic algebra'.
}
\begin{document} 
\maketitle
\flushbottom
%%%%%%%%%%%%%%%%%%%%%%%%%%%%%%%%%%%%%%%%%%%%%%%%%%%%%%%%%%%%%%%%%%%%%%%%

\newcommand{\operad}{\tc{hcolor}{\textnormal{P}}}
\newcommand{\operadinf}{\tc{hcolor}{\textnormal{P}_\infty}}
\newcommand{\gerst}{\tc{hcolor}{\textnormal{G}}}

\newcommand{\ainf}{\tc{hcolor}{\textnormal{A}_\infty}}
\newcommand{\linf}{\tc{hcolor}{\textnormal{L}_\infty}}
\newcommand{\cinf}{\tc{hcolor}{\textnormal{C}_\infty}}
\newcommand{\ginf}{\tc{hcolor}{\textnormal{G}_\infty}}
\newcommand{\ginfq}[1]{\tc{hcolor}{\textnormal{G}_\infty^{#1}}}
\newcommand{\bvinf}{\tc{hcolor}{\textnormal{BV}_\infty}}
\newcommand{\bv}[1]{\tc{hcolor}{\textnormal{BV}_\infty^{#1}}}
\newcommand{\naivecurv}{\tc{scolor}{N}}
\newcommand{\curv}{\tc{hcolor}{K}}
\newcommand{\pcurv}{\tc{qcolor}{\mathbf{P}}}
\newcommand{\N}{\mathbbm{N}}
\newcommand{\comp}[1]{\tc{scolor}{\overline{#1}}}

\renewcommand{\tt}{\tc{scolor}{\alpha}}
\newcommand{\qq}{\tc{qcolor}{\beta}}
\newcommand{\hh}{\tc{hcolor}{\gamma}}
\newcommand{\uu}{\tc{hcolor}{\kappa}}

\newcommand{\Qx}{\tc{hcolor}{Q}}
\newcommand{\Hopf}{\tc{qcolor}{H}}
\newcommand{\fil}[1]{\tc{qcolor}{\mathcal{F}_{#1}}}
\newcommand{\ddr}{d_{\textnormal{dR}}}
\newcommand{\deldr}{\delta_{\textnormal{dR}}}
\newcommand{\ldr}{\Delta_{\textnormal{dR}}}
\newcommand{\apw}{\ax_{\textnormal{w}}}
\newcommand{\Apw}{A_{\textnormal{w}}}
\newcommand{\apx}[1]{\ax_{\textnormal{w},#1}}
\newcommand{\Apx}[1]{A_{\textnormal{w},#1}}

\newcommand{\prop}{h_{\sharp}}
\newcommand{\Aprop}{h_{A\sharp}}

\newcommand{\imap}{\tc{qcolor}{I}}
\newcommand{\pmap}{\tc{qcolor}{P}}
\newcommand{\zmap}{\tc{qcolor}{Z}}

\newcommand{\apr}{\tc{scolor}{\Theta}}

%%%%%%%%%%%%%%%%%%%%%%%%%%%%%%%%%%%%%%%%%%%%%%%%%%%%%%%%%%%%%%%%%%%%%%%%%%
\section{Introduction}\label{sec:intro}

We construct a Batalin-Vilkovisky-infinity
algebra \cite{tt,gtv}
for Yang-Mills gauge theory on Minkowski spacetime.
This implies the Bern-Carrasco-Johansson relations \cite{bcj,tz}
for tree amplitudes.
Furthermore, a cobar construction yields a quasi-isomorphic strict algebra
whose Feynman expansion
complies with BCJ/color-kinematics duality \cite{bcj,bdhk,bccjr}\footnote{%
This posits that YM tree amplitudes may be written as a sum of cubic trees
where kinematic numerators satisfy `identities in one-to-one correspondence
to the Jacobi identities obeyed by color factors' \cite{bdhk}.}.
Similar structures have been considered before in gauge theory,
see Zeitlin's \cite{zeitlin,zeitlin2,zeitlin3}\footnote{%
Zeitlin considers, among others,
homotopy BV/Gerstenhaber algebras
and generalized Yang-Mills equations associated 
to certain vertex operator algebras,
via a quasi-classical limit of Lian-Zuckerman algebras.},
but not a structure suitable for BCJ, for which
one needs to deform the BV-infinity axioms.
\step
One can formulate Yang-Mills in terms of a
differential graded commutative algebra.
To define this dgca \cite{zeitlin,costello,nr},
let $\Omega = \Omega(\R^4)$ be the smooth complex-valued differential forms
on Minkowski space;
$\ddr$ the de Rham differential; $\Omega^2 = \Omega^2_+ \oplus \Omega^2_-$
the invariant decomposition\footnote{%
Decomposition into eigenspaces of the Hodge star operator
for the Minkowski metric.
},
and $\pi_+ : \Omega^2 \to \Omega^2_+$ the associated projection.
Consider
\begin{equation}\label{eq:cpx}
\vcenter{\hbox{%
\begin{tikzpicture}[node distance = 11mm and 24mm, auto]
  \node (a0) {$\Omega^0$};
  \node (a1) [right=of a0] {$\Omega^1$};
  \node (a2) [right=of a1] {$\Omega^2_+$};
  \node (b2) [below=of a1] {$\Omega^2_+$};
  \node (b3) [right=of b2] {$\Omega^3$};
  \node (b4) [right=of b3] {$\Omega^4$};
  \draw[->] (a0) to node {$\ddr$} (a1);
  \draw[->] (a1) to node {$\pi_+\ddr$} (a2);
  \draw[->] (b2) to node {$\one$} (a2);
  \draw[->] (b2) to node {$\ddr$} (b3);
  \draw[->] (b3) to node {$\ddr$} (b4);
\end{tikzpicture}}}
\end{equation}
%%%%%%%%%%%%%%%%%%%%%%%%%%%%%%%
Let $\ax = \ax^0 \oplus \ax^1 \oplus \ax^2 \oplus \ax^3$
be the four columns in \eqref{eq:cpx} respectively,
and $d : \ax^i \to \ax^{i+1}$ the differential given by all arrows in \eqref{eq:cpx}.
The gca product $\ax^i \otimes \ax^j \to \ax^{i+j}$
is given by wedging forms:
the product of an element in the upper row and one in the upper (lower) row is placed
into the upper (lower) row; the product of two elements in the lower row is zero;
see Section \ref{sec:defs}.
Then $\ax$ is a dgca: $d$ is a differential;
the gca product is associative and graded commutative;
$d$ is a derivation\footnote{%
Note that such a dgca $\ax$ can be defined on
any 4-manifold with a metric or conformal metric.}.
Given a finite-dimensional
`color' Lie algebra $\cx$,
the classical Yang-Mills equation of motion
is the Maurer-Cartan equation \cite{mm} in the 
differential graded Lie algebra $\ax \otimes \cx$\footnote{%
The unknown in the Maurer-Cartan equation is an element of $\ax^1 \otimes \cx
= (\Omega^1 \otimes \cx) \oplus (\Omega^2_+ \otimes \cx)$,
consisting of a $\cx$-valued 1-form, the vector potential;
and the $\Omega^2_+$-half of a $\cx$-valued 2-form, the field strength.}.
The focus in this paper is on $\ax$ which is the `kinematic' part.
%%%%%%%%%%%%%%%%%%%%%%%%%%%%%%%%%%%%%%
\step
Denote by $h: \ax^i \to \ax^{i-1}$
any map\footnote{%
All maps are understood to be $\C$-linear.} of the form
\begin{equation}\label{eq:cpxhhh}
\vcenter{\hbox{%
\begin{tikzpicture}[node distance = 11mm and 24mm, auto]
  \node (a0) {$\Omega^0$};
  \node (a1) [right=of a0] {$\Omega^1$};
  \node (a2) [right=of a1] {$\Omega^2_+$};
  \node (b2) [below=of a1] {$\Omega^2_+$};
  \node (b3) [right=of b2] {$\Omega^3$};
  \node (b4) [right=of b3] {$\Omega^4$};
  \draw[->] (a1) to node {} (a0);
  \draw[->] (a2) to node {} (a1);
  \draw[->] (b2) to node {} (a0);
  \draw[->] (b3) to node {} (a1);
  \draw[->] (b4) to node {} (a2);
  \draw[->] (b3) to node {} (b2);
  \draw[->] (b4) to node {} (b3);
\end{tikzpicture}}}
\end{equation}
subject to the following requirements:
all arrows are invariant under translations on $\R^4$;
the four horizontal arrows are homogeneous
first order partial differential operators;
the three diagonal arrows are $C^\infty(\R^4)$-linear
meaning they are without derivatives; and
\begin{equation}\label{eq:hxs}
\begin{aligned}
h^2 & \;=\; 0\\
dh + hd & \;=\; \Box \one
\end{aligned}
\end{equation}
where $\Box = \eta^{\mu\nu}\p_\mu \p_\nu$
is the wave operator for the Minkowski metric $\eta$ on $\R^4$.
Such an $h$ exists but it is not unique; see Section \ref{sec:defs}.
It need not be Lorentz invariant.
\step
By definition, the failure of $h$ to be second order
on $\ax$ is measured by\footnote{%
The order is in the graded sense,
as in \cite[Section 1]{kos},
not the order as a partial differential operator.
}\textsuperscript{,}\footnote{%
The $-h(1)xyz$ term is absent because $h(1)=0$, since $\ax$ is zero in degree $-1$.}
\begin{multline}\label{eq:sdef}
S(x,y,z) \;=\; 
h(xyz) - h(xy)z - (-1)^x xh(yz) - (-1)^{(x+1)y} yh(xz)\\
+ h(x)yz + (-1)^x xh(y)z + (-1)^{x+y} xyh(z)
\end{multline}
a graded symmetric map $S : \ax^{\otimes 3} \to \ax$. Here and in similar instances below,
$x,y,z,\ldots$ are homogeneous
elements of $\ax$, and juxtaposition denotes the gca product in $\ax$.

%%%%%%%%%%%%%%%%%%%%%%%%%%%%%%%%%%%%%%%%%%%%%%%%%%%%%%%%%%%%%%%%%%%%%%%%%
\begin{theorem}[Second order up to homotopy] \label{theorem:2ax}
Every such $h$ is second order on $\ax$ up to homotopy.
Namely, there exists a map
$\theta_3 : \ax^i \otimes \ax^j \otimes \ax^k \to \ax^{i+j+k-2}$ such that
\begin{equation}\label{eq:exact}
S(x,y,z) \;=\;
 d \theta_3(x,y,z) - \theta_3(dx,y,z) - (-1)^x \theta_3(x,dy,z) - (-1)^{x+y} \theta_3(x,y,dz)
\end{equation}
The homotopy $\theta_3$ is graded symmetric;
it is $C^\infty(\R^4)$-linear in each argument; it is invariant under translations on $\R^4$;
and it satisfies the identities\footnote{%
The first identity in \eqref{eq:identity4}
implies the second using graded symmetry \eqref{eq:gras}.
}
\begin{equation}\label{eq:identity4}
\begin{aligned}
x\theta_3(y,u,v) - (-1)^{xy} y \theta_3(x,u,v)
+ \theta_3(x,yu,v) - (-1)^{xy} \theta_3(y,xu,v) & = 0\\
\theta_3(xy,u,v)-\theta_3(x,yu,v)+\theta_3(x,y,uv)-(-1)^{x(y+u+v)} \theta_3(y,u,vx) & = 0
\end{aligned}
\end{equation}
\end{theorem}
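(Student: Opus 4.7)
The plan is to read \eqref{eq:exact} as $S=[d,\theta_3]$, where $[d,\cdot]$ denotes the natural action of the derivation $d$ on a multilinear map, and to interpret $S$ as the third bracket in the Koszul hierarchy (cf.\ \cite[Section~1]{kos}): the obstruction to $h$ being a graded second-order operator on the gca $\ax$. So the theorem asserts that $S$ is the $d$-coboundary of an explicit $\theta_3$ with extra properties.

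First I would verify $d$-closedness of $S$. Since $d$ is a derivation, all its higher Koszul brackets vanish, and a direct Leibniz-type identity gives $[d,S]=\Phi^3_{[d,h]}=\Phi^3_{\Box}$, using \eqref{eq:hxs}. The wave operator $\Box=\eta^{\mu\nu}\p_\mu\p_\nu$ is a second-order scalar differential operator; it acts componentwise on $\ax$, whose wedge product is $C^\infty(\R^4)$-bilinear, so $\Phi^3_\Box=0$ by the usual cancellation of Leibniz terms. Hence $[d,S]=0$.

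To construct $\theta_3$ I would decompose $h=h_{\mathrm{pr}}+h_{C^\infty}$ following \eqref{eq:cpxhhh}: $h_{\mathrm{pr}}$ collects the four horizontal first-order-PDO arrows, and $h_{C^\infty}$ the three diagonal $C^\infty(\R^4)$-linear arrows. Additivity of Koszul brackets gives $S=\Phi^3_{h_{\mathrm{pr}}}+\Phi^3_{h_{C^\infty}}$. The horizontal arrows are of Hodge type, so their $\Phi^2$ is a biderivation built from metric-dual contraction with an interior product; consequently $\Phi^3_{h_{\mathrm{pr}}}$ vanishes on the pure-upper-row sub-dgca and, on mixed arguments, is controlled by that biderivation together with the row structure of $\ax$. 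The $C^\infty$-linear part $h_{C^\infty}$ sends lower-row inputs to upper-row outputs; many terms of $\Phi^3_{h_{C^\infty}}$ vanish because the lower row is a square-zero ideal, and the remaining ones are explicit $C^\infty(\R^4)$-linear expressions in the arguments. Collecting both contributions, I would write them as $[d,\theta_3]$ for an explicit $\theta_3$ built from the symbols of the horizontal arrows and from the diagonal arrows, the derivatives present in $S$ being absorbed by the $d$ in $[d,\theta_3]$ --- which is exactly what made the closedness computation work.

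Verifying the remaining properties, translation invariance is inherited from $h$ and $d$, graded symmetry can be enforced by symmetrisation, and the $C^\infty(\R^4)$-linearity of $\theta_3$ follows from the observation that the two derivatives present in $S$ together with the derivative in $d$ reproduce $\Box$, which has already been subtracted. The main obstacle is \eqref{eq:identity4}: these identities are not forced by \eqref{eq:exact}, and they pin down the representative of $\theta_3$ within the affine space of its solutions. I would prove them by a case analysis on the row placement of the arguments, reducing them to associativity and graded commutativity of $\wedge$, the vanishing of products of two lower-row elements, and the identities satisfied by $\pi_+$ and by the diagonal arrows of $h$; this bookkeeping is where one must choose $\theta_3$ carefully and it carries the actual content of the theorem.
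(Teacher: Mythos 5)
Your proposal correctly identifies the easy half of the theorem: the closedness statement $[d,S]=0$, which is the paper's Lemma~\ref{lemma:x1} and indeed follows from $[d,h]=\Box$ together with the vanishing of the third Koszul bracket of the scalar operator $\Box$. The paper explicitly notes that this step does not require Theorem~\ref{theorem:2ax}. The genuine content of the theorem is the passage from closed to exact, and this is where your argument has a gap: you assert that after splitting $h$ into its horizontal and diagonal parts one can ``collect both contributions'' and ``write them as $[d,\theta_3]$'', but you give neither an explicit formula for $\theta_3$ nor an existence argument. Closedness does not imply exactness, and the constraint is severe here because $\theta_3$ is required to be $C^\infty(\R^4)$-linear while $S$ contains first-order derivatives, so all derivative dependence of $S$ must be reproduced by the single $d$ in $[d,\theta_3]$. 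The paper resolves this with a dedicated homology vanishing theorem (Theorem~\ref{theorem:vanish}): it isolates the homogeneous first-order part, encodes the resulting identity as a cocycle condition $d^{1,\ell+1}\elone=0$ in a complex built from modules over the finite-dimensional algebra $A_n$, proves $H^1(C_\ell)=0$ in the relevant negative degrees, and only then matches the remaining zeroth-order part by the explicit $U$, $V$ case analysis showing $b_3=0$. None of this is replaced by your sketch, and the paper's own construction of $\theta_3$ (cf.\ Appendix~\ref{app:code}) proceeds by solving a linear system for undetermined coefficients rather than by a closed formula, which suggests the ``explicit $\theta_3$'' you invoke is not easy to produce by inspection.

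A second, smaller gap concerns \eqref{eq:identity4}. In the paper these identities are the axiom $c_4=0$ of the full $\bv{\Box}$-structure, and their proof uses Part~B of Lemma~\ref{lemma:specialspecial} (which presupposes $b_3=0$, i.e.\ the correctly chosen $\theta_3$), followed by restriction to the generators of each row and a degree count $6-2n<0$. Your proposed row-by-row verification could in principle substitute for this, but it again requires the explicit $\theta_3$ you have not constructed, and you correctly observe — without resolving — that \eqref{eq:identity4} is not forced by \eqref{eq:exact} and pins down the choice of $\theta_3$ within an affine space of solutions. In the paper that choice is made automatically by the uniqueness clause $H^0(C_\ell)=0$ of the vanishing theorem; in your approach the selection mechanism is missing.
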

%%%%%%%%%%%%%%%%%%%%%%%%%%%%%%%%%%%%%%%
\begin{corollary} \label{corollary:bcjr}
The BCJ relations in the form given in \cite{tz} hold.
\end{corollary}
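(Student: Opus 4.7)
The plan is to deduce the corollary from Theorem \ref{theorem:2ax} by invoking the general result of \cite{tz}, which is a mechanism that produces the BCJ/color-kinematics relations for tree amplitudes of a Maurer-Cartan theory from precisely the kind of algebraic data that Theorem \ref{theorem:2ax} assembles on $\ax$.

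First I would set up the dictionary with \cite{tz}. The required input is a dgca together with a degree $-1$ operator whose square vanishes and whose anticommutator with the differential is a fixed kinematic second-order operator, plus a graded symmetric ternary operation that homotopically witnesses the failure of that operator to be second order and obeys associativity-type constraints. In our setting the dgca is $(\ax,d,\cdot)$ from \eqref{eq:cpx}; the operator is $h$, with $h^2=0$ and $dh+hd=\Box\one$ supplied by \eqref{eq:hxs}, the wave operator $\Box$ playing the role of the quadratic deformation alluded to in the abstract; and Theorem \ref{theorem:2ax} furnishes $\theta_3$, the defect $S$ from \eqref{eq:sdef} being $d$-exact via \eqref{eq:exact} and the associativity compatibilities \eqref{eq:identity4} being satisfied.

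Second I would check the hypotheses of \cite{tz} item by item. Graded symmetry, translation invariance and $C^\infty(\R^4)$-linearity of $\theta_3$ ensure that all induced operations are local and compatible with tensoring by the finite-dimensional color Lie algebra $\cx$, so that the setup descends to $\ax\otimes\cx$. The cocycle identity \eqref{eq:exact} provides the chain-level compatibility with $d$. The two relations in \eqref{eq:identity4} are the associativity constraints between $\theta_3$ and the wedge product that, in \cite{tz}, are exactly what is needed for the kinematic numerators of the tree-level Feynman expansion to satisfy Jacobi-type identities mirroring those of the color factors.

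Third, once these hypotheses are in place, the theorem of \cite{tz} produces the BCJ relations with no further computation. The main obstacle I expect is purely notational: reconciling sign conventions and the precise placement of $(-1)^{x}$-type factors, together with the identification of $h$ with a propagator up to signs, so that the identities of Theorem \ref{theorem:2ax} line up symbol-for-symbol with the hypotheses of \cite{tz}. Once this dictionary is fixed, the corollary is immediate.
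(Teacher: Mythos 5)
There is a genuine gap: your entire argument rests on invoking a general theorem of \cite{tz} whose hypotheses are the algebraic data of Theorem \ref{theorem:2ax}, but no such theorem exists. The reference \cite{tz} (Tye--Zhang) supplies only the \emph{statement} of the relations to be proved --- their equation (4.32), an identity among sums of cubic trees decorated with propagators and momentum-squared weights --- not a mechanism that converts a dgca with a homotopy-second-order operator into those relations. The corollary therefore cannot be discharged by ``checking hypotheses item by item''; the paper has to build the bridge itself, and that bridge is the actual content of the proof (Section \ref{sec:bcjrelsNEW}).

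Concretely, what is missing is the following two-step construction, carried out for every $n \geq 3$. First one defines a map $S_n : \apw^{\otimes n} \pd \apw$ as a signed sum \eqref{eq:sdefs} over decompositions of $\Z/n\Z$ into three contiguous blocks, with $S$ from \eqref{eq:sdef} at the root and planar-tree combinations $E(\cdots)$ of gca products and propagators $\prop = h/k^2$ on the blocks; one then proves $S_n$ is a chain map using Lemma \ref{lemma:x1} --- in particular the second identity of \eqref{eq:4S} is needed to cancel the ``error terms'' produced by $dE(x,y) = xy - E(dx,y) - (-1)^x E(x,dy)$, which follows from \eqref{eq:Hx}. Second, one uses Theorem \ref{theorem:2ax} to build an explicit homotopy $T_n$ (replacing $S$ by $\theta_3$ in the same tree sum) witnessing that $S_n$ induces the zero map on homology; here \eqref{eq:exact} and the second identity of \eqref{eq:identity4} are used in an exactly parallel way. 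Finally, the translation to \cite[eq.~(4.32)]{tz} requires the observations that $h\prop = \prop h = 0$ forces the $n-3$ propagators and the single $h$ onto distinct lines, that trees with $h$ on external lines drop out by \eqref{eq:kih}, and that writing $h = k^2\prop$ turns the surviving trees into standard Feynman trees weighted by $k^2$. Your proposal correctly identifies which ingredients of Theorem \ref{theorem:2ax} are relevant and why they are plausibly sufficient, but the combinatorial construction of $S_n$ and $T_n$ --- which is where the work lies --- is entirely absent.
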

Theorem \ref{theorem:2ax} suggests that the dgca operations
together with $h$ and $\theta_3$ are
but a fragment of a more comprehensive algebraic structure
that subsumes \eqref{eq:hxs}, \eqref{eq:exact}, \eqref{eq:identity4} into its axioms.
Specifically, $\bvinf$- and $\ginf$-algebras \cite{tt,gtv}
are used in deformation theory \cite{t0,t,tt,w},
and Zeitlin \cite{zeitlin,zeitlin2,zeitlin3}
considered them in the context of generalized Yang-Mills.
The $\bvinf$ axioms require that $dh+hd$ be zero,
which here fails due to $\Box$ in \eqref{eq:hxs}.
This prompts one to define a
deformed notion of a
$\bvinf$-algebra, see Section \ref{subsec:mod},
to be called a $\bv{\Box}$-algebra\footnote{%
Given a graded vector space $V$
that is also a module over a cocommutative Hopf algebra $\Hopf$,
and a central quadratic element $\Qx \in \Hopf$,
the dgLa controlling $\bvinf$-algebras is restricted to
$\Hopf$-equivariant elements, then deformed by $\Qx$
to a curved dgLa that, by definition, controls $\bv{\Qx}$-algebras.
Every $\bv{\Qx}$-algebra has a differential and contains a $\cinf$-algebra,
but not necessarily an $\linf$-algebra.
This notion is not specific to Yang-Mills.
For Yang-Mills on Minkowski spacetime,
set $V = \ax[2]$ and $\Hopf = \C[\p_0,\ldots,\p_3]$ and $\Qx = \Box$.}.
%%%%%%%%%%%%%%%%%%%%%%%%%%%%%%%%%%%%%%%%%%%%%%%%%%%%%%%%%%%%%%%%%%%%%%%
\begin{theorem}[Main] \label{theorem:mainh}
For every such $h$, there exists a $\bv{\Box}$-algebra structure on $\ax$:
\begin{itemize}
\item Its $\cinf$ part is the dgca structure described above.
\item Its unary degree $-1$ operator is $h$.
\item All operations are
invariant under translations and local\footnote{%
All operations belong to the classes of local operators
in Definition \ref{def:locop};
they are sums of translation-invariant $C^\infty(\R^4)$-multilinear operations
composed with at most one partial derivative.}.
\end{itemize}
In the notation of \cite{gtv},
all structure maps except $m^0_{1,\ldots,1}$, $m^0_{1,\ldots,1,2}$, $m_1^1$ vanish.
The differential is $m^0_1$; the gca product is $m^0_2$;
the homotopy $\theta_3$ is $m^0_{1,2}$; and $h$ is $m_1^1$; see \eqref{eq:munux}.
\end{theorem}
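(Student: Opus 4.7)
The plan is to interpret a $\bv{\Box}$-algebra structure on $\ax$ as a Maurer--Cartan element in the curved differential graded Lie algebra $\g$ of local, translation-invariant, $\Hopf$-equivariant multi-operations on $\ax[2]$, constructed as in Section \ref{subsec:mod}. The curvature of $\g$ captures precisely the deformation by $\Qx = \Box$, so the MC equation encodes the deformed $\bvinf$ axioms, and the theorem amounts to exhibiting a solution that is concentrated on the three listed families of structure maps.

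First I would take the ansatz $m^0_1 = d$, $m^0_2$ equal to the gca product, $m^1_1 = h$, $m^0_{1,2} = \theta_3$ from Theorem \ref{theorem:2ax}, with every other structure map tentatively set to zero, and verify the low-arity MC components. Associativity, graded commutativity, and the Leibniz rule for $d$ are the dgca axioms; $h^2 = 0$ together with $dh+hd = \Box\,\one$ is \eqref{eq:hxs}, which matches the curvature contribution in $\g$; equation \eqref{eq:exact} is the axiom coupling $S$ to $\theta_3$; and \eqref{eq:identity4} are the two compatibilities of $\theta_3$ with the gca product. All of these components are supplied by Theorem \ref{theorem:2ax} and the dgca setup.

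Next I would attack the higher-arity components. The MC equation at arity $n$ yields a $\g$-cocycle built out of already-constructed operations, which must be killed by a local primitive of the appropriate type. For the $\cinf$ family $m^0_{1^n}$ with $n\ge 3$, the strict associativity of the product and the fact that the obstructions involve $h$ or $\theta_3$ should allow these to remain zero; for the family $m^0_{1^n 2}$ with $n\ge 2$ one expects genuine extensions of $\theta_3$, produced inductively by the same kind of explicit construction used for $\theta_3$ itself, exploiting the concrete four-column geometry of \eqref{eq:cpx} and the polynomial Hopf algebra $\Hopf = \C[\p_0,\ldots,\p_3]$. Consistency at each step reduces to an identity among polynomials in the $\p_\mu$'s and in the components of $h$ and $\theta_3$.

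The hard part will be locality. Standard obstruction arguments in an abstract acyclic complex would produce chain-level primitives, but the theorem demands that each structure map belong to the subcomplex of translation-invariant multi-operations of Definition \ref{def:locop} composed of at most one partial derivative. The substantive content is therefore a $\Hopf$-equivariant local acyclicity statement for the relevant bar-like subcomplex of $\g$, combined with an inductive construction of $m^0_{1^n 2}$ in which locality is preserved at every step. Isolating that acyclicity lemma, and checking that the obstructions land in the image at the correct bidegree for every $n$, is the main technical hurdle and where I expect the bulk of the proof to live.
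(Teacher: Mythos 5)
Your framing---a Maurer--Cartan element in the curved dgLa of $\Hopf$-equivariant local operations, with the low-arity components supplied by the dgca axioms, \eqref{eq:hxs}, \eqref{eq:exact} and \eqref{eq:identity4}---matches the paper's setup, and you have correctly located the difficulty in the inductive construction of higher operations inside the local class. But there are two genuine gaps. First, you propose to keep the family $m^0_{1,\ldots,1}$ of arity $\geq 2$ equal to zero, calling it the $\cinf$ family. You have the dictionary backwards: the maps $V^\ast \to (V^\ast)^{\otimes k}$ are the $\linf$-type ($k$-ary bracket) operations, the $\cinf$ part is the $m^0_{1,\ldots,1,2}$ sector. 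In the paper these operations are $\mu_k = -\Gamma[h,\nu_k]$, they are forced on you by the degree-zero component of the axiom coupling $h$ to the products ($d_{\tt}\mu_k + [h,\nu_k]=0$ up to the contraction $\Gamma$), and already $\mu_2$---the derived bracket measuring the failure of $h$ to be first order---is nonzero. With all $\mu_k=0$ for $k\geq 2$ the Maurer--Cartan equation fails at its first nontrivial degree-zero component.

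Second, and more importantly, the ``local acyclicity statement'' you defer to is the entire content of the proof, and it is \emph{not} an acyclicity statement for $[d,-]$ on a subcomplex of local operations. The paper's mechanism is: (i) reduce the axioms, via the $n$-degree contraction $\Gamma$ and Lemma \ref{lemma:specialspecial}, to $b_k = -[d,\theta_k]+q_k = 0$ and $c_k=0$, with the closedness $[d,q_k]=[d,c_k]=0$ of the obstructions supplied by Part B of that lemma; (ii) split $d=d^0+d^1$ along the row decomposition of \eqref{eq:cpx} and pass to the principal symbol $d^1=\p_\mu\otimes e^\mu$, so that solving $[d^1,\theta_k]=q_k^1$ inside $\mathcal{C}_k$ becomes a purely algebraic question about a Koszul-type complex over the exterior algebra---this is Theorem \ref{theorem:vanish}, the vanishing of $H^0$ and $H^1$ of $C_\ell$ in negative degrees, and the $H^0=0$ uniqueness is what gives graded symmetry of $\theta_k$ for free; (iii) kill the remaining zeroth-order piece $b_k$ and the obstruction $c_k$ by restricting to the generators of the two rows and counting degrees, plus one short explicit computation at arity $3$. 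Without the observation that the exactness you need lives at the level of symbols rather than of $[d,-]$ itself, and without a proof of the vanishing theorem, the induction does not close.
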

The proof in Section \ref{sec:proofmain}
gives a definite choice and construction of all operations of this $\bv{\Box}$-algebra;
natural conditions based on the grading and the structure of $\R^4$,
cf.~\eqref{eq:ztu2},
fix all operations uniquely, given $h$\footnote{%
The construction does not reveal whether all but finitely many operations are identically zero.
Computer calculations, using the code in Appendix \ref{app:code},
show that there are nonzero operations of arity at least $6$.
}.
%%%%%%%%%%%%%%%%%%%%%%%%%%%%%%%%%%%%%%%%%%%%%%%%%%%%%%%%%%%%%%%%%%%%%%%%%
This structure 
is a superstructure of the Yang-Mills dgca
that also encompasses the `propagator numerator' $h$.
\step
This has an application to BCJ/color-kinematics duality \cite{bcj,bdhk,bccjr}.
Let $\apw \hookrightarrow \ax$ be the subspace
of finite linear combinations of plane waves\footnote{%
By definition, an element $y \in \ax$ is a plane wave with momentum $k \in \C^4$
if and only if $\tau_a y = e^{c\,k_\mu a^\mu} y$ for all $a \in \R^4$,
where $\tau_a$ denotes translation by $a$.
The constant $c \neq 0$ is a matter of convention.} with momenta in $\C^4$.
Then the $\linf$ minimal model Feynman expansion for $\apw \otimes \cx$
using $h$ as the propagator numerator gives the Yang-Mills amplitudes
as a sum of cubic trees \cite{nr}, but this does not yet comply with the duality.
However,
the upgrade to a
$\bv{\Box}$-algebra structure in Theorem \ref{theorem:mainh}
does naturally lead to a Feynman expansion
that complies with the duality\footnote{%
With `local kinematic numerators'.}.
\step
Namely,
once Yang-Mills is cast as a $\bv{\Box}$-algebra,
a generic cobar construction
yields a quasi-isomorphic strict structure\footnote{%
Strict means only unary and binary operations;
$\ax$ is not strict as a $\bv{\Box}$-algebra because $\theta_3 \neq 0$.
}, where the duality is more immediate.
This construction in Section \ref{sec:rect}
is an adaptation of a standard one \cite{dtt,gtv,lv}.
The next theorem states this result
only as applied to $\ax$ specifically.
Denote by $\Hopf = \C[\p_0,\ldots,\p_3]$
the Hopf algebra of constant coefficient differential operators on $\R^4$,
and note that $\ax$ is an $\Hopf$-module.

\begin{theorem}[Strict structure via a cobar construction]\label{theorem:rect}
Let $A$ be the free Gerstenhaber algebra
of the cofree Gerstenhaber coalgebra of $\ax$\footnote{%
See Lemma \ref{lemma:rcc} for details, including degree shifts;
set $V = \ax[2]$ there, equivalently $V[-2] = \ax$.
}.
Then a cobar construction
applied to the $\bv{\Box}$-algebra $\ax$
yields
$\Hopf$-equivariant $d_A \in \End^1(A)$ and $h_A \in \End^{-1}(A)$ 
such that:
\begin{itemize}
\item
$(A,d_A)$ is a dgca; $(A,h_A)$ is a BV algebra;
so $h_A$ is genuinely second order; and
\begin{equation}\label{eq:strdaha}
d_A^2 = 0
\qquad\quad
d_Ah_A + h_Ad_A = \Box \one
\qquad\quad
h_A^2 = 0
\end{equation}
where $\Box$ is in the sense of the induced $\Hopf$-module structure on $A$.
\item
The complexes $(\ax,d)$ and $(A,d_A)$ are quasi-isomorphic.
More concretely, the canonical degree zero map $\imap: \ax \hookrightarrow A$
is part of an $\Hopf$-equivariant contraction\footnote{%
See Lemma \ref{lemma:rcc} for details, including the equations defining a contraction.}:
\begin{equation}\label{eq:uu1}
\vcenter{\hbox{%
\begin{tikzpicture}
  \draw[anchor=east] (0,0) node {$\ax$};
  \draw[anchor=west] (2,0) node {$A$};
  \draw[->] (0.1,0.2) to[bend left=15] node[midway,anchor=south] {$\imap$} (1.9,0.2);
  \draw[->] (1.9,-0.2) to[bend left=15] node[midway,anchor=north] {$\pmap$} (0.1,-0.2);
  \draw[->] (2.7,0.2) to[out=40,in=-40,looseness=8] node[midway,anchor=west] {$\zmap$} (2.7,-0.2);
\end{tikzpicture}}}
\end{equation}
The maps $\imap$ and $\pmap$ are actually chain maps in two ways:
\begin{equation}\label{eq:uu1x}
d_A \imap = \imap d
\qquad
d \pmap = \pmap d_A
\qquad
h_A \imap = \imap h
\qquad
h \pmap = \pmap h_A
\end{equation}
\item
Denoting by $\apr : A \otimes A \to A$
and $\theta_2 : \ax \otimes \ax \to \ax$ the respective gca products:
\begin{equation}\label{eq:uu2}
\pmap \circ \apr \circ \imap^{\otimes 2} = \theta_2
\end{equation}
\end{itemize}
\end{theorem}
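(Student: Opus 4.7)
The plan is to specialize the cobar-type construction from \cite{dtt,gtv,lv} to the $\bv{\Box}$-algebra $\ax$ furnished by Theorem~\ref{theorem:mainh}, modifying it to accommodate the $\Box$-curvature of the $\bv{\Box}$ axioms and to preserve $\Hopf$-equivariance throughout.

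First I would set $V = \ax[2]$ and form the cofree Gerstenhaber coalgebra $C$ on $V$ as in Lemma~\ref{lemma:rcc}. The $\bv{\Box}$-algebra structure on $\ax$ packages into a coderivation $D$ of $C$; the deformation of the $\bvinf$ axioms by $\Box$ translates into $D^2 = \Box \one$ on $C$ (its ``curvature''), rather than the usual $D^2 = 0$. I would then take $A$ to be the free Gerstenhaber algebra on $C$, define $d_A$ as the unique derivation extending the Gerstenhaber-cooperation components of $D$, and define $h_A$ \emph{not} as a derivation but as a second-order BV operator extending the unary $h = m_1^1$. This non-derivation extension is the standard mechanism by which the cobar construction converts a non-strict BV input into a strict BV output; the higher operations of the $\bv{\Box}$-algebra (in particular $\theta_3 = m^0_{1,2}$) are exactly the data needed to make $h_A$ genuinely second-order. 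Both $d_A$ and $h_A$ inherit $\Hopf$-equivariance from the operations of $\ax$ since the construction is functorial in $\Hopf$-modules.

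The equations \eqref{eq:strdaha} should then reduce to a bookkeeping exercise: $h_A^2 = 0$ is built into the construction of $h_A$ from $h^2 = 0$; the relation $d_A h_A + h_A d_A = \Box \one$ is the image of the $\Box$-curvature of $D$ under the bar-cobar translation; and $d_A^2 = 0$ follows from the remaining (uncurved) $\bv{\Box}$ relations, which are ordinary identities once the curvature has been isolated in the mixed anticommutator. For the contraction data \eqref{eq:uu1}--\eqref{eq:uu1x} and the product identity \eqref{eq:uu2}, I would use the universal properties of the free Gerstenhaber algebra: $\imap$ is the canonical inclusion of the generators; $\pmap$ is defined by iteratively applying $h$ to collapse nested brackets and products back into $\ax$; and $\zmap$ is the induced deformation-retraction homotopy. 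The double chain-map property \eqref{eq:uu1x} reflects that $\imap$ and $\pmap$ respect both $d$ and $h$ by construction, while \eqref{eq:uu2} captures the fact that $\theta_2 = m^0_2$ is the leading piece of $\apr$ on the first layer of $C$.

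The main obstacle is the careful tracking of where the single $\Box$-curvature deposits itself under the cobar construction: one must verify that it lands \emph{only} in $d_A h_A + h_A d_A$ and not in $d_A^2$ or $h_A^2$. I expect this to follow cleanly by writing the $\bv{\Box}$ axioms as a Maurer-Cartan equation in a suitable curved dgLa of $\Hopf$-equivariant coderivations of $C$, with the cobar construction appearing as the standard transfer; but it is precisely this $\Box$-placement that makes the construction genuinely more than a restatement of the $\bvinf$ case treated in \cite{dtt,gtv,lv}.
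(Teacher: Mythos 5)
Your high-level architecture matches the paper's: the $\bv{\Box}$ axioms are indeed recast as a Maurer--Cartan equation in a curved dgLa of $\Hopf$-equivariant coderivations, the cobar construction packs the co-operations of $C$ into $d_A$, and $h_A$ is extended as a non-derivation, genuinely second-order BV operator; functoriality gives $\Hopf$-equivariance. But two of your specific mechanisms are wrong or missing, and they are where the content lies. The input is not ``a coderivation $D$ of $C$ with $D^2=\Box\one$'': the coderivation $\delta$ satisfies $-\curv+(d_{\tt}+d_{\qq})\delta+\tfrac12[\delta,\delta]=0$, and the operator whose square is $\Box\one$ is $\tt+\qq+\delta$, where $\tt,\qq$ are \emph{not} coderivations --- their controlled failure of the co-Leibniz rule is precisely what encodes the bracket and the $\Qx$-anomaly. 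Consequently $d_A$ cannot be ``the unique derivation extending'' anything: in Lemma \ref{lemma:rcc}, $d_A|_C$ contains a $\qq(c)$ term and $d_A$ is extended as a derivation for the gca product but with a $\Qx$-anomaly on the bracket, $d_A([a,b])=(-1)^a\big(\Qx(ab)-(\Qx a)b-a(\Qx b)\big)+[d_Aa,b]+(-1)^{a-1}[a,d_Ab]$, see \eqref{eq:daha}. Without these pieces $d_A^2\neq 0$, because the first axiom in \eqref{eq:defheq2} is itself deformed, $\tfrac12[\delta^1,\delta^1]=-d_{\qq}\delta^1\neq 0$. Similarly $[d_A,h_A]|_C$ comes out as $\curv+(\tt\qq+\qq\tt)=\Qx\one$ by \eqref{eq:cnq}, so the curvature is deposited partly by the canonical operators on $A$ and partly by the MC equation; your closing paragraph correctly identifies this bookkeeping as the crux, but the proposal does not actually carry it out.

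The more serious gap is the contraction. Defining $\pmap$ ``by iteratively applying $h$ to collapse nested brackets and products back into $\ax$'' is not viable: $h$ is a unary degree $-1$ operator on $\ax$ and collapses nothing, and no such map would be a chain map for $d_A$ or satisfy $\pmap\imap=\one$. The paper instead isolates the word-length-preserving ``syntactic'' summands of $d_A$ and $h_A$ (the co-operation-to-operation conversions together with the canonical operators $\tt_A,\qq_A$ on $A$), proves that their homology is concentrated in word-length zero --- this rests on Koszulity of the Gerstenhaber operad --- builds a compatible side-conditioned homotopy $g$, and obtains $\zmap$ and $\pmap$ by a homological-perturbation/Neumann-series formula. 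Both remaining claims are consequences of specific features of that construction: \eqref{eq:uu2} holds because the perturbation picks up the $\theta$-block of $d_A$ between word-lengths one and zero (the resulting $\pmap$ is far from the canonical projection $A\to\ax$), and $h\pmap=\pmap h_A$ holds because $g$ can be chosen to anticommute with $h_A$, which uses the hypothesis that the degree $-1$ part of the structure is strictly unary (true for $\ax$ by Theorem \ref{theorem:mainh}, but not automatic and not mentioned in your proposal). These steps are the technical heart of the second and third bullets and are absent.
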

The cobar construction is functorial.
Applied to $\apw \hookrightarrow \ax$
it yields a structure $\Apw \hookrightarrow A$;
both $\apw$ and $\Apw$ are $\C^4$-graded by momentum $k \simeq \p$;
$\Hopf$-equivariance implies momentum conservation.
Abusing notation, denote by $\imap$ also its restriction to $\Apw$ etc.
%%%%%%%%%%%%%%%%%%%%%%%%%%%%%%%%%%%
\begin{corollary}[BCJ duality] \label{corollary:bcjd}
Use $\pd$ to indicate partial maps defined for a dense subset of momenta.
Consider the differential graded Lie algebra $\Apw \otimes \cx$,
with homology $H(\apw \otimes \cx)$.
The standard $\linf$ minimal model Feynman expansion gives for every $n \geq 2$ a map
\begin{equation}\label{eq:mnmn}
\yma_n : H^1(\apw \otimes \cx)^{\otimes n} \pd H^2(\apw \otimes \cx)
\end{equation}
as a sum over all non-planar cubic trees with one output and inputs $1\ldots n$
using:
\textnormal{\[
\begin{tabular}{c | c l}
tree element & \multicolumn{2}{c}{operation\rule{2mm}{0pt}}\\
\hline
vertex & gca product in $\Apw$, Lie bracket in $\cx$ & \xxx{21}{(\Apw^1 \otimes \cx)^{\otimes 2}}{\to \Apw^2 \otimes \cx}\\
internal line & \xxx{8}{h_A/k^2}{\otimes \one} & \xxx{21}{\Apw^2 \otimes \cx}{\pd \Apw^1 \otimes \cx}\\
input & \xxx{8}{\imap i}{\otimes \one} & \xxx{21}{H^1(\apw \otimes \cx)}{\pd \Apw^1 \otimes \cx}\\
the output & \xxx{8}{p \pmap}{\otimes \one} & \xxx{21}{\Apw^2 \otimes \cx}{\pd H^2(\apw \otimes \cx)}
\end{tabular}
\]}%
where $k^2 \simeq \Box$ is the momentum squared;
$i: H(\apw) \pdd \apw : p$ are any momentum-conserving
quasi-isomorphisms that are nonzero only along $k^2 = 0$, undefined at $k=0$.
This expansion complies with 
BCJ/color-kinematics duality\footnote{%
With `local kinematic numerators'.};
and the $\yma_n$ are the Yang-Mills amplitudes.
\end{corollary}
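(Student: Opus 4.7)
The plan is to deduce Corollary \ref{corollary:bcjd} from Theorem \ref{theorem:rect} in three steps: (i) restrict the cobar construction to plane waves, (ii) compute the $\linf$ minimal model of $\Apw \otimes \cx$ by homological perturbation and identify the result with $\yma_n$, and (iii) derive BCJ from the strict BV axioms of $(A,d_A,\apr,h_A)$.

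\emph{Step (i).} Since every operation produced in Theorem \ref{theorem:rect} is $\Hopf$-equivariant and local, and the cobar construction is functorial, the contraction \eqref{eq:uu1}, the dgca $(A,d_A,\apr)$, and the operator $h_A$ all restrict to $\Apw$. Both $\apw$ and $\Apw$ decompose as direct sums of momentum eigenspaces labeled by $k \in \C^4$, and $\Box$ acts by the scalar $k^2$, so $h_A/k^2 \otimes \one$ is a well-defined partial map on $\Apw \otimes \cx$.

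\emph{Step (ii).} Pick $i,p$ as in the statement together with a translation-invariant partial homotopy making $(i,p)$ a chain-level partial contraction of $\apw$ onto $H(\apw)$. Composing with \eqref{eq:uu1} and using \eqref{eq:uu1x}, one assembles a partial contraction of $(A,d_A)$ onto the same $H(\apw)$, with inclusion $\imap i$, projection $p\pmap$, and internal-line homotopy that agrees with $h_A/k^2$ on the relevant subspace (the perturbation-lemma correction being absorbed using $h_A^2=0$ and \eqref{eq:strdaha}). Tensoring with $\cx$ and feeding the strict dgLa $(\Apw \otimes \cx, d_A \otimes \one, \apr \otimes [\cdot,\cdot]_\cx)$ into the standard $\linf$-minimal-model perturbation formula then yields precisely the cubic-tree formula for $\yma_n$ written in the corollary. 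By \eqref{eq:uu2}, the resulting binary operation agrees with the one obtained by transferring the dgLa $\apw \otimes \cx$ directly through $(i,p)$; by $\Hopf$-equivariance (momentum conservation) and uniqueness of the $\linf$ transfer for a fixed splitting, agreement extends to every arity. The right-hand side is by definition the Yang-Mills tree amplitudes \cite{nr}.

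\emph{Step (iii).} From \eqref{eq:strdaha} and the BV bullet of Theorem \ref{theorem:rect}, the induced bracket
\begin{equation*}
\{x,y\} \;:=\; h_A(\apr(x,y)) - \apr(h_A x, y) - (-1)^x \apr(x, h_A y)
\end{equation*}
is graded Jacobi and makes $(A,\apr,\{\cdot,\cdot\})$ a Gerstenhaber algebra. For each cubic tree contributing to $\yma_n$, the kinematic numerator, built by nesting $\apr$'s at vertices and $h_A$'s on internal lines, can be reorganized using the BV identity on each internal line together with the Gerstenhaber Leibniz rule into a nested $\{\cdot,\cdot\}$-expression on the inputs. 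Triples of trees related by a single Jacobi move on the color side therefore have numerators whose signed sum vanishes by graded Jacobi of $\{\cdot,\cdot\}$; this is BCJ/color-kinematics duality in the form of \cite{bcj,bdhk,bccjr}. Locality of the numerators is inherited from locality of all operations in Theorem \ref{theorem:rect}.

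The main obstacle is step (iii): giving a clean combinatorial rewriting of the cubic-tree numerators as iterated Gerstenhaber brackets so that BCJ triples on the kinematic side align with Jacobi triples on the color side. This is exactly what the cobar construction of Theorem \ref{theorem:rect} is engineered to enable, by trading the only-up-to-$\theta_3$ second-order behavior of $h$ on $\ax$ for the strict second-order behavior of $h_A$ on $A$; without this upgrade, one would only obtain BCJ up to homotopies that would spoil locality of the numerators.
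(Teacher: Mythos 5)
Your step (iii) is essentially the paper's own argument for the duality itself: the paper packages the second-order property of $h_A$ into the three-term identity \eqref{eq:jxyz} (a functional $\ell$ with $\ell h_A=0$ applied to $h_A(xy)z$ plus cyclic permutations, for $x,y,z\in\ker h_A$) and applies it at the distinguished line of each $3$-clique of trees. What you leave implicit, and what the paper checks explicitly, are the kernel conditions that make your "reorganization into nested brackets" actually collapse: the output of each subtree lies in $\ker h_A$ because it is either $h_A\xi$ (killed by $h_A^2=0$) or $\imap i\xi$ (killed by $h_A\imap i=\imap h i=0$, using \eqref{eq:uu1x} and \eqref{eq:kih}), and dually $\ell h_A=0$ for the fourth leg. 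These verifications are short but not optional.

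The genuine gap is in step (ii). You assert that the $\linf$ minimal model brackets of $\Apw\otimes\cx$ agree in every arity with those of $\apw\otimes\cx$ (hence equal the Yang--Mills amplitudes) because they agree in arity $2$ by \eqref{eq:uu2} and by ``uniqueness of the $\linf$ transfer for a fixed splitting.'' That inference is not available: minimal model operations are independent of the choice of contraction only for a \emph{fixed} dgLa, whereas here two \emph{different} dgLa structures ($\apw\otimes\cx$ and $\Apw\otimes\cx$) are being transferred to the same homology, and Theorem \ref{theorem:rect} does not supply an $\infty$-quasi-isomorphism between them --- the paper explicitly warns that it does not prove $A$ is a rectification of $\ax$ in that sense. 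The paper instead proves that the $\yma_n$ are the Yang--Mills amplitudes via the recursive characterization of \cite{nr}: homogeneity, identification of $\yma_2$ through \eqref{eq:uu2}, simple poles along the divisors $V(\px)$, and factorization of residues, obtained by replacing the relevant propagator $\Aprop$ with $\Ahopt=\zmap+\imap\hopt\pmap$ built from an optimal homotopy composed with the contraction \eqref{eq:uu1} and invoking gauge independence. This residue/recursion argument is the substantive content missing from your proposal; without it (or a genuine $\infty$-quasi-isomorphism), the claim that the $\yma_n$ are the Yang--Mills amplitudes remains unproved.
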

%%%%%%%%%%%%%%%%%%%%%%%%%%%%%%%%%%%%%%%%%%%%%%%%%%%%%%%%%%%%%%%%%%%%%%
The definition of the duality and the proof are in Section \ref{sec:fbcj}.
\step
It has been suggested that a rationale for the duality
would be provided by a hypothetical `kinematic Lie algebra',
see \cite{moc,cs,fk,cjtw} for results in this direction.
Theorem \ref{theorem:rect} 
provides such a Lie algebra, namely the Lie bracket
\begin{equation}\label{eq:lie}
   [-,-]\;:\;\Apw^1 \otimes \Apw^1 \to \Apw^1
\end{equation}
Indeed, the Yang-Mills amplitudes are also obtained using these Feynman rules:
\[
\begin{tabular}{c | c l}
tree element & \multicolumn{2}{c}{operation\rule{2mm}{0pt}}\\
\hline
vertex & \eqref{eq:lie} in $\Apw^1$, Lie bracket in $\cx$ & \xxx{21}{(\Apw^1 \otimes \cx)^{\otimes 2}}{\to \Apw^1 \otimes \cx}\\
internal line & \xxx{8}{1/k^2}{\otimes \one} & \xxx{21}{\Apw^1 \otimes \cx}{\pd \Apw^1 \otimes \cx}\\
input & \xxx{8}{\imap i}{\otimes \one} & \xxx{21}{H^1(\apw \otimes \cx)}{\pd \Apw^1 \otimes \cx}\\
the output & \xxx{8}{p \pmap}{\otimes \one} & \xxx{21}{\Apw^1 \otimes \cx}{\pd H^1(\apw \otimes \cx)}
\end{tabular}
\]
There is a paradox here.
The bracket \eqref{eq:lie} in the free Gerstenhaber algebra
is a syntactic construction,
given by sticking together words modulo some relations, see \eqref{eq:gax}.
How can this give the Yang-Mills amplitudes?
The resolution lies in $\pmap$ which
translates the syntactic expression in $\Apw^1$ back to $\apw^1$,
in a nontrivial way.
The maps $H^1(\apw \otimes \cx)^{\otimes n} \pd H^1(\apw \otimes \cx)$
so obtained are equivalent to the $\yma_n$, as discussed in Section \ref{sec:ska}.
The Jacobi identity for the `syntactic kinematic algebra'
\eqref{eq:lie} implies BCJ duality.
\step
The perspective on BCJ duality provided
by Theorems \ref{theorem:mainh} and \ref{theorem:rect}
is that once $\ax$ is upgraded to a $\bv{\Box}$-algebra,
then the duality arises by rectifying this homotopy structure.
\step
Beware that 
Theorem \ref{theorem:rect} does not actually say that $A$ is a rectification
of $\ax$ in the sense of an $\infty$-quasi-isomorphism,
though this is very likely true,
but \eqref{eq:uu1}, \eqref{eq:uu1x}, \eqref{eq:uu2} are partial statements
that suffice to show that one gets the Yang-Mills amplitudes,
using the recursive characterization and tools from \cite{nr},
see Section \ref{sec:fbcj}.
\step
Some questions:
\begin{itemize}
\item Instead of the $\cinf$ minimal model for $\apw$,
one could consider its $\bv{\Box}$ minimal model.
Assuming this can be defined, is it an interesting object?\footnote{%
At the very least, this
conjectural $\bv{\Box}$ minimal model ought to subsume 
the BCJ relations into its axioms --
rather than implying them by an
ad-hoc calculation as in Section \ref{sec:bcjrelsNEW} --
analogously to how
the $\cinf$ minimal model
subsumes the Kleiss-Kuijf relations
into its axioms; see \cite{cg} for the $\cinf$ axioms.}
\item 
A direct approach to the `double copy'
could be to consider $\ax \otimes \ax$ understood as
a suitably defined tensor product of $\bv{\Box}$-algebras.\footnote{%
Direct approach means as opposed to going through BCJ duality.
The (completed) tensor product $\ax \otimes \ax$
naively gives objects on $\R^4 \oplus \R^4 \ni x \oplus y$,
but can be restricted to things depending only on $x+y$,
or analogously, products of pairs of plane waves
with equal momentum.}\textsuperscript{,}\footnote{%
See also \cite[Section 3]{zeitlin3} for considerations in this ballpark.}
\item
Are there invariant bilinear pairings that give a `cyclic' $\bv{\Box}$-algebra, for loops?
\item What
are other applications and ramifications of this $\bv{\Box}$-algebra structure?
\end{itemize}

%%%%%%%%%%%%%%%%%%%%%%%%%%%%%%%%%%%%%%%%%%%%%%%%%%%%%%%%%%%%%%%%%%%%%%%%%%

\section{Application to BCJ} \label{sec:bcjx}

This section is intended as motivation
for readers familiar with BCJ \cite{bcj,bdhk,bccjr}.
It includes proof sketches of Corollaries \ref{corollary:bcjr} and \ref{corollary:bcjd}.
This section uses results shown in detail later.

\step
\emph{Plane waves and propagators.}
This is the only section where we work with plane waves.
Assume an $h$ as in Section \ref{sec:intro} is fixed.
\begin{itemize}
\item
Let $\apw \hookrightarrow \ax$ be
the subalgebra given by finite linear combinations of plane waves
with complex momentum denoted $k \in \C^4$.
This defines a $\C^4$-grading on $\apw$, conserved by all operations.
Then $h$ is, relative to standard trivializations,
a matrix with entries that are affine linear functions of $k \simeq \p$.
Abbreviate $k^2 = \eta^{\mu\nu}k_\mu k_\nu \simeq \Box$.
Set
\begin{equation}\label{eq:prp}
\prop = h/k^2
\end{equation}
This `propagator'
is ill-defined when acting on plane waves with $k^2 = 0$;
so we write $\prop : \apw \pd \apw$ to indicate that this is a partial map.
The arrow $\pd$ is contagious, but all partial maps
we encounter are defined on dense subsets of momenta.
In every occurrence of $\prop$ it is implicit that
one must have $k^2 \neq 0$.
With this proviso,
\begin{equation}\label{eq:Hx}
d\prop+\prop d=\one
\qquad\qquad
\prop^2=\prop h=h\prop=0
\end{equation}
by \eqref{eq:hxs}. The homology\footnote{%
If $\prop$ was a true element of $\End^{-1}(\apw)$
then \eqref{eq:Hx} would erroneously imply that $H(\apw)$ vanishes.} $H(\apw)$
is a direct sum of the fibers of a non-trivial vector bundle over the complex cone $k^2 = 0$
in $\C^4$, assuming $k \neq 0$, see \cite{nr} and cf.~Lemma \ref{lemma:bdy}.
The homology at $k=0$ is never used.
\item Analogous notation is used for the cobar construction
$\Apw \hookrightarrow A$ of $\apw \hookrightarrow \ax$;
it is spanned by
Gerstenhaber words of co-Gerstenhaber words (cf.~\cite{w})
in plane waves in $\apw$.
The momentum $k \in \C^4$ of such a word-of-words is the total
momentum of all plane waves in it; this defines a $\C^4$-grading on $\Apw$.
Define $\Aprop : \Apw \pd \Apw$ by
\begin{equation}\label{eq:Aprp}
\Aprop = h_A/k^2
\end{equation}
It follows from \eqref{eq:strdaha} that
\begin{equation}\label{eq:AHx}
d_A \Aprop + \Aprop d_A=\one
\qquad\qquad
\Aprop^2=\Aprop h_A=h_A\Aprop=0
\end{equation}
\end{itemize}
\step
\indent\indent\emph{Feynman expansion.}
Given a contraction, aka deformation retract,
from $\apw$ (or $\Apw$) to its homology $H(\apw) \simeq H(\Apw)$,
there are standard formulas for\ldots
\begin{itemize}
\item \ldots the $\cinf$ minimal model of the dgca $\apw$ (or $\Apw$)
using planar cubic trees,
defining minimal model operations $H(\apw)^{\otimes n} \pd H(\apw)$,
see \cite{cg} or \cite[Section 13.1.9]{lv}.
\item \ldots the $\linf$ minimal model of the dgLa $\apw \otimes \cx$ (or $\Apw \otimes \cx$)
using cubic trees, defining minimal model operations
$H(\apw \otimes \cx)^{\otimes n} \pd H(\apw \otimes \cx)$,
see \cite[Section 10.3.4]{lv} or \cite{nr}.
\end{itemize}
Let us call them Feynman-Kontsevich-Soibelman trees.
These trees have one output and $n \geq 2$ inputs labeled $1\ldots n$.
The rules for $\cinf$ trees are summarized in the next table:
\[
{\renewcommand{\arraystretch}{1.1}
\begin{tabular}{c | c l | c l}
tree element & \multicolumn{2}{c|}{operation, using $\apw$}
   & \multicolumn{2}{c}{operation, using $\Apw$}\\
\hline
vertex & gca product & \xxx{13}{\apw^i \otimes \apw^j}{\to \apw^{i+j}} &
gca product & \xxx{13}{\Apw^i \otimes \Apw^j}{\to \Apw^{i+j}}\\
internal line & $\prop$ & \xxx{13}{\apw^i}{\pd \apw^{i-1}} &
$\Aprop$ & \xxx{13}{\Apw^i}{\pd \Apw^{i-1}}\\
input & $i$ & \xxx{13}{H^i(\apw)}{\pd \apw^i} &
$\imap i$ & \xxx{13}{H^i(\apw)}{\pd \Apw^i}\\
the output & $p$ & \xxx{13}{\apw^i}{\pd H^i(\apw)} &
$p\pmap$ & \xxx{13}{\Apw^i}{\pd H^i(\apw)}
\end{tabular}}
\]
and they involve signs. For $\linf$ one must suitably tensor with $\cx$,
the Lie bracket $\cx \otimes \cx \to \cx$, and $\one: \cx \to \cx$.
Canonically, $H(\apw \otimes \cx) \simeq H(\apw) \otimes \cx$.
The quasi-isomorphisms
\begin{equation}\label{eq:ipip}
i: H(\apw) \pdd \apw : p
\end{equation}
can be and are chosen to conserve the $\C^4$-grading;
they are nonzero only along $k^2 = 0$; undefined only at $k=0$;
otherwise arbitrary.
By $\Hopf$-equivariance,
$\imap : \apw \ndd \Apw : \pmap$
in Theorem \ref{theorem:rect} conserve the $\C^4$-grading.
The minimal model operations do not depend on the contraction
-- this includes $i$ and $p$ and the propagator --
for a proof in the $\linf$ case see \cite{nr}.
\step
Working with the operations on $\apw$,
the Yang-Mills amplitudes correspond to the $\cinf$ operation $H^1(\apw)^{\otimes n} \pd H^2(\apw)$,
or, including the color Lie algebra $\cx$,
to the $\linf$ operation $H^1(\apw \otimes \cx)^{\otimes n} \pd H^2(\apw \otimes \cx)$ \cite{nr}.
Using $\Apw$ gives the same result, see Section \ref{sec:fbcj}.

\subsection{BCJ relations} \label{sec:bcjrelsNEW}

The following lemma is a direct consequence
of the definition of $S$ in \eqref{eq:sdef},
together with the dgca axioms and \eqref{eq:hxs}.
One does not need Theorem \ref{theorem:2ax} to prove this.
\begin{lemma}\label{lemma:x1}
The map $S$ has degree $-1$, is graded symmetric\footnote{%
Explicitly, $S(x,y,z) = (-1)^{xy}S(y,x,z) = (-1)^{yz}S(x,z,y)$.},
and
\begin{equation}\label{eq:dSSd}
d S(x,y,z) = - S(dx,y,z) - (-1)^x S(x,dy,z) - (-1)^{x+y} S(x,y,dz)
\end{equation}
and
\begin{equation}\label{eq:4S}
\begin{aligned}
(-1)^x xS(y,u,v) - (-1)^{xy+y} y S(x,u,v)
+ S(x,yu,v) - (-1)^{xy} S(y,xu,v) & = 0\\
S(xy,u,v)-S(x,yu,v)+S(x,y,uv)-(-1)^{x(y+u+v)} S(y,u,vx) & = 0
\end{aligned}
\end{equation}
\end{lemma}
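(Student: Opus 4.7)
The plan is as follows.

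\textbf{Degree and symmetry.} First I would verify the degree and graded symmetry of $S$ directly from \eqref{eq:sdef}. Each of the seven terms has degree $x+y+z-1$, so $\deg S = -1$. For $S(x,y,z) = (-1)^{xy} S(y,x,z)$, one expands both sides and matches term-by-term using graded commutativity in $\ax$; the Koszul signs in \eqref{eq:sdef}, notably $(-1)^{(x+1)y}$ in front of $y h(xz)$, are calibrated precisely so every term matches. Symmetry in the last two arguments is analogous.

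\textbf{The identity \eqref{eq:dSSd}.} To prove \eqref{eq:dSSd}, I would compute $dS(x,y,z)$ directly by applying $d$ to each of the seven terms in \eqref{eq:sdef}, using that $d$ is a graded derivation of the gca. Wherever $dh(-)$ appears, substitute $dh = \Box \one - hd$ from \eqref{eq:hxs}. The result splits into two groups. The first group collects all $\Box$-contributions and assembles, with the signs inherited from \eqref{eq:sdef}, into the expression
\begin{equation*}
\Box(xyz) - \Box(xy)z - (-1)^x x \Box(yz) - (-1)^{(x+1)y} y \Box(xz) + \Box(x) y z + (-1)^x x \Box(y) z + (-1)^{x+y} xy \Box(z),
\end{equation*}
which is the third-order graded Koszul bracket of $\Box$ on $(x,y,z)$. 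Since $\Box = \eta^{\mu\nu} \partial_\mu \partial_\nu$ is a sum of products of two commuting (degree-zero) graded derivations of $\Omega$, it is a graded differential operator of order at most two on $\Omega$, hence on $\ax$; its third-order Koszul bracket therefore vanishes identically, so the first group contributes zero. The second group consists of the terms where $h$ is composed with $d$ applied to one of $x, y, z$; a direct check shows these regroup into exactly $-S(dx,y,z) - (-1)^x S(x,dy,z) - (-1)^{x+y} S(x,y,dz)$, yielding \eqref{eq:dSSd}.

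\textbf{The identities \eqref{eq:4S}.} Both identities are purely dgca-algebraic, independent of \eqref{eq:hxs}: they hold for any $\C$-linear map $h : \ax \to \ax$. The plan is to expand the four occurrences of $S$ in each identity via \eqref{eq:sdef} and then to cancel the resulting terms pairwise using associativity and graded commutativity in $\ax$. The signs $(-1)^{xy}$ and $(-1)^{x(y+u+v)}$ in \eqref{eq:4S} are precisely those needed for the cancellation. Conceptually, these are the standard Leibniz-type identities automatically satisfied by the third graded Koszul bracket of any linear operator on a graded commutative algebra.

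\textbf{Main difficulty.} There is no real obstacle beyond careful sign tracking. The conceptual content reduces to two observations: (i) $\Box$, being a product of commuting derivations, is second order in the Koszul sense, so its third Koszul bracket vanishes, cancelling the entire $\Box$-contribution to $dS$; and (ii) the identities \eqref{eq:4S} are structural consequences of how $S$ is built from the gca multiplication and $h$.
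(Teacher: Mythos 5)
Your proposal is correct and takes essentially the same route as the paper: a direct expansion using the dgca axioms and \eqref{eq:hxs}, with the $\Box$-contributions cancelling precisely by the 7-term identity that the paper's proof invokes (stated there for $f_1,f_2,f_3 \in C^\infty(\R^4)$, which suffices since $\Box$ and the gca product interact only through the $C^\infty(\R^4)$ factor). Your gloss that this identity is the vanishing of the third Koszul bracket of $\Box$, a sum of products of commuting derivations and hence second order, is an accurate conceptual restatement of the same computational fact, and your treatment of \eqref{eq:4S} as a purely algebraic identity valid for arbitrary $\C$-linear $h$ is likewise correct.
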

\begin{proof}
For \eqref{eq:dSSd} also use
$\Box(f_1f_2f_3) - \Box(f_1f_2)f_3 - \Box(f_2f_3)f_1 - \Box(f_3f_1)f_2 + \Box(f_1)f_2f_3
+ \Box(f_2)f_3f_1 + \Box(f_3)f_1f_2 = 0$
for all $f_1,f_2,f_3 \in C^\infty(\R^4)$.
\qed\end{proof}

In this section
we show the BCJ relations given by Tye and Zhang \cite[equation (4.32)]{tz},
that is, Corollary \ref{corollary:bcjr}.
For every $n \geq 3$, the argument is in two steps:
\begin{itemize}
\item \emph{Step 1.}
First we define a map $S_n : \apw^{\otimes n} \pd \apw$,
which is a certain signed sum of cubic trees with $n$ inputs and one output,
where each of the $n-1$ vertices is a gca product;
and internal/input/output lines are decorated by a total
of $n-3$ propagators $\prop$ and a single $h = k^2 \prop$.
Lemma \ref{lemma:x1} is then used to show that $S_n$ is a chain map:
\[
d S_n(x_1,\ldots,x_n) = 
\textstyle\sum_{i=1}^n (-1)^{n+x_1 + \ldots + x_{i-1}}
S_n(x_1,\ldots,dx_i,\ldots,x_n)
\]
In particular, $S_n$ induces a map on homology, $H(\apw)^{\otimes n} \pd H(\apw)$.
%%%%%%%%%%%%%%%%%%%%%%
\item \emph{Step 2.}
Using Theorem \ref{theorem:2ax} we then show that the induced
$H(\apw)^{\otimes n} \pd H(\apw)$ is zero.
Namely, $\theta_3$ is used to construct a
$T_n : \apw^{\otimes n} \pd \apw$ such that
\[
S_n(x_1,\ldots,x_n) = 
dT_n(x_1,\ldots,x_n)
+ \textstyle\sum_{i=1}^n (-1)^{n+x_1 + \ldots + x_{i-1}}
T_n(x_1,\ldots,dx_i,\ldots,x_n)
\]
\end{itemize}
The vanishing of the map $H^1(\apw)^{\otimes n} \pd H^2(\apw)$
induced by $S_n$ is \cite[equation (4.32)]{tz}.
\step
It is useful to first discuss $n=3$, then $n=4$, then sketch the general $n$ case.
\begin{itemize}
\item \emph{Case $n=3$.}
Set $S_3 = S$.
Step 1 follows from \eqref{eq:dSSd}.
Step 2 follows from \eqref{eq:exact} using $T_3 = \theta_3$.
This case goes back to an observation of Zhu \cite{zhu}.
%%%%%%%%%%%%%%%%%%%%%%%%%%%%%%%%%%%%%%%%%%%%%%%%%%%%%%
\item \emph{Case $n=4$.}
Set $E(x,y) = \prop(xy)$ and set
\begin{align*}
S_4(x,y,u,v) & = S(E(x,y),u,v) - (-1)^x S(x,E(y,u),v)\\
& \qquad + (-1)^{x+y} S(x,y,E(u,v)) - (-1)^{y+u+x(y+u+v)} S(y,u,E(v,x))
\end{align*}
Note $dE(x,y) = xy - E(dx,y) - (-1)^x E(x,dy)$ by \eqref{eq:Hx},
with an `error term' $xy$.
Step 1 follows from this, from \eqref{eq:dSSd},
and from the second of \eqref{eq:4S} to cancel `error terms'.
Step 2 follows from \eqref{eq:exact} and the second of \eqref{eq:identity4} using
the witness
\begin{align*}
T_4(x,y,u,v) & = \theta_3(E(x,y),u,v) - (-1)^x \theta_3(x,E(y,u),v)\\
& \qquad + (-1)^{x+y} \theta_3(x,y,E(u,v)) - (-1)^{y+u+x(y+u+v)} \theta_3(y,u,E(v,x))
\end{align*}
%%%%%%%%%%%%%%%%%%%%%%%%%%%%%%%%%%%%%%%%%%%%5
\item \emph{General $n \geq 3$, sketch.}
Set $E(x_1) = x_1$.
For $n \geq 2$
let $E(x_1,\ldots,x_n)$ be the sum, with suitable signs, of all planar cubic 
trees with $n$ inputs and one output\footnote{%
There are $C_{n-1}$ such trees, with $C$ the Catalan numbers.},
where each vertex is a gca product; each internal 
line and the output line is decorated by $\prop$; the input lines
are not decorated.
Schematically set
\begin{multline}\label{eq:sdefs}
S_n(x_1,\ldots,x_n) =\\
\textstyle\sum \pm S(
E(x_i,x_{i+1},\ldots,x_{j-1}),
E(x_j,x_{j+1},\ldots,x_{k-1}),
E(x_k,x_{k+1},\ldots,x_{i-1}))
\end{multline}
where the subscripts are in $\Z/n\Z$
and the sum is over all decompositions
of $\Z/n\Z$ into three nonempty contiguous subsets\footnote{%
The number of ways of decomposing $\Z/n\Z$
into $m$ nonempty contiguous subsets is ${n \choose m}$.}.
Step 1 and Step 2 follow like before\footnote{%
In this entire section, only the second identity in \eqref{eq:4S}
and \eqref{eq:identity4} respectively is used;
there should be an analogous discussion using the first in 
\eqref{eq:4S}
and \eqref{eq:identity4} respectively.}.
\end{itemize}
The translation to \cite{tz} uses the following remarks:
\begin{itemize}
\item Take the definition of $S_n$, \eqref{eq:sdefs},
and substitute the definitions of $S$, \eqref{eq:sdef}, and of $E$.
One obtains cubic trees as described in Step 1.
Since $h\prop=\prop h=0$, \eqref{eq:Hx},
a tree is zero unless all $n-3$ propagators $\prop$ and the single $h$
appear on \emph{distinct} lines.
\item There are some trees where $h$ decorates an input or output line,
but they do not contribute to $H^1(\apw)^{\otimes n} \pd H^2(\apw)$
by \eqref{eq:kih};
these degrees give the YM amplitudes.
So only trees where all $\prop$ and $h$ appear on \emph{distinct internal} lines
contribute to this.
\item Replace the single $h$ by $h = k^2 \prop$,
and regard $k^2$ (the square of the momentum flowing through an internal line)
as a weight multiplying a
cubic tree that uses $\prop$ on all internal lines,
that is, a standard $\cinf$ Feynman-Kontsevich-Soibelman tree.
\end{itemize}

The discussion in this section is ad-hoc.
Presumably, if one can define the $\bv{\Box}$ minimal model of $\apw$,
the BCJ relations will be subsumed into the $\bv{\Box}$ axioms.

\newcommand{\hopt}{\tc{hcolor}{h^{\textnormal{opt}}}}
\newcommand{\Ahopt}{\tc{hcolor}{h^{\textnormal{opt}}_A}}
\subsection{BCJ/color-kinematics duality} \label{sec:fbcj}

\begin{lemma}
If a functional $\ell \in A^\ast$
satisfies $\ell h_A = 0$, then for all $x,y,z \in \ker h_A$:
\begin{equation}\label{eq:jxyz}
\ell(h_A(xy) z) + (-1)^{x(y+z)} \ell(h_A(yz)x) + (-1)^{z(x+y)} \ell(h_A(zx)y) = 0
\end{equation}
where juxtaposition denotes the gca product in $A$.
Analogous for $\Apw \hookrightarrow A$.
\end{lemma}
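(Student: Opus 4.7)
The plan is to extract the identity from the fact that $(A,h_A)$ is a genuine BV algebra, together with the two hypotheses $\ell h_A = 0$ and $x,y,z \in \ker h_A$.

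First I would write down the second-order identity for $h_A$, namely the vanishing of the analogue of $S$ in \eqref{eq:sdef} but with $h$ replaced by $h_A$ acting on $A$. By Theorem \ref{theorem:rect}, $(A,h_A)$ is a BV algebra in the strict sense, so $h_A$ is genuinely second order: for homogeneous $x,y,z \in A$,
\begin{multline*}
h_A(xyz) \;=\; h_A(xy)z + (-1)^x x h_A(yz) + (-1)^{(x+1)y} y h_A(xz)\\
 - h_A(x)yz - (-1)^x x h_A(y)z - (-1)^{x+y} xy h_A(z).
\end{multline*}

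Next I would impose the hypothesis $x,y,z \in \ker h_A$ to kill the three unary terms on the right, and then apply $\ell$. Since $\ell h_A = 0$, the left-hand side becomes $\ell(h_A(xyz)) = 0$, yielding
\[
0 \;=\; \ell(h_A(xy)z) + (-1)^x \ell(x h_A(yz)) + (-1)^{(x+1)y} \ell(y h_A(xz)).
\]

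The only remaining task is sign bookkeeping: reorder each product so that $h_A$ of a pair stands on the left. Using graded commutativity in the gca $A$ and the fact that $|h_A(pq)| = |p|+|q|-1$, one computes $x h_A(yz) = (-1)^{x(y+z-1)} h_A(yz) x$, so the coefficient of $\ell(h_A(yz)x)$ becomes $(-1)^{x+x(y+z-1)} = (-1)^{x(y+z)}$. Similarly $y h_A(xz) = (-1)^{y(x+z-1)} h_A(xz) y$, and then $h_A(xz) = (-1)^{xz} h_A(zx)$ by graded commutativity, so the coefficient of $\ell(h_A(zx)y)$ becomes $(-1)^{(x+1)y + y(x+z-1) + xz} = (-1)^{z(x+y)}$. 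This gives exactly \eqref{eq:jxyz}. The argument for $\Apw \hookrightarrow A$ is identical, since the $\bv{\Box}$-algebra and hence the BV structure restricts to the plane-wave subalgebra. No step is a real obstacle; the only thing to be careful about is the sign arithmetic in the last step.
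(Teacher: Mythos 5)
Your proof is correct and follows the same route as the paper: invoke that $h_A$ is genuinely second order (so the analogue of $S$ in \eqref{eq:sdef} vanishes on $A$), kill the unary terms via $x,y,z\in\ker h_A$ and the leading term via $\ell h_A=0$, then use graded commutativity to fix the signs. Your sign bookkeeping checks out, so there is nothing to add.
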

\begin{proof}
Since $(A,h_A)$ is a BV algebra
by Theorem \ref{theorem:rect} (cf.~Lemma \ref{lemma:rcc}), $h_A$ is genuinely second order,
so the right hand side of
\eqref{eq:sdef},
with $h_A$ in the role of $h$
and all free variables in $A$, vanishes.
Using graded symmetry of the gca product in $A$, the claim follows.
\qed\end{proof}
%%%%%%%%%%%%%%%%%%%%%%%%%%%%%%%%%%%%%%%%%%%%%%%%%%%%%%%%%%%%%%%%

To prove Corollary \ref{corollary:bcjd},
consider for every $n \geq 2$ the $n$-ary operation
of the $\linf$ minimal model,
but specialized to the degrees of interest, see \eqref{eq:mnmn}.
Concretely, $\yma_n$ is a sum of Feynman-Kontsevich-Soibelman trees
with propagator \eqref{eq:Aprp}:
\begin{equation}\label{eq:treeexp}
\yma_n = \textstyle\sum_{T \in \mathcal{T}_n} \yma_{n,T'} \otimes c_{T'}
\end{equation}
Here $\mathcal{T}_n$ is the set of all (non-planar) cubic trees with one output
and $n$ inputs labeled $1\ldots n$, so that $|\mathcal{T}_n| = (2n-3)!!$,
and $T'$ denotes any planar embedding of $T$.
Here
$c_{T'} : \cx^{\otimes n} \to \cx$ 
nests $n-1$ instances of the Lie bracket as dictated by the planar embedding $T'$,
and
\[
\yma_{n,T'} : H^1(\apw)^{\otimes n} \pd H^2(\apw)
\] 
is defined by decorating $T'$ as follows:
\[
\begin{tabular}{c | c l}
tree element & \multicolumn{2}{c}{operation}\\
\hline
vertex & gca product & \xxx{15}{\Apw^1 \otimes \Apw^1}{\to \Apw^2}\\
internal line & $\Aprop = h_A/k^2$ & \xxx{15}{\Apw^2}{\pd \Apw^1}\\
input & $\imap i$ & \xxx{15}{H^1(\apw)}{\pd \Apw^1}\\
the output & $p \pmap$ & \xxx{15}{\Apw^2}{\pd H^2(\apw)}
\end{tabular}
\]
e.g.~$\yma_{3,(1(23))}(x,y,z) = p \pmap ((\imap i x) \Aprop ((\imap i y)(\imap i z)))$.
So \eqref{eq:treeexp} does not depend on the choice of the $T'$
since the Lie bracket on $\cx$
and the product $\Apw^1 \otimes \Apw^1 \to \Apw^2$ are antisymmetric.
\step
One can view $\mathcal{T}_n$ as an undirected graph
where an edge corresponds to\footnote{%
Cf.~the associahedron, aka Stasheff polytope,
which is for planar cubic trees.}
`one associativity move', see Figure \ref{fig:t4}.
By definition, \eqref{eq:treeexp}
satisfies BCJ/color-kinematics duality \cite{bcj,bdhk,bccjr} if and only if
for every $3$-clique\footnote{%
Given a graph, an $n$-clique is a subset of $n$ vertices
that are all pairwise connected by an edge.}
$U,V,W \in \mathcal{T}_n$ one has
\begin{equation}\label{eq:bcjkcd}
k_U^2 \yma_{n,U'} + k_V^2 \yma_{n,V'} + k_W^2 \yma_{n,W'} = 0
\end{equation}
where $k_U$ is the momentum flowing
through the distinguished line in $U$\footnote{%
The distinguished line in $U$ is the one not present in $V$ and $W$.},
and the planar embeddings $U'$, $V'$, $W'$
are chosen compatibly in the sense that the four subtrees
connecting to the distinguished lines,
which appear jointly in $U$ and $V$ and $W$,
are embedded in the same way as part of $U'$ and $V'$ and $W'$,
and the relative planar embedding of the four subtrees is cyclic,
in particular so that $c_{U'} + c_{V'} + c_{W'} = 0$ by the Jacobi identity in $\cx$.
%%%%%%%%%%%%%%%%%%%%%%%%%%%%%%%%%%%%%%%
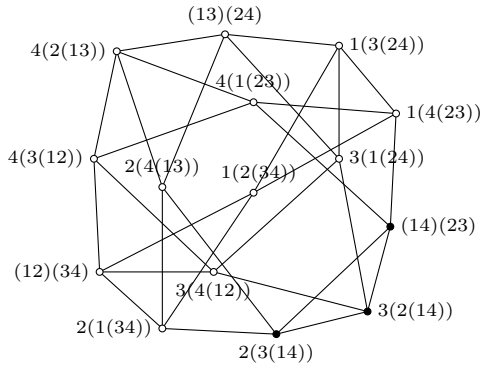
\begin{figure}
\centering
\begin{tikzpicture}[scale=0.75,every node/.style={font=\ssmall}]
\coordinate (a1) at (0.0,3.1);
\coordinate (a2) at (5.3,3.9);
\coordinate (a3) at (2.8,4.1);
\coordinate (a4) at (2.8,2.5);
\coordinate (a5) at (0.1,1.1);
\coordinate (a6) at (2.1,1.1);
\coordinate (a7) at (4.3,5.1);
\coordinate (a8) at (4.3,3.1);
\coordinate (a9) at (0.4,5.0);
\coordinate (a10) at (2.3,5.3);
\coordinate (a11) at (1.2,2.6);
\coordinate (a12) at (1.2,0.1);
\coordinate (a13) at (4.8,0.4);
\coordinate (a14) at (5.2,1.9);
\coordinate (a15) at (3.2,0.0);
\draw (a1) to (a3); \draw (a1) to (a5); \draw (a1) to (a6); \draw (a1) to (a9); \draw (a2) to (a3); \draw (a2) to (a4); \draw (a2) to (a7); \draw (a2) to (a14); \draw (a3) to (a9); \draw (a3) to (a14); \draw (a4) to (a5); \draw (a4) to (a7); \draw (a4) to (a12); \draw (a5) to (a6); \draw (a5) to (a12); \draw (a6) to (a8); \draw (a6) to (a13); \draw (a7) to (a8); \draw (a7) to (a10); \draw (a8) to (a10); \draw (a8) to (a13); \draw (a9) to (a10); \draw (a9) to (a11); \draw (a10) to (a11); \draw (a11) to (a12); \draw (a11) to (a15); \draw (a12) to (a15); \draw (a13) to (a14); \draw (a13) to (a15); \draw (a14) to (a15);
\draw (a1) node [anchor=east] {$4(3(12))$};
\draw (a2) node [anchor=west] {$1(4(23))$};
\draw (a3) node [anchor=south] {$4(1(23))$};
\draw (a4) node [anchor=south,xshift=2] {$1(2(34))$};
\draw (a5) node [anchor=east] {$(12)(34)$};
\draw (a6) node [anchor=north] {$3(4(12))$};
\draw (a7) node [anchor=west] {$1(3(24))$};
\draw (a8) node [anchor=west] {$3(1(24))$};
\draw (a9) node [anchor=east] {$4(2(13))$};
\draw (a10) node [anchor=south] {$(13)(24)$};
\draw (a11) node [anchor=south] {$2(4(13))$};
\draw (a12) node [anchor=east] {$2(1(34))$};
\draw (a13) node [anchor=west] {$3(2(14))$};
\draw (a14) node [anchor=west] {$(14)(23)$};
\draw (a15) node [anchor=north] {$2(3(14))$};
\draw[fill=white] (a1) circle (0.06);
\draw[fill=white] (a2) circle (0.06);
\draw[fill=white] (a3) circle (0.06);
\draw[fill=white] (a4) circle (0.06);
\draw[fill=white] (a5) circle (0.06);
\draw[fill=white] (a6) circle (0.06);
\draw[fill=white] (a7) circle (0.06);
\draw[fill=white] (a8) circle (0.06);
\draw[fill=white] (a9) circle (0.06);
\draw[fill=white] (a10) circle (0.06);
\draw[fill=white] (a11) circle (0.06);
\draw[fill=white] (a12) circle (0.06);
\draw[fill=black] (a13) circle (0.06);
\draw[fill=black] (a14) circle (0.06);
\draw[fill=black] (a15) circle (0.06);
\end{tikzpicture}
\caption{The graph $\mathcal{T}_4$. Each vertex is a cubic tree
with one output and four inputs $1,\ldots,4$.
There are many $3$-cliques; an example is given by the three black vertices.
In $\mathcal{T}_n$, every vertex belongs to exactly $n-2$ many $3$-cliques,
and there are no larger cliques.
}\label{fig:t4}
\end{figure}
%%%%%%%%%%%%%%%%%%%%%%%%%%%%%%%%%%%%%%%%%%%%%%%%
\step
BCJ duality \eqref{eq:bcjkcd} for the expansion \eqref{eq:treeexp}
follows from \eqref{eq:jxyz},
by letting $x$, $y$, $z$ be the outputs of three of the four subtrees,
and letting $\ell$ be the input to the fourth.
The degrees work out to be $x,y,z \in \Apw^1$ and $\ell \in (\Apw^2)^\ast$
so that the signs in \eqref{eq:jxyz} disappear.
\begin{itemize}
\item
Proof of $x \in \ker h_A$:
If $x$ is not an overall input, then
$x = h_A \xi$ for some $\xi \in \Apw^2$,
so $h_A x = (h_A)^2 \xi = 0$ by \eqref{eq:strdaha}.
If $x$ is an overall input, then $x = \imap i \xi$
for some $\xi \in H^1(\apw)$,
so $h_A x = h_A \imap i \xi = \imap h i \xi = 0$
by \eqref{eq:uu1x} and \eqref{eq:kih}.
Analogously $y,z \in \ker h_A$.
\item
Proof of $\ell h_A=0$:
If $\ell$ is not the overall output,
then $\ell = \nu h_A$ for some $\nu \in (\Apw^1)^\ast$,
so $\ell h_A = \nu (h_A)^2 = 0$ by \eqref{eq:strdaha}.
If $\ell$ is the overall output,
then $\ell = \nu p \pmap$ for some $\nu \in H^2(\apw)^\ast$,
so $\ell h_A = \nu p \pmap h_A = \nu p h \pmap = 0$
by \eqref{eq:uu1x} and \eqref{eq:kih}.
\end{itemize}
\step
\indent\indent
To prove Corollary \ref{corollary:bcjd},
we must also check that $\yma_n$ are the ordinary YM amplitudes\footnote{%
E.g., why does the Feynman expansion
for $\apw \otimes \cx$, which fails BCJ duality, give the same result?}.
We exploit the fact that the YM amplitudes
can be recursively characterized in terms of the `factorization of their residues'.
This is known from Britto-Cachazo-Feng-Witten recursion.
We use the characterization of A.~N\"utzi and the author in \cite{nr} that
is simpler in that it does not involve any BCFW shift
nor limiting conditions `at infinity'
under such shifts\footnote{%
These points `at infinity'
are irrelevant for the recursive characterization,
by a Hartogs extension theorem,
because they are contained in (projectively) a codimension two subset.}.
\step
Let $k_1,\ldots,k_n$ be the input momenta viewed as standard coordinates on $(\C^4)^n \simeq \C^{4n}$;
and abbreviate $k_{n+1} = k_1 + \ldots + k_n$ which is the output momentum.
The argument that the $\yma_n$ are the ordinary Yang-Mills amplitudes goes as follows (sketch):
\begin{itemize}
\item Recall \eqref{eq:mnmn};
 recall that $\yma_n$ is undefined if
 \eqref{eq:Aprp}
 is undefined meaning $k_J^2 = 0$ for some $J \subset \{1,\ldots,n\}$
 with $1 < |J| < n$,
where $k_J = \sum_{j \in J} k_j$;
 recall that the one-particle homology
 is a sheaf over the cone in $\C^4$ minus the origin,
 see \cite{nr} and cf.~Lemma \ref{lemma:bdy}.
 So, geometrically,
 $\yma_n$ is a section of an algebraic sheaf along $V_n - P_n$ where
\[
V_n = \{k_1^2 = \ldots = k_{n+1}^2 = 0\}
\qquad\quad
P_n = \textstyle
\bigcup_{j=1\ldots n+1} \{k_j = 0\} \cup
\bigcup_{1 < |J| < n} \{k_J^2 = 0\}
\]
Here $P_n$ is a closed subset of the variety $V_n$,
of codimension one if $n>2$.
%%%%%%%%%%%%
\item $\yma_n$ is homogeneous jointly in $k_1,\ldots,k_n$
with the natural degree of homogeneity,
because all constructions in this paper, including $\Apw$, scale naturally\footnote{%
Since $h$ scales naturally
by the requirements in Section \ref{sec:intro}, all operations do.}.
Thus $\yma_n$ is, equivalently, a section of a sheaf on
the projective variety associated to $V_n - P_n$.
%%%%%%%%%%%%
\item $\yma_2$ coincides with the map
$H^1(\apw \otimes \cx)^{\otimes 2} \pd H^2(\apw \otimes \cx)$
that is induced by the gca product $\apw^1 \otimes \apw^1 \to \apw^2$
times the Lie bracket $\cx \otimes \cx \to \cx$,
because of \eqref{eq:uu2},
hence it is Lorentz invariant and nonzero on both irreducible components of $V_2 - P_2$.
%%%%%%%%%%%
\end{itemize}
Hereafter, $n>2$. Then $V_n$ is irreducible \cite{nr}.
Decompose $\bigcup_{1 < |J| < n} \{k_J^2 = 0\} \subset V_n$
into its irreducible components
denoted $V(\px) \subset V_n$, where
$\px$ is the corresponding prime divisor in the coordinate ring of $V_n$.
A classification and properties of the $\px$ is in \cite{nr};
beware that one $V(\px)$ can be a component of several $\{k_J^2 = 0\}$;
this happens but only for $n=3$.
\begin{itemize}
%%%%%%%%%%%%%%%%%%
\item $\yma_n$ has only simple poles along each $V(\px)$.
More precisely, $\yma_n$ extends,
as a section of a sheaf that allows only these specific simple poles \cite{nr},
to the bigger open set $V_n - Z_n$,
where the bad set $Z_n = \bigcup_{\px \neq \qx} V(\px) \cap V(\qx)$ has codimension two.
%%%%%%%%%%%
\item
The recursive characterization requires showing that
$\Res_{\px} \yma_n$, the residue along $V(\px)$,
equals some bilinear expression in $\yma_2,\ldots,\yma_{n-1}$ for every $\px$,
see \cite{nr}\footnote{This is known as `factorization of residues'.}.
This is not immediate when using the propagator \eqref{eq:Aprp}.
But one can use gauge independence in \cite{nr}
to replace the relevant instances of \eqref{eq:Aprp} by another propagator --
a composition of an optimal homotopy as in \cite{nr}
with the contraction \eqref{eq:uu1} --
suitable to evaluate $\Res_{\px}\yma_n$
and to check that it is of the required form. Here are more details:
\begin{itemize}
\item Let
$\apw = \oplus_{k \in \C^4} \apx{k}$
and
$\Apw = \oplus_{k \in \C^4} \Apx{k}$
be the $\C^4$-gradings by momentum.
Then $\apx{k}$ is finite-dimensional, but $\Apx{k}$ is infinite-dimensional
since it is the span of all words-of-words
in plane waves in $\apw$ with total momentum $k$.
%%%%%%%%%%%%
\item
By $\Hopf$-equivariance,
$d(\apx{k}) \subset \apx{k}$
and $d_A(\Apx{k}) \subset \Apx{k}$
and the binary operations
on $\apw$ and $\Apw$ conserve the $\C^4$-grading.
Hence the gauge independence theorem as stated in \cite{nr} applies 
not only to $\apw \otimes \cx$ but also to $\Apw \otimes \cx$.
%%%%%%%%%%%
\item Suppose we want to evaluate $\Res_\px \yma_n$ at a point
\begin{equation}\label{eq:kstar}
(k_{1\star},\ldots,k_{n\star}) \in V(\px) - Z_n
\end{equation}
For every $J$ with $1 < |J| < n$ such that $V(\px) \subset \{k_J^2 = 0\}$
(if $n > 3$ then $J$ is unique) set
$q_\star = k_{J\star} = \textstyle\sum_{j \in J} k_{j\star}$,
so $q_\star \in \C^4-0$ and $q_\star^2 = 0$.
Denote by
\[
i_\star : H(\apx{q_\star}) \ndd \apx{q_\star} : p_\star
\qquad\qquad
\imap_\star : \apx{q_\star} \ndd \Apx{q_\star} : \pmap_\star
\]
the restrictions of \eqref{eq:ipip} respectively \eqref{eq:uu1} to $q_\star$;
and by
$\chi_\star: H(\apx{q_\star}) \to H(\apx{q_\star})$
the canonical degree $-1$ map whose only component is \eqref{eq:h2121}.
%%%%%%%%%%%
\item 
An optimal homotopy \cite{nr} for $\apw$ based at $q_\star$ is
a degree $-1$ map
$\hopt : \apw \pd \apw$
with these properties:
it is given by a matrix whose entries are rational functions of $k$
so that $k^2 \hopt$ is regular at $q_\star$;
it satisfies
$d\hopt+\hopt d = \one$
and
$(\hopt)^2 = 0$
and $\lim_{k \to q_\star} k^2 \hopt = i_\star \chi_\star p_\star$.
This is constructed in \cite{nr} by nesting
two instances of the homological perturbation lemma.
%%%%%%%%%
Set\footnote{%
The maps $\imap$, $\pmap$, $\zmap$
from \eqref{eq:uu1} are unconditionally, as opposed to partially, defined.}
\[
\Ahopt = \zmap + \imap \hopt \pmap \;:\; \Apw \pd \Apw
\]
It has these properties:
$k^2 \Ahopt$ is regular at $q_\ast$ and
(to check this one also uses the equations satisfied by the contraction
\eqref{eq:uu1}, spelled out in Lemma \ref{lemma:rcc})\footnote{%
A precise definition of the limit is omitted.
One possibility is this:
Take a word-of-words of $m \geq 1$ plane waves viewed as depending parametrically on the $m$ momenta;
then replace them by $m$ sequences of momenta so that the
$m$ sequences converge separately and so that the total momentum converges to $q_\star$.
Evaluating $k^2 \Ahopt$ along such a sequence gives the claimed result.}:
\begin{equation}\label{eq:heqx}
d_A \Ahopt + \Ahopt d_A = \one
\qquad
(\Ahopt)^2 = 0
\qquad
\textstyle\lim_{k \to q_\star} k^2 \Ahopt
   = \imap_\star i_\star \chi_\star p_\star \pmap_\star
\end{equation}
%%%%%%%%%%%%%%%%%
\item
For all $T \in \mathcal{T}_n$,
consider the line -- if there is one -- with momentum $k_J$ flowing through,
and replace the corresponding $\Aprop$ by $\Ahopt$ in the definition of $\yma_{n,T'}$.
The first two in \eqref{eq:heqx} mean that $\Ahopt$ is an admissible propagator,
hence this replacement does not change $\yma_n$, by gauge independence \cite{nr}.
The last in \eqref{eq:heqx}
implies that $\Res_{\px} \yma_n$ is, at the point \eqref{eq:kstar},
of the required form.
%%%%%%%%%%%%%%%%%%%%%%%%%%%%%%%%%%%%%%%%%%%%%
\end{itemize}
\end{itemize}
Hence the $\yma_n$ are the Yang-Mills amplitudes,
by the recursive characterization in \cite{nr}.
%%%%%%%%%%%%%%%%%%%%%%%%%%%%%%%%%%%%%%%%%%%%%%%%%%%%%%%%%%%%%%%%%%%%%%%%
\subsection{Syntactic kinematic algebra}\label{sec:ska}

A slightly different Feynman expansion,
using \eqref{eq:lie} at each vertex, also produces the Yang-Mills amplitudes $\yma_n$.
Namely, for every $n \geq 2$ set
\[
\widetilde{\yma}_n = \textstyle\sum_{T \in \mathcal{T}_n} \widetilde{\yma}_{n,T'} \otimes c_{T'}
\;\;:\;\;
 H^1(\apw \otimes \cx)^{\otimes n} \pd H^1(\apw \otimes \cx)
\]
where
$\widetilde{\yma}_{n,T'} : H^1(\apw)^{\otimes n} \pd H^1(\apw)$
is defined by decorating $T'$ as follows:
\[
\begin{tabular}{c | c l}
tree element & \multicolumn{2}{c}{operation} \\
\hline
vertex & gLa product & \xxx{15}{\Apw^1 \otimes \Apw^1}{\to \Apw^1}\\
internal line & $1/k^2$ & \xxx{15}{\Apw^1}{\pd \Apw^1}\\
input & $\imap i$ & \xxx{15}{H^1(\apw)}{\pd \Apw^1}\\
the output & $p \pmap$ & \xxx{15}{\Apw^1}{\pd H^1(\apw)}
\end{tabular}
\]
e.g.~$\widetilde{\yma}_{3,(1(23))}(x,y,z)
      = p \pmap [\imap i x, (1/k^2) [\imap i y,\imap i z]]$.
Note that $\Apw^1$ is an ordinary Lie algebra.
%%%%%%%%%%%%%%%%%%%%%%%%%%%%%%%%%%%%%%%%%%%%%%%%
\begin{lemma}[Equivalence] \label{lemma:ska} For every $n \geq 2$,
\[
\widetilde{\yma}_n \;=\; (-1)^{n-1} (\chi \otimes \one) \yma_n
\]
where $\chi : H^2(\apw) \pd H^1(\apw)$
is the canonical isomorphism, see \eqref{eq:h2121}.
\end{lemma}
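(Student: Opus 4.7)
The plan is a tree-by-tree comparison: for each $T \in \mathcal{T}_n$ with planar embedding $T'$, I will show
\[
\widetilde{\yma}_{n,T'} \;=\; (-1)^{n-1}\,\chi\,\yma_{n,T'},
\]
so that the lemma follows upon summation against the common color factor $c_{T'}$. The key algebraic input is that, by Theorem~\ref{theorem:rect}, $(A, h_A)$ is a genuine BV algebra, so the Gerstenhaber bracket of $A$ coincides with the derived bracket of $h_A$:
\[
[x,y] \;=\; (-1)^{|x|}\bigl(h_A(xy) - h_A(x)\,y - (-1)^{|x|} x\,h_A(y)\bigr).
\]
Specialized to $|x|=|y|=1$ and $x,y \in \ker h_A$, this collapses to $[x,y] = -h_A(xy)$, which is the only comparison identity I need at a vertex.

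First I verify that every intermediate output of both tree expansions lies in $\ker h_A$. Each input $\imap i x$ satisfies $h_A \imap i x = \imap h i x = 0$ by \eqref{eq:uu1x} combined with $h i = 0$, see the $h_A$-identities in Theorem~\ref{theorem:rect}. In the $\yma_n$ scheme, an internal-line output has the form $h_A(\cdot)/k^2$, hence sits in $\ker h_A$ by $h_A^2 = 0$ from \eqref{eq:strdaha}. In the $\widetilde{\yma}_n$ scheme, an output has the form $[\cdot,\cdot]/k^2$, whose kernel property reduces again, via the previous identity, to $h_A^2=0$.

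Next I run an induction on subtree size. At any non-root vertex with inputs $x, y \in \ker h_A \cap \Apw^1$, the $\widetilde{\yma}_n$-scheme produces $[x,y]/k^2 = -h_A(xy)/k^2$, one sign flip relative to the $\yma_n$-output $h_A(xy)/k^2$. Propagating this multiplicatively shows that every subtree of $T$ with $m$ leaves has $\widetilde{\yma}_n$-output equal to $(-1)^{m-1}$ times its $\yma_n$-output. At the overall root, with children of sizes $n_1 + n_2 = n$, accumulating the signs from the two children together with the final $[\cdot,\cdot] = -h_A(\cdot\,\cdot)$ conversion yields
\[
\widetilde{\yma}_{n,T'} \;=\; (-1)^{n-1}\,p\pmap\,h_A(w),
\]
where $w \in \Apw^2$ is the input to the root in the $\yma_n$-scheme, i.e.~$\yma_{n,T'} = p\pmap(w)$.

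The remaining step is to identify $p\pmap h_A$ with $\chi p\pmap$ on the $k^2=0$ sector of $\Apw^2$. By \eqref{eq:uu1x}, $p\pmap h_A = p h \pmap$; on the other hand, $\chi$ is by construction the degree $-1$ operator on $H(\apw)$ induced by $h$, which is a chain map where $k^2 = 0$ since $d h + h d = \Box$. A short check using the contraction identities of Lemma~\ref{lemma:rcc} and $p d = 0$ gives $\chi p = p h$ on $\apw$ restricted to that sector, hence $\chi p \pmap = p h \pmap = p\pmap h_A$. Combining this with the previous display proves the lemma. The main obstacle I anticipate is the careful sign-bookkeeping across the tree in the induction; by contrast, the BV relation at each vertex and the commutation $\chi p\pmap = p\pmap h_A$ are essentially formal consequences of Theorem~\ref{theorem:rect} and the contraction equations \eqref{eq:uu1}.
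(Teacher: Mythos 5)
Your use of the BV relation $[x,y]=-h_A(xy)$ for $x,y\in\ker h_A\cap\Apw^1$, the verification that all intermediate outputs lie in $\ker h_A$, and the resulting count of $n-1$ sign flips all match the paper's argument. The gap is in your last step, and it is exactly the point the paper is careful about. You assert that $\chi p = p h$ holds on the $k^2=0$ sector of $\apw^2$, so that $p\pmap h_A = p h\pmap = \chi p \pmap$ can be applied to each tree separately, yielding the tree-by-tree identity $\widetilde{\yma}_{n,T'} = (-1)^{n-1}\chi\,\yma_{n,T'}$. But $\chi$ is the map \emph{induced on homology} by $h$: the relation $p(hw)=\chi(p(w))$ is only guaranteed when $w$ is a cycle, $dw=0$ (then $hw$ is also a cycle by $dh+hd=0$ on $k^2=0$, and both sides are $[hw]$). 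On $\onepw^2$ with $k^2=0$ the kernel of $d$ is a proper subspace, and $p$ is an essentially arbitrary chain map to homology, so $ph=\chi p$ fails on non-cycles. The output of a single tree (the element $\pmap(w)$ fed into the final $ph$) is in general \emph{not} a cycle; only the sum over all of $\mathcal{T}_n$ lands in $\ker d$, and this is not a formal consequence of the contraction equations but a property of the minimal-model construction (the cancellation lemma of \cite{nr}). Hence your claimed tree-by-tree equality is unjustified and in general false.

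The fix is to postpone the $ph \leadsto \chi p$ replacement until after summing over trees: everything up to $\widetilde{\yma}_n = (-1)^{n-1}(ph\otimes\one)\yma'_n$ (where $\yma'_n$ omits the final $p$) goes through exactly as you argue, and then one invokes $(d\otimes\one)\yma'_n=0$ for the \emph{sum} to conclude that $ph$ may be replaced by $\chi p$. With that one correction your argument coincides with the paper's proof.
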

\begin{proof}
Recall that $(\Apw,h_A)$ is a BV algebra, see \eqref{eq:unf}.
Hence for all $x,y \in \Apw^1$:
\begin{equation}\label{eq:rewr}
[x,y] = - h_A(xy) + h_A(x)y - xh_A(y)
\end{equation}
Rewrite each vertex in $\widetilde{\yma}_{n,T'}$ using \eqref{eq:rewr}
and multiply out, which produces $3^{n-1}$ terms.
All terms but one vanish because
$h_A \imap i = \imap h i = 0$ on $H^1(\apw)$
by \eqref{eq:uu1x} and \eqref{eq:kih};
and because $h_A^2 = 0$ by \eqref{eq:strdaha}.
The only term that survives is the one that uses $(x,y) \mapsto -h_A(xy)$ at all vertices. 
Note that $p \pmap h_A = p h \pmap$
by \eqref{eq:uu1x}. Summing over all trees $T \in \mathcal{T}_n$ gives
\begin{equation}\label{eq:hwkshs}
\widetilde{\yma}_n = (-1)^{n-1} (ph \otimes \one)\yma'_n
\end{equation}
where $\yma'_n : H^1(\apw \otimes \cx)^{\otimes n} \pd \apw^2 \otimes \cx$
is defined like $\yma_n$ but without the final $p$ in all trees;
in particular $(p\otimes \one)\yma'_n = \yma_n$.
But the image of $\yma'_n$ is in the kernel of the differential,
meaning $(d \otimes \one) \yma'_n = 0$\footnote{%
First use $d\pmap = \pmap d_A$, see \eqref{eq:uu1x}.
Then this is a property of the construction of $\yma'_n$,
not of contributions from individual trees but only of the sum over all trees,
see the cancellation lemma in \cite{nr}.};
and $p$ applied to a plane wave with momentum $k$ is nonzero only if $k^2=0$;
therefore \eqref{eq:h2121}
implies that one can replace $ph$ by $\chi p$ in \eqref{eq:hwkshs}.
\qed\end{proof}

%%%%%%%%%%%%%%%%%%%%%%%%%%%%%%%%%%%%%%%%%%%%%%%%%%%%%%%%%%%%%%%%%%%
\newcommand{\hstar}{\tc{scolor}{{\star}}}
%bas=Map[d,Subsets[{0,1,2,3}]];
%hodge[d[i_/;DuplicateFreeQ[i]]]:=With[{j=Complement[{0,1,2,3},i]},If[MemberQ[i,0],-1,1]*Signature[Join[i,j]]*d[j]];
%hodge[a_Plus]:=Map[hodge,a];
%hodge[0]=0;
%hodge[a_*d[i_]]:=a*hodge[d[i]];
%d[i_List/;Not[OrderedQ[i]]]:=Signature[i]*d[Sort[i]];
%d[i_List/;Not[DuplicateFreeQ[i]]]:=0;
%ddR[f_*d[i_]]:=Sum[D[f,x[j]]*d[Join[{j},i]],{j,0,3}];
%ddR[a_Plus]:=Map[ddR,a];
%ddR[0]=0;
%LAP[f_*d[i_]]:=(-D[f,{x[0],2}]+D[f,{x[1],2}]+D[f,{x[2],2}]+D[f,{x[3],2}])*d[i];
%LAP[a_Plus]:=Map[LAP,a];
%LAP[0]=0;
%deltadR[x_]:=hodge[ddR[hodge[x]]];
%lapR[x_]:=deltadR[ddR[x]]+ddR[deltadR[x]];
%F=f[x[0],x[1],x[2],x[3]];
%pi[a_]:=1/2*(a-I*hodge[a]);
%Map[(lapR[F*#]+LAP[F*#]//Expand)&,bas]
%Map[(hodge[hodge[#]]+#//Expand)&,bas]
%Map[(hodge[hodge[#]]-#//Expand)&,bas]
%With[{x=F*d[{}]},-deltadR[ddR[F*d[{}]]]-LAP[x]]//Expand
%With[{x=F*d[{1}]},ddR[-deltadR[x]]-2*deltadR[pi[ddR[x]]]-LAP[x]]//ExpandAll
%With[{x=F*pi[d[{1,2}]]},pi[ddR[-2*deltadR[x]]]-LAP[x]]//ExpandAll
%With[{x=F*d[{0,1,2,3}]},ddR[-deltadR[x]]-LAP[x]]//ExpandAll
%With[{x=F*d[{1,2,3}]},pi[ddR[2*I*hodge[x]]]-2*pi[deltadR[x]]]//Expand
%With[{x=F*d[{1,2,3}]},ddR[-2*pi[deltadR[x]]]-deltadR[ddR[x]]-LAP[x]]//ExpandAll
%With[{x=F*pi[d[{1,2}]]},-2*deltadR[x]+2*I*hodge[ddR[x]]]//ExpandAll
%With[{x=F*pi[d[{1,2}]]},-2*pi[deltadR[ddR[x]]]-LAP[x]]//ExpandAll
%%%%%%%%%%%%%%%%%%%%%%%%%%%%%%%%%%%%%%%%%%%%%%%%%%%%%%%%%%%%%%%%
%%%%%%%%%%%%%%%%%%%%%%%%%%%%%%%%%%%%%%%%%%%%%%%%%%%%%%%%
\newcommand{\axa}{\ax_{\#}}
\newcommand{\da}{d_{\#}} 
\newcommand{\ha}{h_{\#}} 
\section{Definition of $\ax$ and existence of $h$}\label{sec:defs}
\emph{Definition of $\ax$.}
Let $\Lambda$
be the exterior $\C$-algebra generated by $e^0,e^1,e^2,e^3$
in degree one.
The de Rham dgca on $\R^4$ is
$\Omega = C^\infty(\R^4) \otimes \Lambda$
with differential $\ddr = \p_\mu \otimes e^\mu$
where by a small abuse of notation,
$e^\mu$ is also used as a multiplication operator from the left.
Let $\C \oplus \C \eps$ be the dgca where $\eps$
is an element of degree $-1$;
the product is given by $\eps^2 = 0$;
and the differential is given by
$1 \mapsto 0$ and $\eps \mapsto 1$.
Consider the tensor product of dgca
\[
\axa\;=\;\Omega \otimes (\C \oplus \C \eps) \;\simeq\; \Omega \oplus \Omega\eps
\]
This means that differential and product on $\axa$ are respectively given by
\begin{align*}
\da:\qquad x \oplus u \eps
& \mapsto (\ddr x + (-1)^u u) \oplus (\ddr u) \eps\\
(x \oplus u \eps,\; y \oplus v \eps) & \mapsto
xy \oplus (xv + (-1)^y uy) \eps
\end{align*}
for all $x,y,u,v\in \Omega$. 
In the diagram for $\axa$ below,
$\Omega$ occupies the upper row,
$\Omega \eps$ occupies the lower row with $\eps$ suppressed.
With this understanding, the differential $\da$ is as follows,
where the six columns correspond to degrees $-1\ldots 4$ of $\axa$ respectively:
\newcommand{\xx}{2.4}
\newcommand{\yy}{-1.6}
\[
\vcenter{\hbox{%
\begin{tikzpicture}
  \node (a0) at (0,0) {$\Omega^0$};
  \node (a1) at (\xx,0) {$\Omega^1$};
  \node (a2) at (2*\xx,0) {$\Omega^2$};
  \node (a3) at (3*\xx,0) {$\Omega^3$};
  \node (a4) at (4*\xx,0) {$\Omega^4$};
  \node (b0) at (-\xx,\yy) {$\Omega^0$};
  \node (b1) at (0,\yy) {$\Omega^1$};
  \node (b2) at (\xx,\yy) {$\Omega^2$};
  \node (b3) at (2*\xx,\yy) {$\Omega^3$};
  \node (b4) at (3*\xx,\yy) {$\Omega^4$};
  \draw[->] (a0) to node [above] {$\ddr$} (a1);
  \draw[->] (a1) to node [above] {$\ddr$} (a2);
  \draw[->] (a2) to node [above] {$\ddr$} (a3);
  \draw[->] (a3) to node [above] {$\ddr$} (a4);
  \draw[->] (b0) to node [above] {$\ddr$} (b1);
  \draw[->] (b1) to node [above] {$\ddr$} (b2);
  \draw[->] (b2) to node [above] {$\ddr$} (b3);
  \draw[->] (b3) to node [above] {$\ddr$} (b4);
  \draw[->] (b0) to node [above,xshift=-4] {$\one$} (a0);
  \draw[->] (b1) to node [above,xshift=-6] {$-\one$} (a1);
  \draw[->] (b2) to node [above,xshift=-4] {$\one$} (a2);
  \draw[->] (b3) to node [above,xshift=-6] {$-\one$} (a3);
  \draw[->] (b4) to node [above,xshift=-4] {$\one$} (a4);
\end{tikzpicture}}}
\]
Set $\Omega^2_{\pm} = C^\infty(\R^4) \otimes \Lambda^2_{\pm}$
where, by definition,
\begin{align}
\Lambda^2_- & = \C(e^0e^1-ie^2e^3) 
\oplus \C (e^0e^2-ie^3e^1) \oplus \C(e^0e^3-ie^1e^2)\\
\label{eq:lambda2plus}
\Lambda^2_+ & = \C(e^0e^1+ie^2e^3) 
\oplus \C (e^0e^2+ie^3e^1) \oplus \C(e^0e^3+ie^1e^2)
\end{align}
Using in particular
$\Omega_-^2 \Omega_+^2 = 0$ one finds,
see \cite{zeitlin,costello,nr} and Appendix \ref{app:bv} for more details:
\begin{lemma}[Definition of $\ax$]
The following is a dgca subquotient of $\axa$:
\begin{equation}\label{eq:defa}
\ax = \frac{\rule{69pt}{0pt}\Omega \oplus (\Omega^2_+ \oplus \Omega^3 \oplus \Omega^4)\eps}{%
             (\Omega^2_- \oplus \Omega^3 \oplus \Omega^4) \oplus 0 \eps \rule{69pt}{0pt}}
\end{equation}
This means that the numerator is a sub-dgca of $\axa$,
and the denominator is a dgca ideal in the numerator.
As a module,
$\ax \simeq
(\Omega^0 \oplus \Omega^1 \oplus \Omega^2_+)
\oplus (\Omega^2_+ \oplus \Omega^3 \oplus \Omega^4)\eps$;
as a complex,
$\ax$ coincides with \eqref{eq:cpx};
and as a dgca, $\ax$ has the structure announced in Section \ref{sec:intro}.
\end{lemma}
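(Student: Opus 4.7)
The plan is to verify, in turn, that the numerator of \eqref{eq:defa} is closed under $\da$ and the product of $\axa$; that the denominator is a two-sided ideal closed under $\da$ inside the numerator; and then to read off both the induced differential and the induced product on the subquotient, matching them with \eqref{eq:cpx} and the rules announced in Section \ref{sec:intro}.

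First I would check the numerator $\Omega \oplus (\Omega^2_+ \oplus \Omega^3 \oplus \Omega^4)\eps$ is a sub-dgca. Closure under $\da$ is immediate from the explicit formulas displayed for $\axa$: the non-$\eps$ component $\ddr x + (-1)^{|u|} u$ lies in $\Omega$ automatically, and $\ddr$ sends $\Omega^2_+ \oplus \Omega^3 \oplus \Omega^4$ into $\Omega^3 \oplus \Omega^4$. For the product $(x \oplus u\eps)(y \oplus v\eps) = xy \oplus (xv + (-1)^{|y|}uy)\eps$, the upper component $xy$ is in $\Omega$ automatically, while the $\eps$-component has $\Omega$-degree at least two; contributions of degree $\geq 3$ lie in $\Omega^3 \oplus \Omega^4$, and the only degree-$2$ contribution forces one factor to be a scalar in $\Omega^0$, so the $C^\infty(\R^4)$-action preserves the self-dual summand $\Omega^2_+$.

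Next I would check the denominator $(\Omega^2_- \oplus \Omega^3 \oplus \Omega^4) \oplus 0\eps$ is a two-sided dgca ideal. Closure under $\da$ is clear since $\ddr$ maps $\Omega^2_-$ to $\Omega^3$, $\Omega^3$ to $\Omega^4$, and $\Omega^4$ to zero. For the ideal property, take $z$ in the denominator and $w$ in the numerator. The non-$\eps$ part of $zw$ has $\Omega$-degree at least two, with the degree-$2$ contribution forcing a scalar factor and hence landing in $\Omega^2_-$, while all higher-degree contributions land in $\Omega^3 \oplus \Omega^4$. The $\eps$-part of $zw$ is $xv$ with $x \in \Omega^{\geq 2}$ and $v \in \Omega^2_+ \oplus \Omega^{\geq 3}$; this vanishes by degree counting except for the top-degree piece $\Omega^2_- \cdot \Omega^2_+ \subset \Omega^4$, which itself vanishes by orthogonality of self-dual and anti-self-dual $2$-forms under the wedge pairing. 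This orthogonality is the one substantive input of the whole argument and is readily confirmed on the basis \eqref{eq:lambda2plus}. The same argument on the other side handles $wz$.

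Finally I would read off the induced structure on the subquotient. As a graded module the claim is visible by inspection. For the differential, the general formula for $\da$ reduces modulo the denominator to: $\ddr$ on $\Omega^0$; $\pi_+\ddr$ on $\Omega^1$, since the $\Omega^2_-$-part of $\ddr x$ dies in the quotient; on $\Omega^2_+\eps$ both the identity map into $\Omega^2_+$ (with sign $(-1)^{|u|} = +1$ since $|u| = 2$) and $\ddr$ into $\Omega^3\eps$; and $\ddr$ on $\Omega^3\eps$ and $\Omega^4\eps$, with any spurious upper-row contributions absorbed into the denominator. This reproduces \eqref{eq:cpx}. The multiplicative rules of Section \ref{sec:intro} follow the same way from the product formula: upper times upper is wedge followed by $\pi_+$ wherever a degree-$2$ component appears, upper times lower is wedge into the $\eps$-slot, and lower times lower vanishes because $\eps^2 = 0$.
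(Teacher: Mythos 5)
Your proposal is correct and follows exactly the route the paper intends: the paper states the lemma with only the hint that the one substantive input is $\Omega^2_-\Omega^2_+=0$ (deferring details to the references and Appendix \ref{app:bv}), and your verification supplies precisely the degree-counting and closure checks that this hint presupposes, including the correct reading-off of the induced differential and product on the subquotient. Nothing to add.
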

\emph{Existence of $h$.}
Note that
$\Lambda^2_- = \{ x \in \Lambda^2 \mid \hstar x = -ix \}$
and 
$\Lambda^2_+ = \{ x \in \Lambda^2 \mid \hstar x = ix \}$,
provided one picks the orientation appropriately,
where $\hstar : \Lambda^i \to \Lambda^{4-i}$
is the Hodge star operator for the bilinear form 
$\eta = \text{diag}(-1,1,1,1)$
on $\C e^0 \oplus \ldots \oplus \C e^3$.
Note that $\hstar^2 = 1$ on odd degree elements,
$\hstar^2 = -1$ on even degree elements.
Also write $\hstar: \Omega^i \to \Omega^{4-i}$.
Denote by
$\deldr = \hstar \ddr \hstar : \Omega^i \to \Omega^{i-1}$
the `adjoint' of $\ddr$.
Let $i_\mu : \Lambda^i \to \Lambda^{i-1}$
be the interior multiplications using a dual basis of the $e^\mu$,
so $e^\mu e^\nu + e^\nu e^\mu = 0$
and $e^\mu i_\nu + i_\nu e^\mu = \delta^\mu_\nu$
and $i_\mu i_\nu + i_\nu i_\mu = 0$.
Then equivalently
\[
\deldr = -\eta^{\mu\nu} \p_\mu \otimes i_\nu
\]
and therefore $\ddr \deldr + \deldr \ddr = -\Box$
where $\Box = \eta^{\mu\nu}\p_\mu\p_\nu$.
Define
\[
\ha = -\deldr \otimes \one
\;\;:\;\; \axa^i \to \axa^{i-1}
\]
Then 
$\ha^2 = 0$ and $\da \ha + \ha \da = \Box$. The map $\ha$ is explicitly given as follows:
\[
\vcenter{\hbox{%
\begin{tikzpicture}
  \node (a0) at (0,0) {$\Omega^0$};
  \node (a1) at (\xx,0) {$\Omega^1$};
  \node (a2) at (2*\xx,0) {$\Omega^2$};
  \node (a3) at (3*\xx,0) {$\Omega^3$};
  \node (a4) at (4*\xx,0) {$\Omega^4$};
  \node (b0) at (-\xx,\yy) {$\Omega^0$};
  \node (b1) at (0,\yy) {$\Omega^1$};
  \node (b2) at (\xx,\yy) {$\Omega^2$};
  \node (b3) at (2*\xx,\yy) {$\Omega^3$};
  \node (b4) at (3*\xx,\yy) {$\Omega^4$};
  \draw[->] (a1) to node [above] {$-\deldr$} (a0);
  \draw[->] (a2) to node [above] {$-\deldr$} (a1);
  \draw[->] (a3) to node [above] {$-\deldr$} (a2);
  \draw[->] (a4) to node [above] {$-\deldr$} (a3);
  \draw[->] (b1) to node [above] {$-\deldr$} (b0);
  \draw[->] (b2) to node [above] {$-\deldr$} (b1);
  \draw[->] (b3) to node [above] {$-\deldr$} (b2);
  \draw[->] (b4) to node [above] {$-\deldr$} (b3);
\end{tikzpicture}}}
\]
%%%%%%%%%%%%%%%%%%%%%%%%%%%%%%%%%%%%%%%%%%%%%%%%%%%%%%%%%%%%%%%%%%%%
While $\ha$ does not descend to the subquotient $\ax$, a modification of it does:
\begin{lemma}[Existence of $h$] \label{lemma:exh}
Given $\C$-linear maps
$z: \Lambda^2 \to \Lambda^0$
and $w: \Lambda^4 \to \Lambda^2$
that satisfy\footnote{%
These conditions are equivalent to $\Lambda^2_- \subset  \ker z$
and $\image w \subset \Lambda^2_+$ respectively.}
$z\hstar = iz$ and $\hstar w = iw$,
define
$X: \axa^i \to \axa^{i-2}$ by
\[
\vcenter{\hbox{%
\begin{tikzpicture}
  \node (a0) at (0,0) {$\Omega^0$};
  \node (a1) at (\xx,0) {$\Omega^1$};
  \node (a2) at (2*\xx,0) {$\Omega^2$};
  \node (a3) at (3*\xx,0) {$\Omega^3$};
  \node (a4) at (4*\xx,0) {$\Omega^4$};
  \node (b0) at (-\xx,\yy) {$\Omega^0$};
  \node (b1) at (0,\yy) {$\Omega^1$};
  \node (b2) at (\xx,\yy) {$\Omega^2$};
  \node (b3) at (2*\xx,\yy) {$\Omega^3$};
  \node (b4) at (3*\xx,\yy) {$\Omega^4$};
  \draw[->,bend right=17] (a2) to node [above] {$z$} (a0);
  \draw[->,bend right=17] (a3) to node [above] {$i\hstar$} (a1);
  \draw[->,bend right=17] (b3) to node [above] {$-i\hstar$} (b1);
  \draw[->,bend right=17] (b4) to node [above] {$w$} (b2);
\end{tikzpicture}}}
\]
Then $\ha + [\da,X] : \axa^i \to \axa^{i-1}$
descends to the subquotient $\ax$, meaning it leaves numerator
and denominator in \eqref{eq:defa} invariant.
The induced map $h: \ax^i \to \ax^{i-1}$
satisfies $dh+hd = \Box$.
If in addition\footnote{%
Clearly such $z$ and $w$ exist.
Necessarily $z,w \neq 0$,
so this $h$ is not Lorentz invariant.}
$zw = 2i\hstar$ 
then 
$h^2 = 0$
and $h$ satisfies all requirements in Section \ref{sec:intro}.
\end{lemma}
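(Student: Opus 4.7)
The plan is to work on the ambient dgca $\axa$ with the modified map $h' := \ha + [\da, X]$ and then descend to the subquotient $\ax$. Writing the correction as a graded commutator is what makes the Box relation automatic: since $\da^2 = 0$ and $[\da, X]$ has degree $-1$, one has $\{\da, [\da, X]\} = \da^2 X - X \da^2 = 0$, hence
\[
\{\da, h'\} \;=\; \{\da, \ha\} \;=\; \Box
\]
on all of $\axa$. Once $h'$ is shown to descend, the induced $h$ on $\ax$ automatically satisfies $dh + hd = \Box$.

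Descent amounts to verifying $h'(N) \subseteq N$ and $h'(D) \subseteq D$, where $N$ and $D$ are the numerator and denominator in \eqref{eq:defa}. I would do this by case analysis on each summand. The hypotheses $z\hstar = iz$ and $\hstar w = iw$ are precisely what make the diagonal parts of $X$ land in the correct subspaces (so that $z$ annihilates $\Omega^2_-$, and the image of $w$ lands in $\Omega^2_+$), which is what one needs at the $\Omega^2 = \Omega^2_+ \oplus \Omega^2_-$ splittings. The characteristic check: for $x \in \Omega^3$ in the upper row (part of $D$) one computes $h'(x) = -\deldr x + i \ddr \hstar x \in \Omega^2$, and then using only $\deldr = \hstar \ddr \hstar$ together with $\hstar^2 = \pm 1$ one finds $\hstar h'(x) = -i h'(x)$, so $h'(x) \in \Omega^2_- \subseteq D$. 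The remaining boundary cases are analogous.

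The main technical step is $h^2 = 0$ under the hypothesis $zw = 2i\hstar$. Since $\ha^2 = 0$, one has $(h')^2 = \{\ha, [\da, X]\} + [\da, X]^2$, which I would expand summand by summand in terms of $\deldr$, $\ddr$, $\hstar$, $z$, $w$ and the vertical $\pm \one$ pieces of $\da$, aiming to show $(h')^2(N) \subseteq D$. The condition $zw = 2i\hstar$ is exactly what is needed: wherever a composite path through both $w$ and $z$ is possible — as happens for elements in the lower-row $\Omega^4$ summand — it produces $zw$, which must cancel the parallel $\hstar$-term coming from the $\ha$-piece. I expect this to be the main obstacle: the algebra is shallow, but demands disciplined bookkeeping of signs, of the $\pm \one$ vertical arrows in $\da$, and of the alternating sign of $\hstar^2$ on forms of different parity.

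Finally, the other requirements of Section \ref{sec:intro} fall out of the construction. All building blocks $\ddr$, $\deldr$, $\hstar$, $z$, $w$ have constant coefficients, so translation invariance is manifest; the induced horizontal arrows are combinations of $-\deldr$ with terms like $i \ddr \hstar$, hence first-order partial differential operators; and the diagonal arrows come from $z$, $w$, and $\pm i \hstar$, each of which is $C^\infty(\R^4)$-linear because it acts pointwise on forms.
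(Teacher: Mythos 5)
Your proposal is correct and follows essentially the same route as the paper: the relation $dh+hd=\Box$ is inherited for free because $[\da,[\da,X]]=0$, while descent and $h^2=0$ are reduced to a finite case-by-case direct calculation on the relevant summands (the paper lists exactly the cases $\Omega^2_+\eps$, $\Omega^3\eps$, $\Omega^2_-$, $\Omega^3$ for descent and $\Omega^2_+$, $\Omega^3\eps$, $\Omega^4\eps$ for $h^2$, which is your criterion $(h')^2(N)\subseteq D$ made explicit). Your sample check for $\Omega^3$ in the upper row, and your identification of where $zw=2i\hstar$ enters, are both accurate.
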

%%%%%%%%%%%%%%%%%%%%%%%%%%%%%%%%%%%%%%%%%%%%%%%%%%%%%%%%%
\begin{proof}
Recall here that $\Omega\eps$
are things in the lower row in the diagrams.
Descent amounts
to checking that $\ha + [\da,X]$
maps
$\Omega^2_+\eps \to \Omega^0 \oplus 0\eps$,
 $\Omega^3 \eps \to \Omega^1 \oplus \Omega^2_+ \eps$,
 $\Omega^2_- \to  0 \oplus 0\eps$,
and $\Omega^3 \to \Omega^2_- \oplus 0\eps$.
For $h^2 = 0$ it suffices, since $h^2$ has degree $-2$, to check that
$h^2$ annihilates $\Omega^2_+$, $\Omega^3\eps$, $\Omega^4\eps$.
Both these things are by direct calculation.
\qed\end{proof}

One can actually check that all $h$ that satisfy the requirements of 
Section \ref{sec:intro} are as in Lemma \ref{lemma:exh}.
Nevertheless, in this paper the specific form of $h$ in Lemma \ref{lemma:exh} is never used.

\newcommand{\onepw}{\apx{k}}
\begin{lemma}[Plane wave homology] \label{lemma:bdy}
Let $\onepw \subset \ax$ be the subspace of 
plane waves with momentum $k \in \C^4-0$.
The homology of $d|_{\onepw}$ is as follows.
If $k^2 \neq 0$ then $H(\onepw) = 0$. If $k^2 = 0$ then
$H^0(\onepw) = H^3(\onepw) = 0$
and $H^1(\onepw) \simeq H^2(\onepw) \simeq \C^2$ and
\begin{equation}\label{eq:kih}
\begin{aligned}
   \ker(d: \onepw^1 \to \onepw^2) \; &\subset\; \ker(h: \onepw^1 \to \onepw^0)\\
   \image(h: \onepw^3 \to \onepw^2) \; &\subset\; \image(d: \onepw^1 \to \onepw^2)
\end{aligned}
\end{equation}
Further, for $k^2 = 0$,
the map $h : \onepw^2 \to \onepw^1$ induces an isomorphism
\begin{equation} \label{eq:h2121}
H^2(\onepw) \to H^1(\onepw)
\end{equation}
and this isomorphism is canonical, independent of $h$.
\end{lemma}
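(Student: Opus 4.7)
First I would restrict everything to plane waves so that $\ddr$ becomes wedging with $k$. The complex $\onepw$ then has the concrete description
\[\onepw^0 = \Lambda^0,\ \onepw^1 = \Lambda^1 \oplus \Lambda^2_+,\ \onepw^2 = \Lambda^2_+ \oplus \Lambda^3,\ \onepw^3 = \Lambda^4,\]
with differentials built from $k\wedge$ on $\Lambda^\bullet$ together with the identity embedding of the lower $\Lambda^2_+$ into the upper one. When $k^2 \neq 0$, the identity $dh+hd=\Box$ restricts on plane waves to $dh+hd=k^2\one$, so $h/k^2$ is a contracting homotopy and $H(\onepw)=0$.

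Now suppose $k^2=0$ and $k\neq 0$. Koszul exactness of $(\Lambda^\bullet, k\wedge)$ immediately gives $H^0(\onepw)=H^3(\onepw)=0$, since multiplication by $k$ is injective on $\Lambda^0$ and surjective from $\Lambda^3$ onto $\Lambda^4$. For the middle cohomology I would exploit the presentation $\ax = E/N$ from \eqref{eq:defa} and the associated short exact sequence of plane-wave complexes $0\to N\to E\to \onepw\to 0$. The ambient complex is the mapping cone of $\one_{\Omega}$ restricted to plane waves, hence acyclic, which reduces $H^\bullet(E)$ to the two-term complex $\axa/E = (\Lambda^1 \xrightarrow{\pi_-k\wedge} \Lambda^2_-)$; likewise $H^\bullet(N)$ comes from the truncated Koszul complex $\Lambda^2_- \to \Lambda^3 \to \Lambda^4$. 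Elementary facts about $\Lambda^2_\pm$ for a null $k$ — in particular $\dim \pi_-(k\Lambda^1) = 2$ and $\dim(k\Lambda^1 \cap \Lambda^2_-) = 1$ inside $\Lambda^2_-$ — give $\dim H^1(E) = \dim H^2(E) = \dim H^2(N) = \dim H^3(N) = 1$. Crucially, the connecting map $H^2(N)\to H^2(E)$ vanishes, because $k\Lambda^1 \cap \Lambda^2_- \subset \pi_-(k\Lambda^1)$; the long exact sequence then splits into two short exact sequences, each producing a two-dimensional middle term, so $H^1(\onepw) \simeq H^2(\onepw) \simeq \C^2$.

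Both containments in \eqref{eq:kih} follow from $\{d, h\} = 0$ (since $\Box = k^2 = 0$) together with $H^0 = H^3 = 0$: if $dx = 0$ with $x\in\onepw^1$ then $d(hx) = -h(dx) = 0$ forces $k \cdot hx = 0$ and hence $hx = 0$ since $k\neq 0$; and for $y\in\onepw^3$, surjectivity of $d^2$ writes $y=dx$, so $hy = -d(hx) \in \image(d\colon\onepw^1\to\onepw^2)$. For the isomorphism \eqref{eq:h2121}, anticommutation makes $h_*\colon H^2 \to H^1$ well-defined, and as both sides are two-dimensional it suffices to show injectivity; I would check this by direct computation on representatives, or more intrinsically by verifying that $h$ swaps the two-step filtrations $H^2(E) \hookrightarrow H^2(\onepw) \twoheadrightarrow H^3(N)$ and $H^1(E) \hookrightarrow H^1(\onepw) \twoheadrightarrow H^2(N)$ coming from the long exact sequence. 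The main obstacle is canonicity: two admissible $h$'s differ by a degree $-1$ operator anticommuting with $d$, and one must verify that any such difference induces the zero map on $H^2\to H^1$; this should reduce to an analysis of the auxiliary ambiguity in the maps $z, w$ of Lemma \ref{lemma:exh}, showing that their variation produces only null-homotopic corrections on null-momentum cohomology classes.
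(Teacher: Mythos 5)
Your computation of $H(\onepw)$ via the subquotient presentation \eqref{eq:defa} is a genuine alternative to the paper, which simply cites \cite{nr} for the homology; the long-exact-sequence argument is sound, and the key dimension counts for null $k\neq 0$ ($\dim(k\Lambda^1\cap\Lambda^2_{\pm})=1$, hence $\dim\pi_-(k\Lambda^1)=2$) are correct. Two small slips in your notation $\ax=E/N$: the quotient of the acyclic ambient complex by the numerator is the three-term complex $\Lambda^0\to\Lambda^1\to\Lambda^2_-$ in degrees $-1,0,1$ (this does not affect the two homologies you actually use); and the map $H^2(N)\to H^2(E)$ is induced by the inclusion, not a connecting map. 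Its vanishing holds for an even more direct reason than the containment you invoke: a class in $H^2(N)$ is represented by $\omega\in\Lambda^2_-\cap k\Lambda^1$, so $\omega=k\wedge\ell$ is already exact in $E$ via the cochain $\ell\oplus 0\eps$. The argument for \eqref{eq:kih} coincides with the paper's.

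The last paragraph, however, is a plan rather than a proof, and that is where the remaining content sits. Injectivity of $h_\ast:H^2(\onepw)\to H^1(\onepw)$ is not actually checked, and your route to canonicity --- classify all admissible $h$ through the $z,w$ ambiguity of Lemma \ref{lemma:exh} and show each variation acts trivially on homology --- is workable but laborious and is not carried out; note that a degree $-1$ operator anticommuting with $d$ can perfectly well induce a nonzero map on homology, so anticommutation alone cannot settle canonicity. The paper's (also only sketched) mechanism handles both points at once: expand $d$ and $h$ to first order in the momentum transverse to the null cone. Writing $d_\ell=\ell^\mu\,\partial d/\partial k_\mu$, the first-order part of $dh+hd=k^2\one$ at $k^2=0$, evaluated on a $d$-closed $x\in\onepw^1$ (for which $hx=0$ by \eqref{eq:kih}, so no circularity), gives $[h(d_\ell x)]=2(k\cdot\ell)[x]$ in $H^1(\onepw)$. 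Since $d\,d_\ell+d_\ell\,d=0$, the operator $d_\ell$ induces $\overline{d_\ell}:H^1\to H^2$, and the identity $h_\ast\circ\overline{d_\ell}=2(k\cdot\ell)\one$ for any $\ell$ with $k\cdot\ell\neq 0$ shows simultaneously that $h_\ast$ is surjective, hence bijective since both spaces are $\C^2$, and that $h_\ast=2(k\cdot\ell)\,\overline{d_\ell}^{\,-1}$ is built from $d$ alone --- canonicity for free. I recommend replacing your final paragraph with this argument.
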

\begin{proof}
The homology is calculated in \cite{nr}.
Let $k^2=0$.
Then (i) $dh+hd=0$ on $\onepw$;
(ii) $\ker(d: \onepw^0 \to \onepw^1)=0$ because $H^0(\onepw)=0$ and $\onepw^{-1}=0$;
(iii) $\onepw^3 = \image(d: \onepw^2 \to \onepw^3)$ because $H^3(\onepw)=0$ and $\onepw^4=0$.
They imply \eqref{eq:kih}:
if $x \in \onepw^1$ and $dx=0$
then $0 = hdx = -dhx$ by (i) hence $hx=0$ by (ii);
if $x \in \onepw^3$ then $x = dy$ for $y \in \onepw^2$ by (iii)
hence $hx = hdy = -dhy$ by (i) so $hx$ is in the image of $d$.
By (i), $h$ induces a map \eqref{eq:h2121};
that this is an isomorphism follows by expanding
$d$ and $h$ to first order around the given momentum $k$ and using
\eqref{eq:hxs}, details omitted.
The proof does not use $h^2 = 0$.
\qed\end{proof}
%degs={1,1,0,1,1,1,1,2}; (* 1,1,i1,...,in with 0 <= i1...in <= 2 *)
%alldegs=Flatten[Transpose[{degs,degs+1}]];
%n=Length[alldegs];
%J={{0,0},{1,0}};
%d=ArrayFlatten[Table[If[i===j,If[i<=2,0*J,J],0],{i,1,Length[degs]},{j,1,Length[degs]}]];
%testdeg[i_,j_]:=(alldegs[[i]]===alldegs[[j]]-1);
%h=Array[If[testdeg[#1,#2],Unique[],0]&,{n,n}];
%hx=h/.First[Solve[Thread[Flatten[h.d+d.h]==0]]];
%With[{p=Flatten[Position[alldegs,1]]},{d[[;;,p]]//MatrixForm,hx[[;;,p]]//MatrixForm}]
%With[{p=Flatten[Position[alldegs,2]]},{d[[p,;;]]//MatrixForm,hx[[p,;;]]//MatrixForm}]

% ISOMORPHISM CLAIM
%degs={1,1,0,1,1,1,1,2};
%alldegs=Flatten[Transpose[{degs,degs+1}]];
%n=Length[alldegs];
%J={{0,0},{1,0}};
%d=ArrayFlatten[Table[If[i===j,If[i<=2,s*J,J],0],{i,1,Length[degs]},{j,1,Length[degs]}]];
%testdeg[i_,j_]:=(alldegs[[i]]===alldegs[[j]]-1);
%h=Array[If[testdeg[#1,#2],Unique[]+s*Unique[],0]&,{n,n}];
%h/.{s->0}/.First[Solve[Eliminate[Thread[Map[Last,Cases[Flatten[CoefficientRules[Flatten[h.d+d.h-s*IdentityMatrix[16]],s]],HoldPattern[{0|1}->_]]]==0],DeleteCases[Flatten[D[h,s]],0]]]]//MatrixForm

%%%%%%%%%%%%%%%%%%%%%%%%%%%%%%%%%%%%%%%%%%%%%%%%%%%%%%%%
\section{Definition of $\bv{\Qx}$-algebras}\label{sec:ginfMAIN}

Section \ref{sec:recapnew}
contains an account of $\ginf$- and $\bvinf$-algebras following \cite{tt},
but restricted to the special case in \cite[Remark 2.8]{tt},
\cite[Proposition 24]{gtv}.
In Section \ref{subsec:mod}
this definition is deformed
using a quadratic element in a Hopf algebra.
Section \ref{subsec:reduction}
is about a special case where all but certain operations are zero,
to be applied to Yang-Mills in Section \ref{sec:proofmain}.

\begin{remark}[Dualization]\label{remark:dualspaces}
Following \cite{tt}, the account below avoids the cofree Gerstenhaber coalgebra.
This requires the following clarification.
Given a graded vector space $V$,
the algebraic structures we define are collections of maps
$V^\ast \to (V^\ast)^{\otimes n}$
subject to certain quadratic axioms; here $V^\ast$ is the dual.
In the unlikely case that $\dim_\C V < \infty$ this is,
by taking duals, equivalent to collections of maps
$V^{\otimes n} \to V$ subject to the corresponding dual axioms.
Whether $V$ is finite-dimensional or not,
we agree that the proper way to interpret the definitions is
that \emph{one must dualize the axioms, rather than the maps,
           to get axioms for the collection of maps $V^{\otimes n} \to V$}.
 It is still sometimes convenient to say that we dualize the maps,
 but this is always meant formally and with the reservations in this remark.
\end{remark}

%%%%%%%%%%%%%%%%%%%%%%%%%%%%%%%%%%%%%%%%%%%%%%%%%%%%%%%%%%%%%%%%%%%%%%%%%%%%%%%%%%%%%%%

\subsection{Recap of ordinary $\ginf$- and $\bvinf$-algebras}\label{sec:recapnew}

\emph{Graded vector space.} The ground field is $\C$.
The constructions in this section require a graded vector space
 $V = \oplus_{i \in \Z} V^i$ as input data.
Associated objects such as $V \otimes V$
or the dual $V^\ast = \Hom_\C(V,\C)$
carry an induced grading;
a $\C$-linear map from one graded vector space
to another has degree $i$ iff it raises the degree by $i$,
thus $(V^\ast)^i \simeq (V^{-i})^\ast$.
\step
\emph{Gerstenhaber algebra.}
Briefly, a Gerstenhaber algebra
is a graded vector space $A$ that has two compatible structures:
\begin{itemize}
\item A graded commutative associative algebra or \emph{gca} structure $A^i \otimes A^j \to A^{i+j}$.
\item A degree $-1$ graded Lie algebra or \emph{gLa} structure $[-,-] : A^i \otimes A^j \to A^{i+j-1}$.
\end{itemize}
The gca multiplication is denoted by juxtaposition.
In full detail, the gca axioms, the gLa axioms and the compatibility axiom
say that the following must be zero for all $a,b,c \in A$:
\begin{subequations}\label{eq:gax}
\begin{align}
& ab - (-1)^{ab} ba\\
& (ab)c - a(bc) \displaybreak[0] \\
& [a,b] +(-1)^{(a+1)(b+1)} [b,a] \displaybreak[0] \\
& [a,[b,c]] + (-1)^{(a+1)(b+c)} [b,[c,a]] + (-1)^{(c+1)(a+b)} [c,[a,b]]\\
& [ab,c] - a[b,c] - (-1)^{ab}b[a,c]
\end{align}
\end{subequations}
A Gerstenhaber algebra may or may not have a unit.
\step
Equivalently,
a Gerstenhaber algebra is an algebra of the homology of the little disks operad.
The little disks operad is a sequence of topological spaces
indexed by an integer $n$,
closely related to configuration spaces of $n$ points in $\R^2$,
together with an operad structure,
whose singular homology neatly captures the Gerstenhaber algebra axioms, see \cite{ds}.
\step
\emph{Free Gerstenhaber algebra\footnote{%
The free Gerstenhaber algebra of $V^\ast$ comes in for the following reason.
In general, see \cite{dtt},
to every quadratic operad $\operad$ one can associate 
its Koszul dual cooperad $\operad^{\vee}$ \cite{gk}.
Some quadratic operads are Koszul \cite{gk}, an adjective.
For conceptual reasons \cite{dtt},
if $\operad$ is quadratic and Koszul,
a $\operadinf$-algebra on $V$ is defined to be a codifferential of degree $1$
on the cofree $\operad^{\vee}$-coalgebra cogenerated by $V$.
The Gerstenhaber operad $\gerst$
(axioms \eqref{eq:gax}) is quadratic and Koszul \cite{gj} (like Ass, Lie, Com).
It is Koszul self-dual (like Ass, unlike Lie, Com),
so, modulo degree shifts,
a $\gerst^{\vee}$-coalgebra is the same thing as a Gerstenhaber coalgebra
(operations and axioms linearly dual to \eqref{eq:gax}).
Hence a $\ginf$-algebra on $V$ is a codifferential of degree $1$
on the cofree Gerstenhaber coalgebra cogenerated by $V$;
this is formally the same thing as a differential of degree $1$
on the free Gerstenhaber algebra generated by $V^\ast$,
cf.~\cite[Proposition 4.2.14]{gk} and \cite{tt} and Remark \ref{remark:dualspaces}.}.}
Let $V$ be a graded vector space.
The free Gerstenhaber algebra of $V^\ast$ \emph{without} unit will be denoted
$GV^\ast$,
with a degree zero map $V^\ast \hookrightarrow GV^\ast$.
The free Gerstenhaber algebra $GV^\ast$ is given by the following construction (sketch):
\begin{itemize}
\item Terms:
Linear combinations of `Gerstenhaber words'
that in this context are build from
elements of $V^\ast$ combined using two formal operations:
the gca product to-be denoted by juxtaposition;
the gLa product to-be denoted by $[-,-]$.
The products have no properties yet,
except that all $\C$-linearities are already understood.
\item Relations: Quotient by the relations in the Gerstenhaber algebra axioms \eqref{eq:gax}.
\end{itemize}
The empty term is not allowed (no unit).
An example of a term is $[v_1[v_2,v_3],v_4]v_5$
with $v_i \in V^\ast$; its degree is $\sum_i |v_i|-2$,
and it is equal to\footnote{%
Additional brackets are omitted, using the associativity axiom in \eqref{eq:gax}.}
\[
   v_1 [[v_2,v_3],v_4]v_5 + (-1)^{v_1(v_2+v_3-1)} [v_2,v_3] [v_1,v_4]v_5
\]
in $GV^\ast$. Like in this example,
one can always move all gca products to the outside.
\step
Producing a $\C$-basis for $GV^\ast$ is an interesting topic.
This acquires a geometric flavor
using the isomorphic\footnote{%
This way of associating to a vector space a
free algebra is known as a `Schur functor', see e.g.~\cite{lv}.
The action of the permutation group $S_n$
is the obvious one but involves degree-dependent signs.
}
$GV^\ast = \oplus_{n>0} H(D_n) \otimes_{\C[S_n]} (V^\ast)^{\otimes n}$,
with $D_n$ the space of $n$ little disks, $S_n$ the symmetric group.
The size of $GV^\ast$ is therefore related to the size of the singular homology $H(D_n)$
 which may be explicitly calculated, together with the cohomology ring,
 using the de Rham pairing, that is, integration of forms \cite{ds}.
\step
\emph{The $n$-degree and the decreasing filtration.}
The $n$-degree is given by the word length on $GV^\ast$,
so $n|_{V^\ast} = 1$ and
$n(ab) = n([a,b]) = n(a) + n(b)$ for all $n$-homogeneous $a,b \in GV^\ast$.
Abbreviate $n(a) = n_a$.
The $n$-degree
is different from the primary grading on $GV^\ast$ coming from the grading on $V$.
Define a decreasing filtration by letting $\fil{n} = \fil{n} GV^\ast$
be the $\C$-span of terms of $n$-degree $\geq n$.
Then $\fil{1} = GV^\ast$
and $\fil{n}\fil{m} \subset \fil{n+m}$ and $[\fil{n},\fil{m}] \subset \fil{n+m}$;
and $\bigcap_{n \geq 1} \fil{n} = 0$.
An endomorphism is in $\fil{n}\End_\C(GV^\ast)$
iff it maps $\fil{m} \to \fil{m+n}$ for all $m$.
%%%%%%%%%%%%%%%%%%%%%%%%%%%%%%%%%%%%%%%%%%%%%%%%%%%%%%%%%%%
\begin{remark}[Completion]\label{remark:completion}
The completion relative to the filtration $\fil{\bullet} GV^\ast$
allows countably infinite linear combinations analogous to formal power series.
In the remainder of this section,
one should replace $GV^\ast$ by this completion,
however we will not introduce separate notation for this.
The completion is also filtered.
\end{remark}
%%%%%%%%%%%%%%%%%%%%%%%%%%%%%%%%%%%%
\emph{Derivations.}
For every Gerstenhaber algebra $A$,
a homogeneous element\footnote{%
Here and below, $\End_\C$ are simply the $\C$-linear maps,
not endomorphisms of algebras.} $\delta \in \End_\C(A)$
is called a Gerstenhaber derivation, or simply derivation, if
\begin{align*}
\delta(ab) & = \delta(a)b + (-1)^{\delta a} a\delta(b)\\
\delta([a,b]) & = [\delta(a),b] + (-1)^{\delta(a-1)} [a,\delta(b)]
\end{align*}
The space of all Gerstenhaber derivations is denoted 
$\Der(A)$ and it is a gLa using the graded commutator,
$[\delta_1,\delta_2] = \delta_1\delta_2 - (-1)^{\delta_1\delta_2} \delta_2\delta_1$
for all $\delta_1,\delta_2 \in \Der(A)$,
so it is a sub-gLa of $\End_\C(A)$;
one checks that $[\delta_1,\delta_2]$ is itself a Gerstenhaber derivation.
%%%%%%%%%%%%%%%%%%%%%%%%%%%%%%%%%%%%
\step
\emph{Definition of $\ginf$-algebra.}
For every graded vector space $V$ define \cite{tt}:
\[
\text{$\ginf$-algebra structure on $V$}
\qquad
\Longleftrightarrow
\qquad
\text{$\delta \in \Der^1(GV^\ast)$ with $[\delta,\delta] = 0$}\\
\]
A few comments are in order:
\begin{itemize}
\item $\delta$ being of degree one, the Maurer-Cartan equation $[\delta,\delta] = 0$
is equivalent to $\delta^2 = 0$. Therefore, a $\ginf$ structure
is a derivation $\delta$ of degree one
that is also a differential.
\item There are analogous definitions of $\ainf$, $\cinf$, $\linf$ using
respectively the free graded associative or tensor algebra;
the free gLa;
the free gca,
without unit. Note \cite{tt} that $GV^\ast$ contains
a copy of the free gLa as a quotient gLa and
a copy of the free dgca as a quotient gca,
and every Gerstenhaber derivation descends to these quotients.
Therefore every $\ginf$ structure contains both a $\cinf$ and an $\linf$ structure.
\item $\delta$ is determined by its restriction to $V^\ast$ and this data 
is unconstrained\footnote{%
By an `infinitesimal' version of the universal property of $GV^\ast$.},
\begin{equation}\label{eq:dparx}
\Der(GV^\ast) \simeq \Hom_\C(V^\ast,GV^\ast)
\end{equation}
By decomposing $GV^\ast$ one obtains a decomposition of $\delta$.
Recall Remarks \ref{remark:dualspaces}, \ref{remark:completion} here.
\end{itemize}
A $\ginf$-algebra with $\delta(V^\ast) \subset V^\ast V^\ast + [V^\ast,V^\ast]$
is an ordinary Gerstenhaber algebra\footnote{%
Namely, after dualization,
it is a Gerstenhaber algebra on $V[-2]$ where $V[-2]^i = V^{i-2}$.
The part mapping into $V^\ast V^\ast$ encodes the gLa product,
the part mapping into $[V^\ast,V^\ast]$ encodes the gca product.}.
%%%%%%%%%%%%%%%%%%%%%%%%%%%%%%%%%%%%%%%%%%%%%%%%%%%%%%
\step
\emph{The canonical operator $\tt$.}
The free Gerstenhaber algebra has an interesting canonical structure \cite{tt}.
Namely, define $\tt \in \End^{-1}_\C(GV^\ast)$ by requiring\footnote{%
This operator is denoted $\Delta_{\text{Lie}}$ in \cite{tt}.
In this paper, $\Delta$ is reserved
for a coproduct in Section \ref{subsec:mod}.}
\begin{equation}\label{eq:bvtt}
\begin{aligned}
\tt|_{V^\ast} & = 0\\
\tt(ab) & =
(-1)^a [a,b] + \tt(a)b + (-1)^a a \tt(b)\\
\tt([a,b]) & = [\tt(a),b] + (-1)^{a-1} [a,\tt(b)]
\end{aligned}
\end{equation}
for all $a,b \in GV^\ast$. This recursively fixes how $\tt$ acts on all terms,
and one checks that it is well-defined
meaning compatible with the relations \eqref{eq:gax}\footnote{%
Direct calculation. See also \cite{tt}.}. Further:
\begin{itemize}
\item $\tt$ has degree $-1$ because $|[a,b]| = |ab|-1$.
\item $\tt^2 = 0$\footnote{%
Check
$\tt^2|_{V^\ast}=0$
and $\tt^2(ab) = \tt^2(a)b + a\tt^2(b)$
and $\tt^2([a,b]) = [\tt^2(a),b] + [a,\tt^2(b)]$.}
and therefore\footnote{%
Explicitly, set
$d_{\tt}(e) = \tt e - (-1)^e e \tt$ for all 
$e \in \End_\C(GV^\ast)$.}
$d_{\tt} = [\tt,-] \in\End_\C^{-1}(\End_\C(GV^\ast))$
is a differential, $d_{\tt}^2 = 0$,
and it satisfies the Leibniz rule
for the gLa structure on $\End_\C(GV^\ast)$\footnote{%
Explicitly, 
the Leibniz rule is
$d_{\tt}([e_1,e_2]) = [d_{\tt}(e_1),e_2]
+ (-1)^{e_1}[e_1,d_{\tt}(e_2)]$ for all $e_1,e_2 \in \End_\C(GV^\ast)$.
It is trivially true
since $d_{\tt}$ is given by a graded commutator.
}.
\item If $\delta \in \Der(GV^\ast)$
then $d_{\tt}(\delta) \in \Der(GV^\ast)$\footnote{%
By direct calculation, one finds
$d_{\tt}(\delta)|_{V^\ast}=0$ and
$d_{\tt}(\delta)(ab) = (d_{\tt}(\delta)(a))b
  + (-1)^{(\delta-1)a} a (d_{\tt}(\delta)(b))$
and
$d_{\tt}(\delta)([a,b]) = [d_{\tt}(\delta)(a),b] +
   (-1)^{(\delta-1)(a-1)} [a, d_{\tt}(\delta)(b)]$,
for all $a,b \in GV^\ast$.}.
So, $d_{\tt} \in \End^{-1}_\C(\Der(GV^\ast))$.
\end{itemize}
Hence $\Der(GV^\ast)$
is a differential graded Lie algebra
with a differential $d_{\tt}$ of degree $-1$.
%%%%%%%%%%%%%%%%%%%%%%%%%%%%%%%%%%%%%%%%%%%%%%%%%%%%%%
\step
\emph{Definition of $\bvinf$-algebra.}
With \cite[Remark 2.8]{tt}, \cite[Proposition 24]{gtv} define:
\[
\text{$\bvinf$-algebra structure on $V$}
\quad
\Longleftrightarrow
\quad
\left\{
\begin{aligned}
&\text{$\delta^i \in \Der^i(GV^\ast)$ for $i=1,-1,-3,-5,\ldots$ with}\\
& \qquad\qquad\qquad d_{\tt}\delta + \tfrac12 [\delta,\delta] = 0\\
&\text{where $\delta = \delta^1 + \delta^{-1} + \delta^{-3} + \ldots$}
\end{aligned}
\right.
\]
For degree reasons it is necessary that $[\delta^1,\delta^1] = 0$,
hence every $\bvinf$ structure contains a $\ginf$ structure.
The next equation is $d_{\tt}\delta^1 + [\delta^1,\delta^{-1}] = 0$, and so forth.

\subsection{$\bv{\Qx}$-algebras,
with axioms deformed by a quadratic element $\Qx$}\label{subsec:mod}

Here we
modify the definition of a $\bvinf$-algebra in the case when $V$ is also a
module over a cocommutative Hopf algebra $\Hopf$,
and a central `quadratic' element in $\Hopf$ is fixed\footnote{%
Operads $\operad$ in the category of $\Hopf$-modules,
where $\Hopf$ is a graded cocommutative Hopf algebra,
have been discussed in the literature \cite{sawa,be}.
One can then form a semidirect product of operads $\operad \rtimes \Hopf$,
see \cite{sawa}.
The action of $\Hopf$ on $\operad$ can encode interesting information,
for instance the semidirect product of the operad of Gerstenhaber algebras
with the Hopf algebra of dual numbers, acting in a certain way, is isomorphic
to the operad of BV algebras \cite{be}.
It is not excluded that some of this theory can be applied here.}.
\step
\emph{Hopf module and quadratic element.}
In this section, $V$ is both a graded vector space
and an $\Hopf$-module.
Here $\Hopf$ is a cocommutative Hopf algebra with coproduct denoted $\Delta$.
The Hopf algebra is not graded and
leaves the grading on $V$ invariant.
Further, the constructions in this section require the choice of
an element $\Qx \in \Hopf$ with these properties:
\begin{itemize}
\item $\Qx$ satisfies the 7-term identity
\begin{equation}\label{eq:7term}
   \Delta^2 \Qx
   - (\one + \sigma + \sigma^2) (1 \otimes \Delta \Qx)
   + (\one + \sigma + \sigma^2) (1 \otimes 1 \otimes \Qx)
\;=\; 0
\end{equation}
where $1 \in \Hopf$ is the unit and where
$\sigma: \Hopf^{\otimes 3} \to \Hopf^{\otimes 3}$,
$h_1 \otimes h_2 \otimes h_3 \mapsto h_2 \otimes h_3 \otimes h_1$.
\item $\Qx$ is in the kernel of the counit $\Hopf \to \C$ of the Hopf algebra\footnote{%
Actually this follows from \eqref{eq:7term}
by applying $(\one \otimes \eps)(\one \otimes \one \otimes \eps)$
where $\eps: \Hopf \to \C$ is the counit,
and using the Hopf algebra axioms.
The counit plays the role of the trivial representation of a Hopf algebra.}.
\item $\Qx$ is in the center of the Hopf algebra.
\end{itemize}
%%%%%%%%%%%%%%%%%%%%%%%%%%%%%%%%%%%%%%%%%%%%%%
The motivating example is
when $\Hopf$ is a polynomial algebra\footnote{%
On $\C[x_1,\ldots,x_n]$
the coalgebra structure is determined by
$\Delta x_i = x_i \otimes 1 + 1 \otimes x_i$.}
and $\Qx$ is a polynomial of degree at most two, without constant term.
Another example,
not in general commutative but cocommutative,
is the universal enveloping algebra $\Hopf = U\gx$\footnote{%
On $U\gx$
the coalgebra structure is determined by
$\Delta g = g \otimes 1 + 1 \otimes g$ for all $g \in \gx$.}
of a Lie algebra $\gx$,
and $\Qx$ an element of degree at most two, without constant term, in the center of $U\gx$.
%%%%%%%%%%%%%%%%%%%%%%%%%%%%%%%%%%%%%%%%%%%%%
\step
\emph{Hopf module structure on $GV^\ast$.}
Since $V$ is an $\Hopf$-module,
so is $(V^\ast)^{\otimes n}$\footnote{%
For $n=0$ the module structure is the counit, but we do not need $n=0$ here.}.
By cocommutativity of $\Hopf$,
on $(V^\ast)^{\otimes n}$
the $S_n$-module and $\Hopf$-module structures are compatible\footnote{%
That is, one has a bimodule.}.
Hence $GV^\ast$ is an $\Hopf$-module.
The following maps are $\Hopf$-equivariant:
the inclusion $V^\ast \hookrightarrow GV^\ast$;
the gca and gLa products as maps $GV^\ast \otimes GV^\ast \to GV^\ast$;
the map $\tt : GV^\ast \to GV^\ast$.
\step
As an example, if $h \in \Hopf$ and $v_i \in V^\ast$ then
\[
h ([v_1[v_2,v_3],v_4]v_5)
= [(h_1v_1)[(h_2v_2),(h_3v_3)],(h_4v_4)](h_5v_5)
\]
where $\Delta^4h = h_1 \otimes \cdots \otimes h_5$ is Sweedler notation
for the iterated coproduct $\Delta^4: \Hopf \to \Hopf^{\otimes 5}$.
\step
%%%%%%%%%%%%%%%%%%%%%%%%%%%%%%%%%%%%%%%%%%%%%%%%%%%%%%%%%
\emph{Equivariant derivations.}
Denote by
$\Der(GV^\ast)^{\Hopf} \subset \Der(GV^\ast)$
the sub-graded Lie algebra of $\Hopf$-equivariant elements.
Note that $d_{\tt}$ maps this sub-gLa to itself.
\step
%%%%%%%%%%%%%%%%%%%%%%%%%%%%%%%%%%%%%%%%%%%%%%%%%%%%%%%%%%
\emph{The operator $\qq$.}
Here the fixed element $\Qx \in \Hopf$ is used
to make a definition inspired by and superficially similar to \eqref{eq:bvtt}.
Namely define
$\qq \in \End^1_\C(GV^\ast)$
by 
\begin{equation}\label{eq:qq81}
\begin{aligned}
\qq|_{V^\ast} & = 0\\
\qq(ab) & =
\qq(a)b + (-1)^a a \qq(b)\\
\qq([a,b]) & = 
(-1)^a \big(
\Qx(ab) - (\Qx a)b - a(\Qx b)
\big)
+ [\qq(a),b] + (-1)^{a-1} [a,\qq(b)]
\end{aligned}
\end{equation}
for all $a,b \in GV^\ast$.
This recursively fixes how $\qq$ acts on all terms,
and one checks that it is well-defined meaning compatible with the relations
\eqref{eq:gax};
this calculation requires the 7-term identity \eqref{eq:7term}
and is in Appendix \ref{app:proofs}. Further:
\begin{itemize}
\item $\qq$ has degree $+1$ because $|ab| = |[a,b]| + 1$.
\item $\qq: GV^\ast \to GV^\ast$ is $\Hopf$-equivariant,
by the assumption that $\Qx$ is in the center.
\item $\qq^2 = 0$\footnote{%
Check
$\qq^2|_{V^\ast}=0$
and $\qq^2(ab) = \qq^2(a)b + a\qq^2(b)$
and $\qq^2([a,b]) = [\qq^2(a),b] + [a,\qq^2(b)]$,
which implies $\qq^2 = 0$.
The 7-term identity \eqref{eq:7term} is not needed for this calculation.
}
and therefore
$d_{\qq} = [\qq,-] \in \End^1_\C(\End_\C(GV^\ast))$
is a differential, $d_{\qq}^2 = 0$,
and it satisfies the Leibniz rule for the gLa structure on $\End_\C(GV^\ast)$.
\item
If $\delta \in \Der(GV^\ast)^{\Hopf}$
then $d_{\qq}(\delta) \in \Der(GV^\ast)^{\Hopf}$.
So, $d_{\qq} \in \End^1_\C(\Der(GV^\ast)^{\Hopf})$.
\end{itemize}
The last
claim, which is only for $\Hopf$-equivariant $\delta$,
is also proved in Appendix \ref{app:proofs}.
\step
\emph{A curvature operator.}
The sum $d_{\tt} + d_{\qq}$ is not a differential
because
$d_{\tt}d_{\qq} + d_{\qq}d_{\tt} \neq 0$.
Its restriction to $\Hopf$-equivariant
derivations is explicitly given by
\begin{equation}\label{eq:dddd}
d_{\tt}d_{\qq} + d_{\qq}d_{\tt} = -[\curv,-]
\qquad \text{on $\Der(GV^\ast)^{\Hopf}$}
\end{equation}
where, by definition, $\curv \in \Der^0(GV^\ast)^{\Hopf}$ is given by
\begin{align*}
\curv|_{V^\ast} & = \Qx \one \\
\curv(ab) & = \curv(a)b + a\curv(b)\\
\curv([a,b]) & = [\curv(a),b] + [a,\curv(b)]
\end{align*}
Note that $\curv$ is $\Hopf$-equivariant since $\Qx$ is in the center.
One finds
$d_{\tt}(\curv) = d_{\qq}(\curv) = 0$.
See Appendix \ref{app:proofs} for proofs.
In summary, on $\Der(GV^\ast)^{\Hopf}$,
the map $d_{\tt}+d_{\qq}$ is not a differential,
but together with $-\curv$ one has a
curved differential $\Z/2\Z$-graded Lie algebra.
The Maurer-Cartan equation in this curved dgLa
motivates the following definition.
%%%%%%%%%%%%%%%%%%%%%%%%%%%%%%%%%%%%%%%%%%%%%%%%%%%%%%%%%%%%%%
\step
\emph{Definition of $\bv{\Qx}$-algebra.} Set
\begin{equation}\label{eq:defheq}
\text{$\bv{\Qx}$-algebra structure on $V$}
\quad
\Longleftrightarrow
\quad
\left\{
\begin{aligned}
&\text{$\delta^i \in \Der^i(GV^\ast)^{\Hopf}$ for $i=1,-1,-3,\ldots$ with}\\
& \qquad\quad
-\curv + (d_{\tt} + d_{\qq}) \delta + \tfrac12 [\delta,\delta] = 0\\
&\text{where $\delta = \delta^1 + \delta^{-1} + \delta^{-3} + \ldots$}
\end{aligned}
\right.
\end{equation}
The first few degree levels of these equations are\footnote{%
In the special case $\delta^{-3} = \delta^{-5} = \ldots = 0$,
only the three displayed equations are non-vacuous.}
\begin{equation}\label{eq:defheq2}
\begin{aligned}
\tfrac12 [\delta^1,\delta^1] & = -d_{\qq} \delta^1\\
d_{\tt} \delta^1 + [\delta^1,\delta^{-1}] & = \curv - d_{\qq} \delta^{-1}\\
d_{\tt} \delta^{-1}
+ [\delta^1,\delta^{-3}] + \tfrac12[\delta^{-1},\delta^{-1}] & = -d_{\qq}\delta^{-3}\\
& \;\;\vdots
\end{aligned}
\end{equation}
The terms on the right correspond to the deformation by $\Qx$.
Already the first equation is deformed,
so that $\delta^1$ is not necessarily a $\ginf$-algebra\footnote{%
Since it only involves $\delta^1$, we could say that it defines
a $\ginfq{\Qx}$-algebra.}.
But by the definition of $d_{\qq}$,
note that $\delta^1$ must still contain a $\cinf$-algebra, including a differential.

%%%%%%%%%%%%%%%%%%%%%%%%%%%%%%%%%%%%%%%%%%%%%%%%%%%%%%%%%%%%%

\subsection{The auxiliary operator $\Gamma$} \label{sec:gammax}
Here a $\Gamma \in \End_\C^1(\Der(GV^\ast))$ is defined
such that
$\Gamma^2 = 0$
and 
\begin{equation}\label{eq:dGGd}
d_{\tt} \Gamma + \Gamma d_{\tt} = \one_{>0}
\end{equation}
where, for every derivation $\delta$, set $\one_{>0}(\delta)=0$ if $n_{\delta}=0$
and $\one_{>0}(\delta)=\delta$ if $n_{\delta}>0$.
The $n$-degree was defined in Section \ref{sec:recapnew};
a derivation $\delta$ has $n$-degree equal to $n_{\delta}$
iff it increases the $n$-degree by $n_{\delta}$.
Further, $\Gamma$
maps $\Hopf$-equivariant derivations to $\Hopf$-equivariant derivations.
\step
\emph{Construction of $\Gamma$.}
Define $\hh \in \End^1_\C(GV^\ast)$ by
\begin{align*}
\hh|_{V^\ast} & = 0 \\
\hh(ab) & =
\hh(a)b + (-1)^a a \hh(b)\\
\hh([a,b]) & = 
(-1)^a
n_a n_b ab  + [\hh(a),b] + (-1)^{a-1} [a,\hh(b)]
\end{align*}
for all $n$-homogeneous $a,b \in GV^\ast$.
It is well-defined on $GV^\ast$.
Both $\tt$ and $\hh$ preserve the $n$-degree.
By direct calculation,
$\hh^2 = 0$,
and $\uu = \tt\hh+\hh\tt \in \End^0_\C(GV^\ast)$ is given by\footnote{%
Check
$\uu|_{V^\ast} = 0$
and $\uu(ab) =
n_an_b ab
+ \uu(a)b + a \uu(b)$
and $\uu([a,b]) = n_a n_b [a,b]  + [\uu(a),b] + [a,\uu(b)]$.}
\begin{equation}\label{eq:unn}
\uu(a) = \tfrac12 n_a(n_a-1) a
\end{equation}
%%%%%%%%%%%%%%%%%%%%%%%%%%%%%%%%%%%%%%%%%%%%%%%%%%
Now, if $\delta \in \Der(GV^\ast)$ then
 $[\hh,\delta]$ need not be a derivation.
Hence, for every derivation $\delta$ of $n$-degree equal to $n_{\delta}$,
let $\Gamma(\delta) \in \Der(GV^\ast)$ be the unique derivation with
\[
\Gamma(\delta)|_{V^\ast}
= \begin{cases}
0 & \text{if $n_{\delta}=0$}\\
\tfrac{2}{(n_{\delta}+1)n_{\delta}} [\hh, \delta]|_{V^\ast} & \text{if $n_{\delta}>0$}
\end{cases}
\]
Then $\Gamma^2 = 0$ by $\hh^2 = 0$.
To see \eqref{eq:dGGd}, suppose $n_{\delta}>0$; the case $n_{\delta}=0$ is trivial.
Then $\delta(V^\ast)$ has $n$-degree equal to $n_{\delta}+1$.
Set $\delta' = (d_{\tt}\Gamma + \Gamma d_{\tt})\delta$
which is a derivation, $n_{\delta'} = n_{\delta}$, and
$\delta'|_{V^\ast} = \frac{2}{(n+1)n}(\uu \delta)|_{V^\ast}$,
and therefore \eqref{eq:unn} implies $\delta' = \delta$ as required.
\step
\emph{Another property of $\Gamma$.} If $\delta \in \Der(GV^\ast)$
and $n_{\delta}=0$ then $[\delta,-]$ commutes with $\Gamma$ up to sign.
More explicitly, using the definition of $\Gamma$, this means that
\begin{equation}\label{eq:dvv}
\delta(V^\ast) \subset V^\ast
\qquad
\Longrightarrow
\qquad
\Gamma[\delta,-] = (-1)^\delta [\delta,\Gamma-]
\end{equation}

\subsection{A special class}\label{subsec:reduction}

\emph{The special case.}
Throughout this section,
\[
	\delta = \textstyle\sum_{k \geq 1} \mu_k + \textstyle\sum_{k \geq 2} \nu_k
\]
where $\mu_k,\nu_k \in \Der^{\text{odd}}(GV^\ast)^{\Hopf}$ 
satisfy
\begin{equation}\label{eq:munux}
\begin{aligned}
\mu_k(V^\ast) & \subset (V^\ast)^{\otimes k}\\
\nu_k(V^\ast) & \subset [V^\ast,V^\ast] (V^\ast)^{\otimes (k-2)}
\end{aligned}
\end{equation}
It is always understood
that $\mu_k = 0$ for $k \leq 0$
respectively $\nu_k = 0$ for $k \leq 1$.
Further assume that almost all operations have degree one:
$\mu_k, \nu_k \in \Der^1$ for all $k \geq 2$ and
\[
\mu_1 = d + h
\]
where $d \in \Der^{1}$ and $h \in \Der^{-1}$.
(In the notation of \cite{gtv}, this means that
only the structure maps $m^0_{1,\ldots,1}$, $m^0_{1,\ldots,1,2}$ and $m_1^1$ can be nonzero.)
The notation $d$, $h$ is merely suggestive for the synonymous operations
in the Yang-Mills dgca $\ax$;
beware that in this section $d$ and $h$ are
not primarily maps $V^\ast \to V^\ast$
but as derivations that map $V^\ast \to V^\ast$.
Using \eqref{eq:munux},
\begin{equation}\label{eq:dx23}
d_{\tt} \mu_1 = 0
\qquad
d_{\tt} \nu_2 = 0
\qquad
d_{\qq} \mu_k = 0
\qquad
\Gamma \mu_k = 0
\end{equation}
%%%%%%%%%%%%%%%%%%%%%%%%%%%%%%%%%%%%%%%%%%%%%%%%%%%%%%%%%%%%%%%%%%%%
\step
\emph{Decomposition of the axioms.}
Define $A_k, B_k, C_k \in \Der^{\text{even}}(GV^\ast)^{\Hopf}$ by
\begin{alignat*}{5}
A_k & = -\curv \delta_{k1} &\;& + d_{\qq} \nu_k \;&&+\tfrac12 && \textstyle\sum_{m+n-1=k} [\mu_m,\mu_n]\\
B_k & =  &\;& + d_{\tt} \mu_k \;&&\hskip2.5pt +&& \textstyle\sum_{m+n-1=k} [\mu_m,\nu_n]\\
C_k & = &\;& + d_{\tt} \nu_k \;&&+\tfrac12 && \textstyle\sum_{m+n-1=k} [\nu_m,\nu_n]
\end{alignat*}
with $\delta_{k1}$ the Kronecker delta.
The sums are finite, and
note that $A_k$, $B_k$, $C_k$ are zero except when $k \geq 1$, $k \geq 2$, $k \geq 3$
respectively using \eqref{eq:dx23}. By \eqref{eq:dx23},
\[
-\curv + (d_{\tt}+d_{\qq}) \delta 
+ \tfrac12 [\delta,\delta]
\;=\;
\textstyle\sum_{k \geq 1} A_k + \sum_{k \geq 2} B_k + \sum_{k \geq 3} C_k
\]
Hence if all $A_k$, $B_k$, $C_k$ vanish then $\delta$ is a $\bv{\Qx}$-algebra structure,
see \eqref{eq:defheq}.
The converse is also true, because $A_k$, $B_k$, $C_k$
map $V^\ast$ to linearly independent sectors of $GV^\ast$:
\begin{align*}
A_k(V^\ast) & \subset (V^\ast)^{\otimes k}\\
B_k(V^\ast) & \subset [V^\ast,V^\ast] (V^\ast)^{\otimes (k-2)}\\
C_k(V^\ast) & \subset [[V^\ast,V^\ast],V^\ast] (V^\ast)^{\otimes (k-3)} 
+ \underbrace{[V^\ast,V^\ast][V^\ast,V^\ast] (V^\ast)^{\otimes(k-4)}}_{%
\text{only for $k \geq 4$}} 
\end{align*}
The goal of this section is to spell out
the linear dependencies among the $A_k$, $B_k$, $C_k$
and to reduce the axioms
to a form that one can check for Yang-Mills, in Section \ref{sec:proofmain}.
\step
\emph{Linear dependencies.} One finds
\begin{align}
\label{eq:claim1}
\Gamma A_k & = 0\\
\label{eq:claim2}
d_{\qq} A_k & = 0\\
\label{eq:claim3}
d_{\qq} B_k & = \textstyle\sum_{m+n-1=k} \rule{50pt}{0pt}[A_m, \mu_n] \displaybreak[0] \\
\label{eq:claim4}
d_{\tt} A_k + d_{\qq} C_k
& = \textstyle\sum_{m+n-1=k} [A_m,\nu_n] + [B_m,\mu_n]\\
\label{eq:claim5}
d_{\tt} B_k \rule{38pt}{0pt} & = \textstyle\sum_{m+n-1=k} [B_m,\nu_n] + [C_m,\mu_n]\\
\label{eq:claim6}
d_{\tt} C_k \rule{38pt}{0pt} & = \textstyle\sum_{m+n-1=k} [C_m,\nu_n]
\end{align}
To see this, use \eqref{eq:dddd}; \eqref{eq:dx23};
$d_{\tt}(\curv) = d_{\qq}(\curv) = 0$;
$d_{\tt}^2 = d_{\qq}^2 = 0$;
the Leibniz rule for $d_{\tt}$ and $d_{\qq}$;
and the Jacobi identity by which $[y,[y,y]] = 0$ for all $y \in \Der^{\text{odd}}(GV^\ast)$.

\begin{lemma}[Reduced axioms]\label{lemma:redx}
$\delta$ is a $\bv{\Qx}$-algebra iff
\begin{align}\label{eq:abc}
A_1 & = 0 &
& \text{$\Gamma B_k = 0$\; for all $k \geq 2$} &
& \text{$\Gamma C_k = 0$\; for all $k \geq 3$}
\end{align}
More precisely,
for every integer $K$,
if \eqref{eq:abc} hold for $k < K$ then $A_k=B_k=C_k=0$ for $k < K$,
and if in addition $\Gamma C_K = 0$ then $C_K=0$. 
\end{lemma}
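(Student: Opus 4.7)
The plan is induction on $K$, with the operator $\Gamma$ from Section \ref{sec:gammax} serving as a contracting homotopy for $d_\tt$. The ``only if'' direction is immediate: if $\delta$ defines a $\bv{\Qx}$-algebra then all of $A_k$, $B_k$, $C_k$ vanish identically, so \eqref{eq:abc} trivially holds.

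The key observation for the ``if'' direction is that each of $A_k$, $B_k$, $C_k$ is a derivation of $n$-degree equal to $k - 1$, read off the displayed inclusions $A_k(V^\ast)\subset(V^\ast)^{\otimes k}$, $B_k(V^\ast)\subset [V^\ast,V^\ast](V^\ast)^{\otimes(k-2)}$, and the analogous one for $C_k$. Consequently for $k \geq 2$ the homotopy identity \eqref{eq:dGGd} specializes to $X = d_\tt \Gamma X + \Gamma d_\tt X$ for each $X \in \{A_k, B_k, C_k\}$, with the convention that $A_k, B_k, C_k$ automatically vanish for $k < 1$, $k < 2$, $k < 3$ respectively.

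Now induct on $K$, assuming the refined statement at smaller values, so $A_k = B_k = C_k = 0$ for $k < K$. Substituting into \eqref{eq:claim6} and using $\nu_n = 0$ for $n \leq 1$, every surviving term of $\sum_{m+n-1=K}[C_m, \nu_n]$ has $m \leq K - 1$; hence $d_\tt C_K = 0$. Combined with the added hypothesis $\Gamma C_K = 0$, the homotopy forces $C_K = 0$ for $K \geq 3$ (the range $K \leq 2$ being trivial by definition), which is the ``more precisely'' clause. Feeding $C_K = 0$ into \eqref{eq:claim5} collapses that identity to $d_\tt B_K = [C_K, \mu_1] = 0$, so $\Gamma B_K = 0$ (supplied by \eqref{eq:abc} at level $K$) plus the homotopy yields $B_K = 0$ for $K \geq 2$. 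Finally \eqref{eq:claim4} collapses to $d_\tt A_K = [B_K, \mu_1] - d_\qq C_K = 0$; since $\Gamma A_K = 0$ comes for free from \eqref{eq:claim1}, this forces $A_K = 0$ for $K \geq 2$, and the base case $K = 1$ is handled directly by the hypothesis $A_1 = 0$.

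There is no substantive obstacle: the heavy lifting has already been done in establishing the structural identities \eqref{eq:claim1}--\eqref{eq:claim6} and the homotopy \eqref{eq:dGGd}. The induction simply exploits the triangular order $C_K \to B_K \to A_K$ dictated by which terms survive in the convolutions, together with the collapsing conditions $\mu_k = 0$ for $k \leq 0$ and $\nu_k = 0$ for $k \leq 1$, to reduce each step to a single application of the homotopy.
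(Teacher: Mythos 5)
Your proof is correct and follows essentially the same route as the paper's: use the convolution identities \eqref{eq:claim6}, \eqref{eq:claim5}, \eqref{eq:claim4} together with the vanishing of lower-order terms to get $d_{\tt}C_K = d_{\tt}B_K = d_{\tt}A_K = 0$ in that order, then invoke the homotopy \eqref{eq:dGGd} with the hypotheses $\Gamma C_K = \Gamma B_K = 0$ and the free identity \eqref{eq:claim1}. The paper merely writes the induction out case by case ($k=1,2,3$, then ``$k\geq 4$ as for $k=3$'') where you state it uniformly; the content is the same.
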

%%%%%%%%%%%%%%%%%%%%%%%%%%%%%%%%%%%%%%%%%%%%%%%%%
\begin{proof}
We prove $\Leftarrow$ by induction on $k$.
At each step we show that $A_k$, $B_k$, $C_k$ vanish.
Case $k=1$: Vacuous since $B_1=C_1=0$.
Case $k=2$: Use \eqref{eq:claim5}
to see that $d_{\tt} B_2 = 0$ (note that $\nu_1 = 0$ and $C_2=0$)
and by assumption $\Gamma B_2 = 0$,
which, since $B_2$ has $n$-degree $+1$ and using \eqref{eq:dGGd}, implies $B_2 = 0$.
Now \eqref{eq:claim4} implies $d_{\tt} A_2 = 0$
which with \eqref{eq:claim1} and \eqref{eq:dGGd} implies $A_2 = 0$.
Case $k=3$:
$d_{\tt} C_3 = 0$ by \eqref{eq:claim6}
which with $\Gamma C_3 =0$ and \eqref{eq:dGGd}
implies $C_3 = 0$.
Now \eqref{eq:claim5} implies $d_{\tt} B_3 = 0$
which with $\Gamma B_3 = 0$
implies $B_3 = 0$.
Now $d_{\tt} A_3 = 0$ by \eqref{eq:claim4},
and \eqref{eq:claim1}, imply $A_3 = 0$.
For $k \geq 4$, the argument is just like for $k=3$.
\qed\end{proof}
%%%%%%%%%%%%%%%%%%%%%%%%%%%%%%%%%
\begin{lemma}[A special case of the special case]\label{lemma:specialspecial}
Restrict to the case when $\delta$
is determined by $\mu_1 = d + h$
as before and by a sequence $\theta_k \in \Der^2(GV^\ast)^{\Hopf}$, $k \geq 2$, with
\begin{equation}\label{eq:thk}
 \theta_k(V^\ast) \subset (V^\ast)^{\otimes k}
\end{equation}
by setting
\begin{equation}\label{eqzqkj}
\begin{aligned}
\nu_k & = d_{\tt} \theta_k\\
\mu_k & = -\Gamma [h,\nu_k]
\end{aligned}
\end{equation}
for all $k \geq 2$;
they have degree one and are consistent with \eqref{eq:munux}. Then:
\begin{itemize}
\item Part A: $\delta$ is a $\bv{\Qx}$-algebra iff
\begin{itemize}
\item $[h,h] = 0$
and $[d,h] = \curv$
and $[d,d] = 0$.
\item $b_k = 0$ where, by definition,
$b_k = \Gamma[d,\nu_k] + \Gamma \sum_{m+n-1=k,\;m\neq 1} [\mu_m,\nu_n]$,
for $k \geq 2$.
\item $c_k = 0$ where, by definition,
$c_k = \tfrac12 \Gamma \sum_{m+n-1=k} [\nu_m,\nu_n]$,
for $k \geq 3$.
\end{itemize}
\item Part B: Assuming only the first bullet in Part A,
the following implications hold:
\begin{itemize}
\item If $b_K=c_K=0$ for all $K < k$ then $[d,c_k] = 0$.
\item If $b_K=c_K=0$ for all $K < k$ and if $c_k = 0$ then $[d,b_k] = 0$.
\end{itemize}
\end{itemize}
\end{lemma}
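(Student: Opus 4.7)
\textit{Proof plan.}
Part A will be derived directly from Lemma~\ref{lemma:redx} by translating its three conditions into those of Part A in the special case; Part B will follow from a Bianchi-type identity for the Maurer--Cartan obstruction $F(\delta) = -\curv + (d_{\tt}+d_{\qq})\delta + \tfrac12[\delta,\delta]$, split along two independent gradings.

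For Part A, since $\nu_1 = 0$ one has $A_1 = -\curv + \tfrac12[d+h,d+h]$, which decomposes by internal degree into $\tfrac12[d,d] \in \Der^2$, $[d,h]-\curv \in \Der^0$, $\tfrac12[h,h] \in \Der^{-2}$; their simultaneous vanishing is the first condition of Part A. The identity $\Gamma C_k = c_k$ follows from $d_{\tt}\nu_k = d_{\tt}^2\theta_k = 0$. The identity $\Gamma B_k = b_k$ uses two key facts: $\Gamma\mu_k = 0$ (because $\mu_k$ maps into the zero-bracket sector, which is annihilated by $\hh$), so $\Gamma d_{\tt}\mu_k = \mu_k$ by \eqref{eq:dGGd}; and the defining relation $\Gamma[h,\nu_k] = -\mu_k$. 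Separating off the $m=1$ term in $\sum_{m+n-1=k}[\mu_m,\nu_n]$ produces two $\pm\mu_k$ contributions that cancel, leaving exactly $b_k$.

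For Part B, the central tool is the Bianchi identity $(D + [\delta,-])F(\delta) = 0$, where $D = d_{\tt}+d_{\qq}$; this is a standard curved-dgLa consequence of $D^2 = -[\curv,-]$ (equation \eqref{eq:dddd}) together with $D\curv = 0$. Under the inductive hypothesis $b_K = c_K = 0$ for $K<k$, Part A applied at orders $<k$ yields $A_K = B_K = C_K = 0$ for $K<k$, so the level-$k$ part of Bianchi simplifies to $(D + [\mu_1,-])(A_k + B_k + C_k) = 0$. The plan is then to decompose this identity along two independent gradings: the gLa bracket-count sector (on which $d_{\tt}$ raises and $d_{\qq}$ lowers by one, while $[\mu_1,-]$ preserves), and the internal $\Der^\bullet$-grading (on which $d_{\tt}$ and $[h,-]$ have degree $-1$, while $d_{\qq}$ and $[d,-]$ have degree $+1$). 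A short verification using the first condition of Part A, namely $[h,\mu_k] = 0$ and $d_{\tt}\mu_k + [h,\nu_k] = 0$, shows that $A_k, B_k, C_k$ actually lie in $\Der^2$ for $k\geq 2$.

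The two implications of Part B then fall out of specific components of this decomposition. The 2-bracket sector in internal degree $3$ reads $[d,C_k] = 0$, because the other contributions $d_{\tt}B_k$ and $[h,C_k]$ live in $\Der^1$; applying $\Gamma$ via \eqref{eq:dvv} gives $[d,c_k] = -\Gamma[d,C_k] = 0$, which is the first implication. For the second, assume additionally $c_k = 0$. The 3-bracket sector at level $k$ gives $d_{\tt}C_k = 0$, which combined with $c_k = \Gamma C_k = 0$ and \eqref{eq:dGGd} (applied to the $n$-positive element $C_k$) forces $C_k = 0$. The 1-bracket sector in $\Der^3$ then yields $[d,B_k] = -d_{\qq}C_k = 0$, and a final application of $\Gamma$ gives $[d,b_k] = -\Gamma[d,B_k] = 0$. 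The main obstacle is setting up the Bianchi identity and carrying out the two-grading bookkeeping for the four operators $d_{\tt}, d_{\qq}, [d,-], [h,-]$ consistently, including the non-trivial check that $A_k, B_k, C_k$ remain in $\Der^2$ rather than spreading across $\Der^0 \oplus \Der^2$.
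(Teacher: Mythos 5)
Your proposal is correct and follows essentially the same route as the paper: Part A reduces to Lemma \ref{lemma:redx} by checking $\Gamma B_k = b_k$ and $\Gamma C_k = c_k$, and Part B extracts the internal-degree-$3$ components of the bracket-count sectors of the obstruction identities and transports them through $\Gamma$ via \eqref{eq:dvv}. The only cosmetic differences are that your ``Bianchi identity'' $(d_{\tt}+d_{\qq}+[\delta,-])F(\delta)=0$ is exactly the package \eqref{eq:claim1}--\eqref{eq:claim6} already derived in Section \ref{subsec:reduction}, and that you establish the slightly stronger facts $B_k^0 = C_k^0 = 0$ (and $A_k^0=[h,\mu_k]=0$ from $[h,h]=0$) where the paper only needs $\Gamma B_k^0 = \Gamma C_k^0 = 0$.
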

%%%%%%%%%%%%%%%%%%%%%%%%%%%%%%%
\begin{proof}
In Part A, the first bullet means $A_1 = 0$,
the three equations corresponding to degrees $-2$, $0$, $2$ respectively.
Let
$B_k = B_k^0 \oplus B_k^2$
and $C_k = C_k^0 \oplus C_k^2$
be the decompositions by degree.
Then $\Gamma B_k^0 = \Gamma C_k^0 = 0$ by construction of $\mu_k$ and $\nu_k$,
in fact, using \eqref{eq:dGGd}:
\begin{align*}
\Gamma B_k^0 & = \Gamma (d_{\tt}\mu_k + [h,\nu_k])
= \Gamma (-d_{\tt}\Gamma + \one_{>0}) [h,\nu_k]
= \Gamma^2 d_{\tt} [h,\nu_k] = 0\\
\Gamma C_k^0 & = \Gamma d_{\tt} \nu_k = \Gamma d_{\tt}^2 \theta_k = 0
\end{align*}
Since $b_k = \Gamma B_k^2$, $c_k = \Gamma C_k^2$,
Part A follows from Lemma \ref{lemma:redx}.
Part B:
By Lemma \ref{lemma:redx},
$B_K=C_K=0$ for $K<k$, and if $c_k=0$ then also $C_k=0$.
The degree $3$ part of \eqref{eq:claim5}
(recall $\nu_1=0$)
implies $[C_k^2,d] = 0$,
so, by \eqref{eq:dvv},
 $[d, c_k] = [d, \Gamma C_k^2]
= - \Gamma [d,C_k^2] = 0$.
If $c_k=0$ then $C_k=0$,
so the degree $3$ part of \eqref{eq:claim4}
implies $[B_k^2,d] = 0$,
so
$[d, b_k] = [d, \Gamma B_k^2]
= - \Gamma [d, B_k^2] = 0$.
\qed\end{proof}
%%%%%%%%%%%%%%%%%%%%%%%%%%%%%%%%%%%%%%%%%
Note that $b_k,c_k \in \Der^3(GV^\ast)^{\Hopf}$ satisfy
\begin{equation}\label{eq:bkck}
\begin{aligned}
b_k (V^\ast) & \subset (V^\ast)^{\otimes k}\\
c_k(V^\ast) & \subset [V^\ast,V^\ast] (V^\ast)^{\otimes (k-2)}
\end{aligned}
\end{equation}
%%%%%%%%%%%%%%%%%%%%%%%%%%%%%%%%%%%%%%%%%
Note that \eqref{eq:thk} implies $\Gamma \theta_k=0$
and therefore $\Gamma \nu_k = \theta_k$
by \eqref{eq:dGGd}.
Hence \eqref{eq:dvv} implies $\Gamma[d,\nu_k] = -[d,\Gamma \nu_k] = -[d,\theta_k]$.
Therefore
$b_k = -[d,\theta_k] + q_k$
where, by definition,
\begin{equation}\label{eq:defq}
q_k = \textstyle\Gamma\sum_{m+n-1=k,\,m\neq 1}[\mu_m,\nu_n]
\end{equation}
Part B and $[d,d]=0$ give the implication
\begin{equation}\label{eq:eqq}
\text{If $b_K = c_K = 0$ for all $K < k$ and if $c_k=0$
then $[d,q_k] = 0$.}
\end{equation}

\begin{comment}
\begin{align*}
A_1^{-2} & = \tfrac12 [\mu_1^{-1},\mu_1^{-1}]\\
A_k^{-2} & = 0 \qquad k \geq 2\\
A_1^{0} & = -\curv + [\mu_1^1,\mu_1^{-1}]\\
A_k^0 & = [\mu_k^1,\mu_1^{-1}] \qquad k \geq 2\\
A_1^2 & = \tfrac12 [\mu_1^1,\mu_1^1]\\
A_k^2 & = d_{\qq} \nu_k + \tfrac12 \textstyle\sum_{m+n-1=k} [\mu_m^1,\mu_n^1] \qquad k \geq 2
  \displaybreak[0] \\
B_k^0 & = d_{\tt} \mu_k^1 + [\mu_1^{-1}, \nu_k]\\
B_k^2 & = \textstyle\sum_{m+n-1=k} [\mu_m^1,\nu_n]\\
C_k^0 & = d_{\tt} \nu_k\\
C_k^2 & = \tfrac12 \textstyle\sum_{m+n-1=k} [\nu_m,\nu_n]
\end{align*}
\end{comment}

%%%%%%%%%%%%%%%%%%%%%%%%%%%%%%%%%%%%%%%%%%%%%%%%%%%%%%%%%%%%%%
%%%%%%%%%%%%%%%%%%%%%%%%%%%%%%%%%%%%%%%%%%%%%%%%%%%%%%%%%%%%%%%%%%%%%%%%%%%%

\section{A vanishing theorem}\label{sec:vanish}
\newcommand{\talg}{{\tc{qcolor}{T}}}
\newcommand{\lalg}{{\tc{scolor}{\Lambda}}}
\newcommand{\elzero}{{\tc{hcolor}{\phi}}}
\newcommand{\elone}{{\tc{qcolor}{\psi}}}
\newcommand{\casedist}{\qquad\quad}
\newcommand{\crx}[2]{{\tc{qcolor}{e^{#1}_{#2}}}} 
\newcommand{\VVV}{{\tc{hcolor}{U}}}

This section is stand-alone and uses its own notation. The goal is
a homology vanishing theorem,
Theorem \ref{theorem:vanish}, which is used as a lemma in Section \ref{sec:proofmain}.
\step
This commutative diagram contains the objects used in this section, with $n \geq 1$:
\begin{equation}\label{eq:cdg43}
\vcenter{\hbox{%
\begin{tikzpicture}[node distance = 6mm and 10mm, auto]
  \node (Vn) {$\VVV_n$};
  \node (Fn) [right=16mm of Vn] {$\talg_n$};
  \node (Ln) [right=of Fn] {$\lalg_n$};
  \node (An) [right=of Ln] {$A_n$};
  \node (V) [below=of Vn] {$\VVV$};
  \node (F) [below=of Fn] {$\talg$};
  \node (L) [below=of Ln] {$\lalg$};
  \node (A) [below=of An] {$A$};
  \draw[right hook->] (Vn) to node {} (Fn);
  \draw[->>] (Fn) to node {} (Ln);
  \draw[->>] (Ln) to node {} (An);
  \draw[right hook->] (V) to node {} (F);
  \draw[->>] (F) to node {} (L);
  \draw[->>] (L) to node {} (A);
  \draw[->>] (Vn) to node [swap,xshift=-4] {$f=f_n$} (V);
  \draw[->>] (Fn) to node {} (F);
  \draw[->>] (Ln) to node {} (L);
  \draw[->>] (An) to node {} (A);
\end{tikzpicture}}}
\end{equation}
Here
$\VVV_n = \oplus_{\mu=0}^3 \oplus_{j=1}^n \C \crx{\mu}{j}$
is the $4n$-dimensional vector space with basis elements $\crx{\mu}{j}$;
$\talg_n$ is the tensor algebra generated by $\VVV_n$, with unit,
graded so that the $\crx{\mu}{j}$ are in degree $1$;
$\lalg_n = \talg_n / (\text{$v^2$ with $v \in \VVV_n$})$
is the exterior algebra; and $A_n$ is the quotient algebra
\[
A_n = \lalg_n / (\text{
$\crx{0}{j} \crx{1}{j} - i\crx{2}{j} \crx{3}{j}$,\;\;
$\crx{0}{j} \crx{2}{j} - i\crx{3}{j} \crx{1}{j}$,\;\;
$\crx{0}{j} \crx{3}{j} - i\crx{1}{j} \crx{2}{j}$
with $j=1\ldots n$} )
\]
%--------------------------------------------------------------
All these algebras are graded.
Set $\VVV = \VVV_1$ and for its basis we always
use the shorthand $\crx{\mu}{} = \crx{\mu}{1}$.
Set $\talg = \talg_1$, $\lalg = \lalg_1$, $A = A_1$.
Let $f=f_n$ be the $\C$-linear map
\begin{equation}\label{eq:deff}
f : \VVV_n \to \VVV,\quad \crx{\mu}{j} \mapsto \crx{\mu}{}
\end{equation}
The other vertical arrows in \eqref{eq:cdg43}
are algebra maps induced by $f$.
The Hilbert series\footnote{%
Formal power series $\sum_{k \geq 0} d_k t^k$
where $d_k$ is the $\C$-dimension of the subspace of degree $k$.} are:
\begin{align*}
      \talg_n : &\; (1-4nt)^{-1} &
\lalg_n : &\; (1+t)^{4n} = (1+4t+6t^2+4t^3+t^4)^n &
      A_n : &\; (1+4t + 3t^2)^n
\end{align*}
%%%%%%%%%%%%%%%%%%%%%%%%%%%%%%%%%%%%%%%%%%%%%%%%%
\begin{remark}[Algebra isomorphisms] \label{remark:prod}
For $n\geq 2$ one has\footnote{%
If $A$, $B$ are unital graded associative $\C$-algebras,
commutative or not, then the tensor product of unital associative $\C$-algebras
is the vector space $A \otimes_\C B$
with product $(a \otimes b) (a' \otimes b') = (-1)^{ba'} (aa') \otimes (bb')$.}
$\talg_n \not\simeq \talg^{\otimes n}$,
in fact $\talg_n$ is much bigger. By contrast
$\lalg_n \simeq \lalg^{\otimes n}$,
and this algebra isomorphism can be written down explicitly
using Koszul signs in a standard way.
It induces an isomorphism $A_n \simeq A^{\otimes n}$.
\end{remark}
%%%%%%%%%%%%%%%%%%%%%%%%%%%%%%%%%%%%%%%%%%%%%%%%%%%%%%%%%%%%%%
Given a graded left $A$-module $M$
  and a graded left $A_n$-module $M_n$, define
\begin{equation}\label{eq:xkl}
X^{k,\ell}
\;=\; S^k \VVV_n^\ast \otimes \Hom_\C^\ell(M_n,M)
\end{equation}
where $S^k \VVV_n^\ast$
are the homogeneous polynomials of degree $k$ on $\VVV_n$;
and $\Hom^\ell$ are the maps of degree $\ell$.
For $\elzero \in X^{k,\ell}$,
denote its evaluation at $v \in \VVV_n$ by $\elzero_v \in \Hom_\C^{\ell}(M_n,M)$.
Define
%----
\[
   d^{k,\ell}\;:\; X^{k,\ell} \to X^{k+1,\ell+1},\;
   \elzero \mapsto \elone
\qquad \text{where}\qquad \elone_v(m) = f(v) \elzero_v(m) - (-1)^k \elzero_v(vm)
\]
for all $m \in M_n$;
here $vm$ is $A_n$-module multiplication where
$v \in \VVV_n$ also stands for the corresponding element of $A_n$
via \eqref{eq:cdg43};
similarly $f(v)\elzero_v(m)$ is $A$-module multiplication.
This is well-defined:
$v\mapsto \elone_v$ is a polynomial homogeneous of degree $k+1$;
and $\elone_v \in \Hom^{\ell+1}$.
%%%%%%%%%%%%%%%%%%%%%%%%%%%%%%%%%%%%%%%%%%%%%%%%%%%%%%%%%%%%%
\begin{lemma} \label{lemma:cell}
For every integer $\ell$, the following sequence is a complex:
\[
C_\ell\quad:\quad 0 \to X^{0,\ell} \xrightarrow{\;\;d^{0,\ell}\;\;}
                        X^{1,\ell+1} \xrightarrow{\;\;d^{1,\ell+1}\;\;}
                        X^{2,\ell+2} \xrightarrow{\;\;d^{2,\ell+2}\;\;} \ldots
\]
Its homologies will be denoted $H^0(C_\ell)$, $H^1(C_\ell)$, and so forth.
\end{lemma}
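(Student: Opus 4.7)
The plan is to verify the required vanishing $d^{k+1,\ell+1} \circ d^{k,\ell} = 0$ by a direct substitution. The reason it holds is transparent: both sides of the map $f$ land in algebras in which every vector squares to zero. On the target, $A$ is a quotient of the exterior algebra $\lalg$, so $f(v)^2 = 0$ in $A$ for every $v \in \VVV_n$. On the source, $A_n$ (through which $M_n$ is a module) is a quotient of $\lalg_n$, so $v^2 m = 0$ for every $v \in \VVV_n$ and $m \in M_n$. These two ``$v^2=0$'' relations are precisely what the calculation uses.

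Concretely, fix $\elzero \in X^{k,\ell}$ and write $\elone = d^{k,\ell}\elzero$. Unpacking the definition gives $\elone_v(m) = f(v)\elzero_v(m) - (-1)^k \elzero_v(vm)$, and replacing $m$ by $vm$ gives $\elone_v(vm) = f(v)\elzero_v(vm) - (-1)^k \elzero_v(v^2 m)$. Substituting both expressions into
\[
(d^{k+1,\ell+1}\elone)_v(m) \;=\; f(v)\elone_v(m) - (-1)^{k+1}\elone_v(vm)
\]
and expanding yields
\[
f(v)^2 \elzero_v(m) \;-\; \bigl[(-1)^k + (-1)^{k+1}\bigr]\, f(v)\elzero_v(vm) \;+\; (-1)^{2k+1}\elzero_v(v^2 m).
\]
The middle bracket is zero; the first summand vanishes because $f(v)^2 = 0$ in $A$; the last summand vanishes because $v^2 m = 0$ in $M_n$. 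Hence $(d^{k+1,\ell+1}\elone)_v(m) = 0$ for all $v$ and $m$, proving $d^{k+1,\ell+1} \circ d^{k,\ell} = 0$.

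There is no substantive obstacle here; the statement is essentially a formal consequence of the definitions. The only thing to track with care is the sign bookkeeping, which collapses automatically by design of the $(-1)^k$ factor in $d^{k,\ell}$. In effect, the lemma encodes the fact that the $v^2 = 0$ relation holds on both sides of $f$, mimicking the standard Koszul-type complex associated to an exterior algebra.
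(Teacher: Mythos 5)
Your proof is correct and is essentially identical to the paper's: the same substitution, the same cancellation of the cross terms via $(-1)^k + (-1)^{k+1} = 0$, and the same two facts $f(v)^2 = 0$ in $A$ and $v^2 = 0$ in $A_n$.
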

\begin{proof} This calculation uses $f(v)^2 = 0 \in A$ and $v^2 = 0 \in A_n$:
\begin{multline*}
(d^{k+1,\ell+1}d^{k,\ell} \elzero)_v(m)
= f(v) (d^{k,\ell} \elzero)_v(m) - (-1)^{k+1} (d^{k,\ell}\elzero)_v(vm)\\
= f(v) \big( f(v) \elzero_v(m)
   - (-1)^k \elzero_v(vm)
     \big)
   - (-1)^{k+1} \big(
      f(v) \elzero_v(vm) - (-1)^ k \elzero_v(v^2m)
   \big)  = 0
\end{multline*}
\qed\end{proof}
%%%%%%%%%%%%%%%%%%%%%%%%%%%%%%%%%%%%%%%%%%%%%%%%%%%%%%%%%%%%%%
\begin{theorem}[A vanishing theorem for $H^0$ and $H^1$] \label{theorem:vanish}
In both Case 1 and Case 2 below one has:
If $\ell < 0$ then $H^0(C_{\ell}) = 0$;
and if $\ell+1 < 0$ then $H^1(C_{\ell}) = 0$.
\begin{itemize}
\item Case 1: $M = A g$ and $M_n = A_n g$
where $g$ is a generator of degree $0$\footnote{%
So this is the case of two free modules of rank one.}.
\item Case 2: 
$M = (Ag_1 \oplus Ag_2 \oplus Ag_3)/S$
and
$M_n = (A_ng_1 \oplus A_ng_2 \oplus A_ng_3)/S_n$
where $g_1$, $g_2$, $g_3$ are generators of degree $0$;
$S$ is the graded left $A$-submodule generated by
\begin{equation}\label{eq:8gens}
\begin{aligned}
 & \crx{1}{} g_1 - \crx{2}{} g_2
&& \crx{1}{} g_1 - \crx{3}{} g_3
&& \crx{1}{} g_2 + \crx{2}{} g_1
&& \crx{2}{} g_3 + \crx{3}{} g_2\\
 & \crx{3}{} g_1 + \crx{1}{} g_3
&& \crx{0}{} g_1 + i\crx{2}{} g_3
&& \crx{0}{} g_2 + i\crx{3}{} g_1
&& \crx{0}{} g_3 + i\crx{1}{} g_2
\end{aligned}
\end{equation}
and $S_n$ is the graded left $A_n$-submodule generated by
the same eight relations but with each occurrence of $\crx{\mu}{}$
in \eqref{eq:8gens} replaced by $\crx{\mu}{1}$ for $\mu = 0,1,2,3$.
\end{itemize}
\end{theorem}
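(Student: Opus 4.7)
The two vanishing claims are of different character and I would treat them in succession.

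For $H^0$, my plan is to show that any $0$-cocycle is forced to vanish by an $A_n$-linearity plus degree argument. A cocycle $\phi_0 \in X^{0,\ell}$ satisfies $\phi_0(vm) = f(v)\phi_0(m)$ for all $v \in \VVV_n$ and $m \in M_n$. The map $f: \VVV_n \to \VVV$ extends to an algebra homomorphism $A_n \to A$ because every defining relation of $A_n$ (the exterior-algebra relations $v^2 = 0$ together with the three quadratic relations $\crx{0}{j}\crx{1}{j} - i\crx{2}{j}\crx{3}{j}$, etc.) maps to a defining relation of $A$. Hence $\phi_0$ extends uniquely to an $A_n$-linear map $M_n \to f^\ast M$. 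In both cases the module $M_n$ is generated in degree zero while $M$ is concentrated in degrees $0,1,2$ (the target has Hilbert polynomial $1+4t+3t^2$ in Case 1, and is a quotient of three copies thereof in Case 2). For $\ell < 0$ the map forces the generators to land in $M^\ell = 0$, hence $\phi_0 = 0$.

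For $H^1$, my approach is to construct a primitive of a given cocycle $\psi$ by induction on the degree filtration of $M_n$. Since $\ell + 1 < 0$ means $\ell \leq -2$ and $M^\ell = 0$, the map $\phi$ must vanish on the degree-zero generators of $M_n$; the equation $d\phi = \psi$ then prescribes $\phi(vm) := f(v)\phi(m) - \psi_v(m)$. Well-definedness of this recursion comes down to three checks against the defining relations of $M_n$ (and of $S_n$ in Case 2):
\begin{itemize}
\item The antisymmetry $v_1 v_2 + v_2 v_1 = 0$ in $A_n$ is preserved as soon as the polarized cocycle identity for $v = v_1 + v_2$ holds, which it does by assumption on $\psi$.
\item The quadratic relations $\crx{0}{j}\crx{1}{j} - i\crx{2}{j}\crx{3}{j}$ (and cyclic variants): the main contribution $(\crx{0}{}\crx{1}{} - i\crx{2}{}\crx{3}{})\phi(m)$ vanishes by the analogous relation in $A$, and the remaining $\psi$-corrections must cancel via the polarized cocycle identity applied to the pairs $(\crx{0}{j},\crx{1}{j})$ and $(\crx{2}{j},\crx{3}{j})$.
\item For Case 2, the eight defining relations of $S_n$, each of the form $\crx{\mu}{1} g_i \pm \crx{\nu}{1} g_j$: compatibility reduces to the corresponding relation in $S$ plus the same type of polarized cocycle cancellation.
\end{itemize}

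The main obstacle will be the second bullet (and its Case 2 analogue): the combinatorial bookkeeping required to show that every ``error'' produced by the recursion is annihilated by a polarized cocycle identity of $\psi$. A more conceptual route I would consider as a backup is to identify $C_\bullet$ with a complex computing an Ext group of $A_n$-modules via a Koszul-type resolution of $M_n$ (using the quadratic structure of $A$ and the isomorphism $A_n \simeq A^{\otimes n}$); the vanishing in negative internal degrees would then follow from concentration properties of $M$. For the specific bound $\ell + 1 < 0$, however, I expect the direct recursive argument sketched above to suffice and to be the natural path the authors take.
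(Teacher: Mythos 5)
Your $H^0$ argument and your recursive construction of a primitive for $H^1$ (define $\elzero$ on the free tensor module by $\elzero(v_N\cdots v_1G)=\elone_{v_N}(v_{N-1}\cdots v_1G)+f(v_N)\elzero(v_{N-1}\cdots v_1G)$, then check descent through $v^2=0$, through the quadratic relations, and through $S_n$) are exactly the paper's strategy, and the first and third descent checks go through as you say: they reduce to the polarized cocycle identity and to degree reasons, respectively. But the second bullet, which you correctly flag as the main obstacle, is not resolved by your proposal, and it is not merely ``combinatorial bookkeeping.'' Descent through the relation $\crx{0}{j}\crx{1}{j}-i\crx{2}{j}\crx{3}{j}$ requires the identity $\elone_{\crx{0}{j}}(\crx{1}{j}G)=i\,\elone_{\crx{2}{j}}(\crx{3}{j}G)$ (and its cyclic variants). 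For $\ell+2<0$ this is trivial by degree, but in the borderline case $\ell+2=0$ the polarized cocycle identity only yields the antisymmetry $\elone_v(wG)=-\elone_w(vG)$ and relations among the quantities $\elone_{\crx{\mu}{j}}(\crx{\nu}{j}\crx{\rho}{j}G)$ and $\crx{\mu}{}\elone_{\crx{\nu}{j}}(\crx{\rho}{j}G)$; it does not by itself force the three combinations $Z_i=\elone_{\crx{0}{j}}(\crx{\sigma(1)}{j}G)-i\,\elone_{\crx{\sigma(2)}{j}}(\crx{\sigma(3)}{j}G)$ to vanish.

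The paper closes this gap by a computation specific to this algebra: eliminating the variables $r_{\mu\nu\rho}=\elone_{\crx{\mu}{j}}(\crx{\nu}{j}\crx{\rho}{j}G)$ from the finite linear system produced by the cocycle identity and the relations $\crx{0}{j}\crx{1}{j}=i\crx{2}{j}\crx{3}{j}$ (etc.) in $A_n$, one obtains $WZ=0$ for an explicit $8\times 3$ matrix $W$ with degree-one entries in $A$, and then one checks by hand that $W$ is injective on $(\C g)^3$ (Case 1) resp.\ $(\C g_1\oplus\C g_2\oplus\C g_3)^3$ (Case 2). This injectivity is a genuinely contingent fact about the self-dual-form relations defining $A$ and about the specific modules $M$; if it failed, the theorem would be false in the top degree. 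So your proposal is correct in architecture but incomplete at precisely the step on which the theorem hinges; the missing ingredient is this explicit rank computation, not a formal consequence of the cocycle condition. (Your BGG/Ext backup is mentioned in the paper only as a remark that was not pursued.)
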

%%%%%%%%%%%%%%%%%%%%%%%%%%%%%%%%%%%%%%%%%%%%%%%%%%%%%%%
\begin{remark}\label{remark:eeee}
In Case 2, $M$ is isomorphic as an $\lalg$-module
to the $\lalg$-submodule of $\lalg$ generated by
$\crx{0}{}\crx{1}{} + i\crx{2}{}\crx{3}{}$,
$\crx{0}{}\crx{2}{} + i\crx{3}{}\crx{1}{}$,
$\crx{0}{}\crx{3}{} + i\crx{1}{}\crx{2}{}$
in the role of $g_1,g_2,g_3$ respectively,
which one can check is also an $A$-module.
The gradings match after a degree shift by $2$.
\end{remark}
%rels2={
%a[1,1]-a[2,2]==0,a[1,1]-a[3,3]==0,
%a[1,2]+a[2,1]==0,a[2,3]+a[3,2]==0,a[3,1]+a[1,3]==0,
%a[0,1]+I*a[2,3]==0,a[0,2]+I*a[3,1]==0,a[0,3]+I*a[1,2]==0};
%avars2=Flatten[Table[a[j,k],{j,0,3},{k,1,3}]];
%rels3=Flatten[Table[{
%a[i,j,k]+a[j,i,k]==0,
%a[0,1,k]-I*a[2,3,k]==0,
%a[0,2,k]-I*a[3,1,k]==0,
%a[0,3,k]-I*a[1,2,k]==0,
%a[i,1,1]-a[i,2,2]==0,a[i,1,1]-a[i,3,3]==0,
%a[i,1,2]+a[i,2,1]==0,a[i,2,3]+a[i,3,2]==0,a[i,3,1]+a[i,1,3]==0,
%a[i,0,1]+I*a[i,2,3]==0,a[i,0,2]+I*a[i,3,1]==0,a[i,0,3]+I*a[i,1,2]==0},{i,0,3},{j,0,3},{k,1,3}]]//DeleteDuplicates;
%avars3=Flatten[Table[a[i,j,k],{i,0,3},{j,0,3},{k,1,3}]];
%Length[avars2]-Length[Reduce[rels2]] (* 4 *)
%Length[avars3]-Length[Reduce[rels3]] (* 1 *)
The Hilbert series are as follows:
\begin{align*}
& \text{Case 1:} &    M : &\quad 1+4t+3t^2 &
      M_n : &\quad (1+4t+3t^2)^n\\
& \text{Case 2:} &    M : &\quad 3+4t+t^2 &
      M_n : &\quad (3+4t+t^2) (1+4t+3t^2)^{n-1}
\end{align*}
As modules, $M$ and $M_n$ are finitely generated by degree zero elements.
%%%%%%%%%%%%%%%%%%%%%%%%%%%%%%%%%%%%%%%%%%%%%%%%%%%%%%%%
\begin{remark}[The multilinear perspective] \label{remark:prodcont}
This is a continuation of Remark \ref{remark:prod}.
Denoting by $\otimes$ the $\C$-tensor product as before,
there are canonical isomorphisms
\begin{alignat*}{5}
& \text{Case 1:}\qquad &
 M & \simeq E^0 &\qquad M_n & \simeq M^{\otimes n} \simeq (E^0)^{\otimes n}\\
& \text{Case 2:} &
 M & \simeq E^1 & M_n & \simeq M \otimes (Ag)^{\otimes (n-1)} \simeq E^1 \otimes (E^0)^{\otimes (n-1)}
\end{alignat*}
where $E^r = E^{r,0} \oplus E^{r,1} \oplus E^{r,2}$ with $r=0,1$ are the graded modules given,
in the notation of Section \ref{sec:defs}, by
$E^{0,0} = \Lambda^0$, $E^{0,1} = \Lambda^1$, $E^{0,2} = \Lambda^2_+$
and
$E^{1,0} = \Lambda^2_+$, $E^{1,1} = \Lambda^3$, $E^{1,2} = \Lambda^4$.
So elements of $\Hom_\C(M_n,M)$ may equivalently be viewed as multilinear maps
with $n$ slots,
and elements of $X^{k,\ell}$ as multilinear maps depending polynomially
on a parameter in $\VVV_n$. 
One can translate the definition of $C_\ell$ and its differential
to this perspective.
\end{remark}
%%%%%%%%%%%%%%%%%%%%%%%%%%%%%%%%%%%%%%%%%%%%%%%%%%%%%%%%%%%
\begin{remark}
This homology could also be computed using the
BGG correspondence between graded $\lalg_n$-modules
and linear $S \VVV_n^\ast$-complexes;
see especially \cite[Theorem 7.8]{eb}\footnote{%
Summary:
Let $U$ be a vector space, $\dim U < \infty$;
let $\Lambda = \Lambda U$ and $S = S U^\ast$;
let $\mx \subset \Lambda$ be the unique maximal ideal;
let $P$ be a graded $\Lambda$-module, $\dim P < \infty$;
let $P^\ast$ be the dual vector space which is canonically a $\Lambda$-module;
let $F \to P^\ast \to 0$ be a free resolution of $P^\ast$ as a $\Lambda$-module.
A canonical differential on $S \otimes_\C P$
is given by acting from the left with $\one \in \End(U) \simeq U^\ast \otimes U$.
Then there are canonical isomorphisms
\[
H(F/\mx F) \;\simeq\; \Tor^{\Lambda}(P^\ast,\Lambda/\mx) \;\simeq\; H(S \otimes_\C P)^\ast
\]
of bigraded vector spaces;
the bigradings are linear combinations of the homological and module degrees
detailed in \cite[Theorem 7.8]{eb}.
This is based on the two ways of computing $\Tor$:
the right $\simeq$ comes from using the minimal free resolution of $\Lambda/\mx$
in the proof of \cite[Theorem 7.8]{eb};
the left $\simeq$ is from the definition of $\Tor$
 using a free resolution of its left argument.
If $F$ is minimal then $H(F/\mx F) \simeq F/\mx F$.
So the graded Betti numbers of $P^\ast$,
which can be read off from any minimal free resolution, 
determine the dimensions of the individual parts of
the bigraded vector space $H(S \otimes_{\C} P)$.
No attempt was made to apply this.}.
%----------------
To apply this here,
one would endow $\Hom_\C(M_n,M)$ with a $\lalg_n$-module structure\footnote{%
Define a $\lalg_n$-module structure on $\Hom_{\C}(M_n,M)$ by
$(v \cdot \elzero)(m) = f(v) \elzero(m) - (-1)^{\elzero} \elzero(vm)$.
Then define $D^{k,\ell}: X^{k,\ell} \to X^{k+1,\ell+1}, \elzero \mapsto \elone$
by $\elone_v(m) = f(v)\elzero_v(m) - (-1)^{\ell} \elzero_v(vm)$
or, using the module structure, $\elone_v = v \cdot \elzero_v$
The $D^{k,\ell}$ are equivalent to the $d^{k,\ell}$:
Define involutions
$i^{k,\ell} \in \End(X^{k,\ell})$
by $(i^{k,\ell} \elzero)_v(m) = \elzero_v(\sigma_{k+\ell}m)$,
where $\sigma_p \in \End(M_n)$ is given by $\sigma_p(m) = (-1)^{pm} m$,
then $i^{k+1,\ell+1} \circ d^{k,\ell} \circ i^{k,\ell} = D^{k,\ell}$.}.
\end{remark}

% MACAULAY 2 code using exterior algebra, simple example
% E = QQ[e_0..e_3,SkewCommutative=>true];
% M = coker matrix{{e_0*e_1-e_2*e_3,e_0*e_2+e_3*e_1,e_0*e_3-e_1*e_2}};
% R = res(M,LengthLimit=>10)
% reduceHilbert(hilbertSeries(M))

%%%%%%%%%%%%%%%%%%%%%%%%%%%%%%%%%%%%%%%%%%%%%%%%%%%%%%%%%%
%%%%%%%%%%%%%%%%%%%%%%%%%%%%%%%%%%%%%%%%%%%%%%%%%%%%%%%%%%
\begin{proof}
For $H^0(C_{\ell})=0$,
we must show that 
if $\elzero \in X^{0,\ell}$ and $d^{0,\ell}\elzero=0$ then $\elzero = 0$.
Since $v \mapsto \elzero_v$ is a constant map, write $\elzero_v = \elzero$.
Then $d\elzero=0$ is equivalent to
$\elzero(vm) = f(v) \elzero(m)$
for all $v \in \VVV_n$, $m \in M_n$.
So to prove $\elzero=0$ it suffices to prove that $\elzero$ annihilates the
module generators of $M_n$,
that is, to prove that $\elzero(g)$ respectively
$\elzero(g_1), \elzero(g_2), \elzero(g_3)$
are zero. This holds because these elements of $M$ have degree $\ell < 0$,
but $M$ is zero in negative degree.

For $H^1(C_{\ell}) = 0$,
given $\elone \in X^{1,\ell+1}$ with $d^{1,\ell+1} \elone = 0$,
we must show there exists\footnote{%
By $H^0(C_\ell)=0$ for $\ell < 0$,
hence for $\ell+1 < 0$:
If $\elzero$ exists then it is unique.}
a $\elzero \in X^{0,\ell}$ with $d^{0,\ell} \elzero = -\elone$;
the minus sign is for convenience.
Here $d\elone=0$ is equivalent to
\begin{equation}\label{eq:dv}
f(v) \elone_v(m) + \elone_v(vm) = 0
\end{equation}
for all $v \in \VVV_n$ and $m \in M_n$.
Since $v \mapsto \elone_v$ is a linear map, $d\elone=0$ is equivalent to
\begin{equation}\label{eq:dv2}
f(v) \elone_w(m) + f(w) \elone_v(m) + \elone_v(wm) + \elone_w(vm) = 0
\end{equation}
for all $v,w \in \VVV_n$ and $m \in M_n$.
In the following, $G$ stands for $g$ in Case 1 respectively
for either of $g_1$, $g_2$, $g_3$ in Case 2.
Note that $\elone_v(G) = 0$ for all $v$
 because it has degree $\ell + 1 < 0$.
We now define $\elzero$ in several steps.
First define a map $\elzero^\talg$ on free $\talg_n$-modules:
\begin{alignat*}{5}
& \text{Case 1:}\casedist
                 & \elzero^\talg &\in \Hom_\C^\ell(\talg_n g,M) \\
& \text{Case 2:} & \elzero^\talg &\in \Hom_\C^\ell(\talg_ng_1 \oplus \talg_ng_2 \oplus \talg_ng_3, M)
\end{alignat*}
where, recursively in the degree $N \geq 0$ of the input,
\begin{equation}\label{eq:recdefx}
\begin{aligned}
\elzero^\talg(v_N\cdots v_1G) & = \elone_{v_N}(v_{N-1} \cdots v_1 G)
+ f(v_N) \elzero^\talg(v_{N-1}\cdots v_1 G)\\
\elzero^\talg(G) & = 0
\end{aligned}
\end{equation}
for all $v_1,\ldots,v_N \in \VVV_n$ and $G = g$ or $G = g_1,g_2,g_3$.
Note that $\elzero^\talg$ is well-defined and has degree $\ell$ as claimed.
The recursion \eqref{eq:recdefx} means
(here the final term is actually zero):
\begin{multline}\label{eq:nonrec}
\elzero^\talg(v_N\cdots v_1G) \;=\;
\elone_{v_N}(v_{N-1}\cdots v_1 G)
\;+\; f(v_N) \elone_{v_{N-1}}(v_{N-2}\cdots v_1 G)
\;+\; \\
\;+\; \ldots
\;+\; f(v_N) \cdots f(v_3) \elone_{v_2}(v_1G)
\;+\; f(v_N) \cdots f(v_2) \elone_{v_1}(G)
\end{multline}
Let $I \subset \talg_n$ be the two-sided ideal generated by all $v^2$ with $v \in \VVV_n$.
Recall $\talg_n/I = \lalg_n$.
Claim: $\elzero^\talg(IG) = 0$. It suffices to check that
$\elzero^\talg(v_N\cdots v_1G) = 0$
when $v_{p+1} = v_p = v$ for $p$ with $0 \leq p \leq N-1$;
here $N \geq 2$.
Since $\elone_w(IG) = 0$
for all $w \in \VVV_n$ because $IG = 0 \subset M_n$,
and since $f(v)^2 = 0 \in A$, it suffices by \eqref{eq:nonrec} to check that
$\elone_v(v-) + f(v) \elone_v(-) = 0$,
which follows from \eqref{eq:dv}.
Therefore $\elzero^\talg(IG)=0$, so $\elzero^\talg$ descends to a map
\begin{alignat*}{5}
& \text{Case 1:}\casedist
                 & \elzero^{\lalg} &\in \Hom_\C^\ell(\lalg_n g,M) \\
& \text{Case 2:} & \elzero^{\lalg} &\in \Hom_\C^\ell(\lalg_ng_1 \oplus \lalg_ng_2 \oplus \lalg_ng_3, M)
\end{alignat*}
Denote by $J \subset \lalg_n$ the left ideal generated by all
\[
E^1_j = \crx{0}{j} \crx{1}{j} - i\crx{2}{j} \crx{3}{j}
\qquad
E^2_j = \crx{0}{j} \crx{2}{j} - i\crx{3}{j} \crx{1}{j}
\qquad
E^3_j = \crx{0}{j} \crx{3}{j} - i\crx{1}{j} \crx{2}{j}
\]
with $j=1\ldots n$.
Since $\lalg_n$ is graded commutative,
$J$ is a two-sided ideal, and $A_n = \lalg_n/J$.
Claim: $\elzero^\lalg(JG) = 0$.
To see this, it suffices to check that
$\elzero^\lalg(v_N\cdots v_3 E^1_j G) = 0$;
here $N \geq 2$. 
($E^2_j$, $E^3_j$ are analogous.)
Since
$\elone_w(JG)=0$ for all $w \in \VVV_n$
because $JG = 0 \subset M_n$,
and since $\elone_w(G)=0$, it suffices by \eqref{eq:nonrec} to check that\footnote{%
In each of these three claimed equations,
$G$ stands for the same element
on both sides.}
\begin{align}\label{eq:v01v23}
\elone_{\crx{0}{j}}(\crx{1}{j}G) & =  i\elone_{\crx{2}{j}}(\crx{3}{j}G) &
\elone_{\crx{0}{j}}(\crx{2}{j}G) & =  i\elone_{\crx{3}{j}}(\crx{1}{j}G) &
\elone_{\crx{0}{j}}(\crx{3}{j}G) & =  i\elone_{\crx{1}{j}}(\crx{2}{j}G)
\end{align}
If $\ell+2 < 0$ then all terms in \eqref{eq:v01v23}
are separately zero;
the proof of \eqref{eq:v01v23}
in the remaining case $\ell+2=0$ is given after this proof.
Therefore $\elzero^\lalg(JG)=0$, so $\elzero^\lalg$ descends to a map
\begin{alignat*}{5}
& \text{Case 1:}\casedist
                 & \elzero^A &\in \Hom_\C^\ell(A_n g,M) \\
& \text{Case 2:} & \elzero^A &\in \Hom_\C^\ell(A_ng_1 \oplus A_ng_2 \oplus A_ng_3, M)
\end{alignat*}
In Case 1 we have $M_n = A_n g$, so set $\elzero = \elzero^A$\footnote{%
This does not conclude Case 1, one must still prove $d\elzero = -\elone$.}.
In Case 2 recall the eight generators of the graded left $A_n$-submodule
$S_n \subset A_n g_1 \oplus A_n g_2 \oplus A_n g_3$. Denote them by
\[
P_1 = \crx{1}{1} g_1 - \crx{2}{1} g_2
\qquad \ldots \qquad
P_8 = \crx{0}{1} g_3 + i\crx{1}{1} g_2
\]
Claim: $\elzero^A(S_n) = 0$. 
It suffices to check that
$\elzero^A(v_N\cdots v_2 P_1) = 0$; here $N \geq 1$. 
($P_2,\ldots, P_8$ are analogous.)
Since
$\elone_w(S_n)=0$ for all $w \in \VVV_n$
because $S_n = 0 \subset M_n$,
it suffices by \eqref{eq:nonrec} to check
\smash{$\elone_{\crx{1}{1}}(g_1) - \elone_{\crx{2}{1}}(g_2) = 0$},
which is true because the $\elone_w$ annihilate all generators
for degree reasons. Hence $\elzero^A(S_n)=0$ and $\elzero^A$ descends to a map
\begin{alignat*}{5}
& \text{Case 2:}\casedist & \elzero &\in \Hom_\C^\ell(M_n, M) \simeq X^{0,\ell}
\end{alignat*}
We are ready to check that $d^{0,\ell}\elzero=-\elone$ which is equivalent to
$\elzero(vm) = \elone_v(m) + f(v)\elzero(m)$
for all $v \in \VVV_n$ and $m \in M_n$.
This follows from the recursion \eqref{eq:recdefx}
which is satisfied by $\elzero^\talg$ by definition,
and which is therefore also satisfied by $\elzero$.
\qed\end{proof}
%%%%%%%%%%%%%%%%%%%%%%%%%%%%%%%%%%%%%%%%%%%%%%%%%%%%%%%%%%%%%
\begin{proof}[of equation \eqref{eq:v01v23}]
The case of interest is $\ell + 2 = 0$,
but only assume $\ell + 1 < 0$ for the time being.
Two observations:
\begin{itemize}
\item
$\elone_v(wG) = - \elone_w(vG)$
for all $v,w \in \VVV_n$, by \eqref{eq:dv2}.
\item
$\elone_v(\crx{0}{j} \crx{1}{j}G) = i\elone_v(\crx{2}{j}\crx{3}{j}G)$
for all $v \in \VVV_n$,
because $\crx{0}{j} \crx{1}{j} = i\crx{2}{j} \crx{3}{j}$ in $A_n$.\\
Analogous identities are obtained by cyclically permuting the indices $1,2,3$.
\end{itemize}
For fixed $j$ and $G$ abbreviate the following elements of $M$
(recall \eqref{eq:deff}):
\[
r_{\mu\nu\rho} = \elone_{\crx{\mu}{j}}(\crx{\nu}{j}\crx{\rho}{j}G)
\qquad
s_{\mu\nu\rho} = \crx{\mu}{} \elone_{\crx{\nu}{j}}(\crx{\rho}{j}G)
\]
where $\mu,\nu,\rho = 0,1,2,3$.
By the two observations above and by \eqref{eq:dv2}:
\begin{align*}
   r_{\mu\nu\rho} & = -r_{\mu\rho\nu} &
        r_{\mu 0 1} & = ir_{\mu 2 3}\\
   s_{\mu\nu\rho} + s_{\nu\mu\rho} + r_{\mu\nu\rho} + r_{\nu\mu\rho} & = 0 &
        r_{\mu 0 2} & = ir_{\mu 3 1}\\
   s_{\mu\nu\rho} & = -s_{\mu\rho\nu} &
        r_{\mu 0 3} & = ir_{\mu 1 2}
\end{align*}
%rels=Flatten[Table[{r[i,j,k]==-r[i,k,j],s[i,j,k]+s[j,i,k]+r[i,j,k]+r[j,i,k]==0,r[i,0,1]==I*r[i,2,3],r[i,0,2]==I*r[i,3,1],r[i,0,3]==I*r[i,1,2],s[i,j,k]+s[i,k,j]==0 },{i,0,3},{j,0,3},{k,0,3}]]//DeleteDuplicates;
%rvars=Extract[rels,Position[rels,_r]]//DeleteDuplicates;
%srels=Eliminate[rels,rvars];
%srels2=Flatten[Table[{(s[0,0,1]-I*s[0,2,3])+I*(s[3,0,2]-I*s[3,3,1])==0,(s[3,0,1]-I*s[3,2,3])+I*(s[0,0,2]-I*s[0,3,1])==0,(s[0,0,2]-I*s[0,3,1])+I*(s[1,0,3]-I*s[1,1,2])==0,(s[1,0,2]-I*s[1,3,1])+I*(s[0,0,3]-I*s[0,1,2])==0,(s[0,0,3]-I*s[0,1,2])+I*(s[2,0,1]-I*s[2,2,3])==0,(s[2,0,3]-I*s[2,1,2])+I*(s[0,0,1]-I*s[0,2,3])==0,s[1,0,1]-I*s[1,2,3]==s[2,0,2]-I*s[2,3,1]==s[3,0,3]-I*s[3,1,2],s[i,j,k]+s[i,k,j]==0 },{i,0,3},{j,0,3},{k,0,3}]]//DeleteDuplicates;
%And@@(srels2/.First[Solve[srels]]//ExpandAll)
%And@@(srels/.First[Solve[srels2]]//ExpandAll)
Eliminating all $r_{\mu\nu\rho}$
from this finite linear system, one finds 
$(s_{001} - is_{023}) + i(s_{302} - is_{331}) = 0$
and $(s_{301} - is_{323}) + i(s_{002} - is_{031}) = 0$
and $s_{101} - is_{123} = s_{202}-is_{231}$ 
and analogous identities obtained by cyclically permuting $1,2,3$.
It is convenient to write this in matrix form, $WZ = 0$.
Here $W$ is an $8 \times 3$ matrix with entries in $A$ of degree one,
viewed as a degree one map $M^3 \to M^8$, and $Z \in M^3$ is a vector:
\[
W = {\footnotesize \begin{pmatrix}
\crx{0}{} & i\crx{3}{} & 0\\
\crx{3}{} & i\crx{0}{} & 0\\
0 & \crx{0}{} & i\crx{1}{} \\
0 & \crx{1}{} & i\crx{0}{} \\
i\crx{2}{} & 0 & \crx{0}{} \\
i\crx{0}{} & 0 & \crx{2}{} \\
\crx{1}{} & -\crx{2}{} & 0\\
0 & \crx{2}{} & -\crx{3}{} \\
\end{pmatrix}}
\qquad 
Z = \begin{pmatrix}
\elone_{\crx{0}{j}}(\crx{1}{j}G) - i\elone_{\crx{2}{j}}(\crx{3}{j}G)\\
\elone_{\crx{0}{j}}(\crx{2}{j}G) - i\elone_{\crx{3}{j}}(\crx{1}{j}G)\\
\elone_{\crx{0}{j}}(\crx{3}{j}G) - i\elone_{\crx{1}{j}}(\crx{2}{j}G)
\end{pmatrix} 
\]
If $\ell+2<0$ then we have $Z=0$ for degree reasons.
If $\ell+2=0$ then $Z \in (\C g)^3$ in Case 1;
$Z \in (\C g_1 \oplus \C g_2 \oplus \C g_3)^3$ in Case 2.
But one checks
%g1, g2, g3
%e0g1, e1g1, e2g1, e3g1
%e0g1 = (1,0,0,0)
%e0g2 = -ie3g1 = (0,0,0,-i)
%e0g3 = ie2g1 = (0,0,i,0)
%e1g1 = (0,1,0,0)
%e1g2 = -e2g1 = (0,0,-1,0)
%e1g3 = -e3g1 = (0,0,0,-1)
%e2g1 = (0,0,1,0)
%e2g2 = e1g1 = (0,1,0,0)
%e2g3 = ie0g1 = (i,0,0,0)
%e3g1 = (0,0,0,1)
%e3g2 = -e2g3 = -ie0g1 = (-I,0,0,0)
%e3g3 = e1g1 = (0,1,0,0)
%
%(* Case 1: basis g (input) and e0g, e1g, e2g, e3g (output) *)
%e0={{1,0,0,0}}//Transpose;
%e1={{0,1,0,0}}//Transpose;
%e2={{0,0,1,0}}//Transpose;
%e3={{0,0,0,1}}//Transpose;
%(* Case 2: basis g1, g2, g3 (input) and e0g1, e1g1, e2g1, e3g1 (output) *)
%e0={{1,0,0,0},{0,0,0,-I},{0,0,I,0}}//Transpose;
%e1={{0,1,0,0},{0,0,-1,0},{0,0,0,-1}}//Transpose;
%e2={{0,0,1,0},{0,1,0,0},{I,0,0,0}}//Transpose;
%e3={{0,0,0,1},{-I,0,0,0},{0,1,0,0}}//Transpose;
%W=ArrayFlatten[{
%{e0,I*e3,0},{e3,I*e0,0},{0,e0,I*e1},{0,e1,I*e0},
%{I*e2,0,e0},{I*e0,0,e2},{e1,-e2,0},{0,e2,-e3}}];
%W//Dimensions
%W//NullSpace
that $W$ is injective as a map\footnote{%
The dimensions are as follows.
Since $W$ has degree one,
let us only take into account the degree one subspace of $M^8$
which is $\simeq (\C^4)^8 \simeq \C^{32}$ in both cases.
With this understanding, the injectivity claim is for $\C$-linear maps
of dimensions
$\C^3 \to \C^{32}$ respectively $\C^9 \to \C^{32}$.}
\begin{alignat*}{5}
& \text{Case 1:}\casedist
                 & (\C g)^3 & \to M^8 \\
& \text{Case 2:} & (\C g_1 \oplus \C g_2 \oplus \C g_3)^3 & \to M^8
\end{alignat*}
Hence $WZ=0$ implies $Z=0$ which is \eqref{eq:v01v23}.
\qed\end{proof}
%%%%%%%%%%%%%%%%%%%%%%%%%%%%%%%%%%%%%%%%%%%%%%%%%%%%%%%%%%%%%

\section{$\bv{\Box}$-algebra for Yang-Mills}\label{sec:combineNEW}

%%%%%%%%%%%%%%%%%%%%%%%%%%%%%%%%%%%%%%%%%%%%%%%%%%%%%%%%%%%%%%%%
\subsection{Proof of Theorem \ref{theorem:mainh}} \label{sec:proofmain}

Here we construct the $\bv{\Box}$-algebra
claimed in Theorem \ref{theorem:mainh}.
(Then, Theorem \ref{theorem:2ax} follows by writing out two axioms, namely,
in the notation of Lemma \ref{lemma:specialspecial}:
the 3-ary identity $b_3=0$ to get \eqref{eq:exact};
the 4-ary identity $c_4=0$ to get \eqref{eq:identity4}.)
The conventions in
Remarks \ref{remark:dualspaces} and \ref{remark:completion} are in force.
This proof is organized around Lemma \ref{lemma:specialspecial}.
Use the following data:
\[
\begin{tabular}{r|l}
data in Section \ref{sec:ginfMAIN} & Yang-Mills in this section\\
\hline
$V$ & $\ax[2]$, see Section \ref{sec:defs}\\
$\Hopf$ & $\C[\p_0,\ldots,\p_3]$\\

$\Qx$ & $\Box = \eta^{\mu\nu}\p_\mu\p_\nu$
\end{tabular}
\]
So $\ax = V[-2]$\footnote{%
For a graded vector space $V$,
one denotes by $V[k]^i = V^{i+k}$ the shifted space.},
that is $V^i = \ax^{i+2}$, therefore $V = V^{-2} \oplus \ldots \oplus V^1$.
The coproduct of $\Hopf$ is given by
$\p_\mu \mapsto \p_\mu \otimes \one + \one \otimes \p_\mu$.
The $\Hopf$-module structure on $\ax$ is given by differentiation,
the one on $\ax^{\otimes n}$ is induced by the coproduct.
Note that $\Qx = \Box$ satisfies \eqref{eq:7term}.
\step
%%%%%%%%%%%%%%%%%%%%%%%%%%%%%%%%%%%%%%%%%%%%%%%%%%%%%%%%%
\emph{The first three operations.}
Recall that, by assumption of Theorem \ref{theorem:mainh}, an $h$
satisfying the requirements in Section \ref{sec:intro} is given,
in particular \eqref{eq:hxs} holds.
\begin{itemize}
\item The differential in \eqref{eq:cpx} maps $d: \ax^i \to \ax^{i+1}$,
equivalently $V^i \to V^{i+1}$, so its dual maps
$(V^\ast)^i \to (V^\ast)^{i+1}$.
By \eqref{eq:dparx} it defines a derivation also denoted
$d \in \Der^1(GV^\ast)^{\Hopf}$.
Analogously, $h: \ax^i \to \ax^{i-1}$
in \eqref{eq:hxs} gives $h \in \Der^{-1}(GV^\ast)^{\Hopf}$.
Set $\mu_1 = d+h$.
\item
The gca product maps $\ax^i \otimes \ax^j \to \ax^{i+j}$,
or $V^i \otimes V^j \to V^{i+j+2}$,
so formally its dual maps
$(V^{\ast})^i \to \sum_{i_1+i_2 = i+2} (V^\ast)^{i_1} \otimes (V^\ast)^{i_2}$.
It is graded commutative,
so defines a derivation $\theta_2 \in \Der^2(GV^\ast)^{\Hopf}$
by \eqref{eq:dparx}
consistent with \eqref{eq:thk}.
Set $\nu_2 = d_{\tt}\theta_2 \in \Der^1(GV^\ast)^{\Hopf}$.
\end{itemize}
Some assumptions in Lemma \ref{lemma:specialspecial} follow:
$[d,d]=[h,h] = 0$ and $[d,h] = \curv$ using \eqref{eq:hxs};
and $b_2=c_3=0$ because
$[d,\nu_2] = [\nu_2,\nu_2]=0$ by the dgca Leibniz rule and associativity.

%%%%%%%%%%%%%%%%%%%%%%%%%%%%%%%%%%%%%%%%%%%%%%%%%%
\begin{definition}[Local operators] \label{def:locop}
Denote by $\mathcal{C}_n$ the space of all maps $\ax^{\otimes n} \to \ax$
that are $C^\infty(\R^4)$-linear in all $n$ arguments and
invariant under translations on $\R^4$.
Denote
\begin{equation}\label{eq:dndef}
\mathcal{D}_n
  \;=\;
\big\{\;
\textstyle\sum_{j=1}^n c_j^\mu (\one^{\otimes(j-1)} \otimes \p_\mu \otimes \one^{\otimes(n-j)}) : \ax^{\otimes n} \to \ax
\;\mid\; c_j^\mu \in \mathcal{C}_n
\;\big\}
\end{equation}
a class of translation invariant, homogeneous first order partial
differential operators.
Note that $\mathcal{C}_n$ and $\mathcal{D}_n$ are finite-dimensional.
All elements of $\mathcal{C}_n$ and $\mathcal{D}_n$
are $\Hopf$-equivariant.
\end{definition}
%%%%%%%%%%%%%%%%%%%%%%%%%%%%%%%%%%%%%%%%%%%%%%%%%%%%%%
\emph{Row decomposition.}
Denote by $\ax^i = \oplus_{r=0,1} \ax^{r,i}$
the decomposition of the complex \eqref{eq:cpx}
into the upper row, $r=0$,
and the lower row, $r=1$.
For instance, $\ax^{1,2} = \Omega^3$.
This leads to decompositions $d = d^0 + d^1$ and $h = h^0 + h^1$ where
\begin{align*}
d^s \;:\; \ax^{r,i}
& \;\to\; \ax^{r+s-1,i+1}\\
%----
h^s \;:\; \ax^{r,i}
& \;\to\; \ax^{r+s-1,i-1}
\end{align*}
for $s=0,1$. Here
$d^0,\;h^0 \in\mathcal{C}_1$
and $d^1,\;h^1 \in \mathcal{D}_1$.
In particular, $d^1$ maps the upper (lower) row of $\ax$ to the upper (lower) row.
In the notation of Section \ref{sec:defs},
\begin{equation}\label{eq:dshk3hr3}
d^1 = \p_\mu \otimes e^\mu : \ax^{r,i} \to \ax^{r,i+1}
\end{equation}
here viewed as an element $d^1 \in \Der^1(GV^\ast)^{\Hopf}$
with $[d^1,d^1]=0$.

%%%%%%%%%%%%%%%%%%%%%%%%%%%%%%%%%%%%%%%%%%%%%%%%%%%%%%%%%%%%%%%%%%%%%%%%%%%%%
\begin{comment}
\indent\indent
We also summarize the axioms of the algebra in Theorem \ref{theorem:mainh}.
The nontrivial axioms of type $\ax \to \ax$ are $d^2 = 0$
and \eqref{eq:hxs}.
Those of type $\ax^{\otimes n} \to \ax$ with $n \geq 2$
have the following schematic structure,
where juxtaposition stands for
any composition of multilinear maps:
\begin{align*}
\theta \theta &= 0 \\
\gamma \theta + \theta \gamma &= 0 &
\gamma + h\theta + \theta h &= 0 \\
\gamma \gamma + \Box \theta + \theta \Box &= 0 &
\gamma h + h \gamma &= 0 
\end{align*}
Here $\theta\theta = 0$ is vacuous for $n=2$ since
no composition of two $\theta$ has two slots;
for $n=3$ it is the associativity of the gca product;
for $n=4$ it contains \eqref{eq:identity4}.
Here $\gamma \theta + \theta \gamma = 0$
for $n=2$ says that $d$ is a derivation for the gca product;
for $n=3$ it contains \eqref{eq:exact}
when combined with the $n=2$ instance of $\gamma + h\theta + \theta h = 0$,
which defines $\gamma_2$ in terms of $h$ and $\theta_2$.
Some axioms involve the quadratic element $\Box \in \Hopf$.
Details in Sections \ref{subsec:mod}, \ref{subsec:reduction}, \ref{sec:combine}.
\end{comment}

%%%%%%%%%%%%%%%%%%%%%%%%%%%%%%%%%%%%%%%%%%%%%%%%%%%%%%%%%%%%%%%
\step
%%%%%%%%%%%%%%%%%%%%%%%%%%%%%%%%%%%%%%%%%%%%%%%%%%%%%%%%%%%%%%%%%%%%%
\emph{The notation $\theta_n$.}
Below we inductively construct operations $\theta_n$ with $n \geq 2$.
By abuse of notation, the operation $\theta_n$ is simultaneously
(the correspondence is via \eqref{eq:dparx}):
\begin{itemize}
\item A graded symmetric and $\Hopf$-equivariant map
$\theta_n : \ax^{\otimes n} \to \ax$ of degree $4-2n$.\\
Equivalently, a map $V^{\otimes n} \to V$
or formally $V^\ast \to (V^\ast)^{\otimes n}$ of degree $2$.
\item A derivation $\theta_n \in \Der^2(GV^\ast)^{\Hopf}$,
as in Lemma \ref{lemma:specialspecial}.
\end{itemize}
Graded symmetry means
\begin{equation}\label{eq:gras}
\theta_n(\ldots,x,y,\ldots) = (-1)^{xy} \theta_n(\ldots,y,x,\ldots)
\end{equation}
The inductive construction will be such that
\begin{equation}\label{eq:ztu2}
\theta_n \;:\; \ax^{r_1,i_1} \otimes \cdots \otimes \ax^{r_n,i_n}
 \;\to\; \ax^{r_1+\ldots+r_n,\;i_1+\ldots+i_n+4-2n}
\qquad
\text{and}
\qquad \theta_n \in \mathcal{C}_n 
\end{equation}
The operation $\theta_2$ defined above is the gca product in $\ax$,
and is consistent with \eqref{eq:ztu2}.
The operation $\theta_3$ constructed below is the one that Theorem \ref{theorem:2ax} refers to.
\step
\emph{Outline of the induction.}
Below we inductively construct $\theta_3,\theta_4,\ldots$
so that (in the notation of Lemma \ref{lemma:specialspecial})
$b_3,c_4,b_4,c_5,b_5,\ldots$ vanish
and so that \eqref{eq:gras}, \eqref{eq:ztu2} hold.
More precisely,
using Part B of Lemma \ref{lemma:specialspecial}, one proceeds as follows,
recalling $b_n = -[d,\theta_n] + q_n$:
 Use $[d,q_3] = 0$
to construct $\theta_3$ so that $[d,\theta_3] = q_3$;
use $[d,c_4] = 0$ to show that $c_4 = 0$;
use $[d,q_4] = 0$
to construct $\theta_4$ so that $[d,\theta_4] = q_4$;
use $[d,c_5] = 0$ to show that $c_5 = 0$; etc.
At each step, the construction of $\theta_n$
uses the homology vanishing theorem, Theorem \ref{theorem:vanish}.
\step
%%%%%%%%%%%%%%%%%%%%%%%%%%%%%%%%%%%%%%%%%%%%%%%%%%%%%%%%%%%%%%%%%%
\emph{Properties of $q_n$ and $c_n$ known inductively.}
The $q_n$ and $c_n$ are defined once $\theta_2,\ldots,\theta_{n-1}$ are defined.
Similar to the notation $\theta_n$,
here too we abuse notation, using \eqref{eq:dparx} and \eqref{eq:bkck}:
$b_n$ and $q_n$ are formally also graded symmetric maps $V^\ast \to (V^\ast)^{\otimes n}$
of degree $3$;
and $c_n$ is also a map $V^\ast \to (V^\ast)^{\otimes n}$
of degree $4$ that 
has some symmetry in the first two respectively the remaining $n-2$
factors.
Suppose we are at a stage of the induction where
$\theta_2,\ldots,\theta_{n-1}$
have been constructed and satisfy \eqref{eq:ztu2},
then one can check\footnote{%
Using \eqref{eqzqkj},
note that $q_n$ is a
sum of composition of three operations,
$h$ and $\theta_m$ and $\theta_{n+1-m}$ with $1<m<n$;
and $c_n$ is a sum of composition of two operations, $\theta_m$ and $\theta_{n+1-m}$
with $1<m<n$.} that the following is true:
\begin{subequations}\label{eq:z2w}
\begin{alignat}{5}
\label{eq:cn4}
c_n &:&\qquad \ax^{r_1,i_1} \otimes \cdots \otimes \ax^{r_n,i_n}
& \;\to\; \ax^{r_1+\ldots+r_n,\;i_1+\ldots+i_n+6-2n}\\
%----
\label{eq:qn4}
q_n^s &:& \ax^{r_1,i_1} \otimes \cdots \otimes \ax^{r_n,i_n}
& \;\to\; \ax^{r_1+\ldots+r_n+s-1,\;i_1+\ldots+i_n+5-2n}
\end{alignat}
\end{subequations}
where $q_n = q_n^0 + q_n^1$ with
\begin{equation}\label{eq:ujn}
c_n,
\;q_n^0 \in \mathcal{C}_n
\qquad\qquad
q_n^1 \in \mathcal{D}_n
\end{equation}
%%%%%%%%%%%%%%%%%%%%%%%%%%%%%%%%%%%%%%%%%%%%%%%%%%%%%%%%%%%%%
\step
\emph{Induction step to prove $c_n=0$, for $n \geq 4$.}
This comes after $\theta_2,\ldots,\theta_{n-1}$
have been constructed and $b_2,c_3,b_3,\ldots,c_{n-1},b_{n-1}$ are known to vanish.
(For $b_n=0$, see below.)
\begin{itemize}
\item
By Part B of Lemma \ref{lemma:specialspecial} we have $[d,c_n]=0$. This has the form
\[
d c_n(x_1,\ldots,x_n) = \textstyle\sum_{i=1}^n \pm c_n(x_1,\ldots,x_{i-1},dx_i,x_{i+1},\ldots,x_n)
\]
for all $x_1,\ldots,x_n \in \ax$.
The signs, which may depend on degrees, are irrelevant.
By \eqref{eq:ujn},
$[d,c_n]$ is a first order partial differential operator.
The homogeneous first order part must separately vanish, $[d^1,c_n]=0$.
Using
\eqref{eq:dshk3hr3} and $c_n \in \mathcal{C}_n$ this implies
\begin{equation}\label{eq:cid4}
e^\mu c_n(x_1,\ldots,x_n)
\;=\;
\pm c_n(x_1,\ldots,x_{i-1},e^\mu x_i,x_{i+1},\ldots,x_n)
\end{equation}
separately for all $\mu=0,\ldots,3$ and $i=1\ldots n$.
\item
Denote by $c_n^{r_1\ldots r_n}$ the map \eqref{eq:cn4},
meaning the restriction of $c_n$ where the $i$-th argument
is in row $r_i$ of $\ax$.
The identities \eqref{eq:cid4} hold separately for each $c_n^{r_1\ldots r_n}$
because $e^\mu$ maps the upper (lower) row of $\ax$ to the upper (lower) row of $\ax$.
Hence $c_n^{r_1\ldots r_n}$ vanishes iff its restriction to
the generators $\oplus_{r=0,1} \ax^{r,r}$, meaning to the leftmost term in each row
in \eqref{eq:cpx}, vanishes.
By \eqref{eq:cn4}, this restriction maps
\[
\ax^{r_1,r_1} \otimes \cdots \otimes \ax^{r_n,r_n}
\;\to\; \ax^{r_1+\ldots+r_n,\;r_1+\ldots+r_n+6-2n}
\]
Since $6-2n < 0$, the space on the right is zero,
so $c_n^{r_1\ldots r_n}=0$, so $c_n=0$.
\end{itemize}
%%%%%%%%%%%%%%%%%%%%%%%%%%%%%%%%%%%%%%%%%%%%%%%%%%%%%%%%%%%%%%%%%%%%%%%
\step
\indent\indent
\emph{Induction step to construct $\theta_n$
                    and to prove $b_n = 0$, for $n \geq 3$.}
This is after $\theta_2,\ldots,\theta_{n-1}$
have been constructed and $b_2,c_3,b_3,\ldots,c_{n-1},b_{n-1},c_n$ are known to vanish.
\begin{itemize}
\item
By Part B of Lemma \ref{lemma:specialspecial}
we have $[d,q_n]=0$. This means that
\[
d q_n(x_1,\ldots,x_n) + \textstyle\sum_{i=1}^n (-1)^{x_1+\ldots+x_{i-1}}
q_n(x_1,\ldots,x_{i-1},dx_i,x_{i+1},\ldots,x_n)\;=\;0
\]
Recall that $q_n : \ax^{\otimes n} \to \ax$
is a graded symmetric map of degree $5-2n$.
By \eqref{eq:ujn},
$[d,q_n]$ is a second order partial differential operator.
The homogeneous second order part must separately vanish, 
$[d^1,q_n^1] = 0$.
Since $q_n^1 \in \mathcal{D}_n$ one has
\[
  q_n^1 = \textstyle\sum_{j=1}^n u^\mu_j (\one^{\otimes (j-1)} \otimes \p_\mu
\otimes \one^{\otimes (n-j)})
\]
for $u^\mu_j \in \mathcal{C}_n$.
The $u^\mu_j: \ax^{\otimes n} \to \ax$ have degree $5-2n$
but may not be graded symmetric.
Now $[d^1,q_n^1]=0$ implies,
separately for all $\mu,\nu=0,\ldots,3$ and $i,j=1,\ldots,n$:
\begin{equation}\label{eq:eueu}
\begin{aligned}
&e^\mu u^\nu_j(x_1,\ldots,x_n)
+ e^\nu u^\mu_i(x_1,\ldots,x_n)\\
& + (-1)^{x_1+\ldots+x_{i-1}}
  u^\nu_j(x_1,\ldots,x_{i-1},e^\mu x_i, x_{i+1},\ldots,x_n)\\
& + (-1)^{x_1+\ldots+x_{j-1}}
  u^\mu_i(x_1,\ldots,x_{j-1},e^\nu x_j, x_{j+1},\ldots,x_n)
= 0
\end{aligned}
\end{equation}
%%%%%%%%%%%%%%%%%%%%%%%%%%%%%%%%%%%%%%%%%%%%%%%%%%%%%%%%%%%%%%%%%%%%%%
\item
Denote by $q_n^{1;r_1\ldots r_n}$ the map \eqref{eq:qn4},
meaning the restriction of $q_n^1$ where the $i$-th argument
is in row $r_i$ of $\ax$;
let $u_j^{\mu;r_1\ldots r_n}$ be the corresponding piece of $u_j^\mu$;
let $u^{r_1\ldots r_n}$ be the collection of all $4n$ maps $(u_j^{\mu;r_1\ldots r_n})$.
Explicitly,
\[
q_n^{1;r_1\ldots r_n},
\;u_j^{\mu;r_1\ldots r_n} \;:\; \ax^{r_1,i_1} \otimes \cdots \otimes \ax^{r_n,i_n}
 \;\to\; \ax^{r_1+\ldots+r_n,\;i_1+\ldots+i_n+5-2n}
\]
%%%%%%%%%%%%%%%%%%%%%%%%%%%%%%%%%%%%%%%%%%%%%%%%%%%%%%%%%%%%%%%%%%%%5
The \eqref{eq:eueu} hold separately for each collection $u^{r_1\ldots r_n}$.
Distinguish Case 1 where $r_i=0$ for all $i$;
Case 2 where $r_j=1$ for one $j$ and $r_i=0$ for $i\neq j$;
otherwise $u^{r_1\ldots r_n}=0$ for degree reasons.
The plan is to apply the respective Cases 1 and 2 of Theorem \ref{theorem:vanish}.
To do this, use $u_j^{\mu;r_1\ldots r_n} \in \mathcal{C}_n$
to encode the collection $u^{r_1\ldots r_n}$ in a linear map
\begin{align}
\label{eq:ztz}
\elone^{r_1\ldots r_n}\;\;:\;\;\VVV_n = \textstyle\oplus_{\mu,i} \C \crx{\mu}{j}
& \to \Hom_\C^{5-2n}(
E^{r_1} \otimes \cdots \otimes E^{r_n},\;E^{r_1+\ldots+r_n})\\
\notag
\crx{\mu}{j} & \mapsto u_j^{\mu;r_1\ldots r_n}
\end{align}
using $\ax^{r,i} = C^\infty(\R^4) \otimes E^{r,i+r}$,
where the vector spaces $E^r$
are defined in Remark \ref{remark:prodcont}.
So
$\elone^{r_1\ldots r_n}
\in X^{1,\ell+1} = X^{1,5-2n}$ (see \eqref{eq:xkl})
via the isomorphisms in
Remarks \ref{remark:prod} and \ref{remark:prodcont}.
%%%%%%
The identities \eqref{eq:eueu}
translate to \eqref{eq:dv2} so that $d^{1,\ell+1} \elone^{r_1\ldots r_n} = 0$
with $\ell+1 = 5-2n < 0$.
%%%%%%%
\item
Hence $H^1(C_\ell) = H^1(C_{4-2n}) = 0$ in Theorem \ref{theorem:vanish}
implies that $\elone^{r_1\ldots r_n} = d^{0,\ell} \elzero^{r_1\ldots r_n}$
for some $\elzero^{r_1\ldots r_n} \in X^{0,\ell}$.
By $H^0(C_\ell) = 0$ in Theorem \ref{theorem:vanish}
and $\ell = 4-2n < 0$, these $\elzero^{r_1\ldots r_n}$ are unique.
Translating the $\elzero^{r_1\ldots r_n}$ using Remarks
\ref{remark:prod} and \ref{remark:prodcont},
there are unique
\[
\theta_n^{r_1\ldots r_n}
\;:\;
\ax^{r_1,i_1} \otimes \cdots \otimes \ax^{r_n,i_n}
\to
\ax^{r_1+\ldots+r_n,\;i_1+\ldots+i_n+4-2n}
\]
in $\mathcal{C}_n$ with $[d^1,\theta_n^{r_1\ldots r_n}] = q_n^{1;r_1\ldots r_n}$.
Denoting by $\theta_n:\ax^{\otimes n} \to \ax$ the direct sum of all these components,
by construction we have $[d^1,\theta_n] = q_n^1$ which explicitly means
\begin{multline*}
q_n^1(x_1,\ldots,x_n)\\
\;=\;
d^1 \theta_n(x_1,\ldots,x_n)
- \textstyle\sum_{i=1}^n (-1)^{x_1+\ldots+x_{i-1}}
\theta_n(x_1,\ldots,x_{i-1},d^1 x_i, x_{i+1},\ldots,x_n)
\end{multline*}
Since $q_n^1$ is graded symmetric, and since $\theta_n$ is unique as stated,
it follows that $\theta_n$ is also graded symmetric, as required.
It is also $\Hopf$-equivariant.
%----------------------------------------
\item Recall that $b_n = -[d,\theta_n] + q_n$
is a graded symmetric map $\ax^{\otimes n} \to \ax$ of degree $5-2n$.
We must show $b_n=0$.
By construction of $\theta_n$,
we have $b_n \in \mathcal{C}_n$, so its pieces are
\begin{equation}\label{eq:btlt}
b_n^{r_1\ldots r_n}
\;:\;
\ax^{r_1,i_1} \otimes \cdots \otimes \ax^{r_n,i_n}
\to
\ax^{r_1+\ldots+r_n-1,\;i_1+\ldots+i_n-1+(6-2n)}
\end{equation}
Part B of Lemma \ref{lemma:specialspecial} implies $[d,b_n]=0$,
and $b_n \in \mathcal{C}_n$
implies $[d^0,b_n]=[d^1,b_n]=0$.
But $[d^1,b_n] = 0$ splits into independent
identities $[d^1,b_n^{r_1\ldots r_n}]=0$.
Hence $b_n^{r_1\ldots r_n}$ vanishes iff its restriction
to $\oplus_{r=0,1} \ax^{r,r}$ vanishes\footnote{%
By an argument like in the induction step for $c_n$.}.
If $n>3$ then $6-2n < 0$, so this vanishes for degree reasons by \eqref{eq:btlt},
so $b_n=0$. Hereafter, $n=3$.
%%%%%%%%%%%%%%%%%%%%%%%%%%%%%%%%%%%%%%%%%%%%%%%%%%%%%%%%%%%%%%%%%%%%%%%%%%
We show that $[d^1,b_3^{r_1r_2r_3}]=0$ implies $b_3^{r_1r_2r_3}=0$
separately for every triple $r_1r_2r_3$.
Recall
\[
b_3^{r_1r_2r_3}
\;:\;
\ax^{r_1,i_1} \otimes \ax^{r_2,i_2} \otimes \ax^{r_3,i_3}
\to
\ax^{r_1+r_2+r_3-1,\;i_1+i_2+i_3-1}
\]
This can only be nonzero if exactly one or two of the $r_i$ are equal to $1$.
By graded symmetry, this leaves $u = b_3^{100}$ and $v = b_3^{110}$.
By $[d^1,b_3]=0$ they are determined by their restriction to $\oplus_{r=0,1}\ax^{r,r}$.
So $u,v \in \mathcal{C}_3$ are determined by two linear maps,
\begin{alignat*}{5}
U\;&:\quad & \Lambda^2_+ \otimes S^2 \Lambda^0 & \to \Lambda^0
\qquad&& \text{for $u$ restricted to generators}\\
V\;&:\quad & \wedge^2 \Lambda^2_+ \otimes \Lambda^0 & \to \Lambda^2_+
&& \text{for $v$ restricted to generators}
\end{alignat*}
Below we use the following bases: $1 \in \C \simeq \Lambda^0$;
and $g_1,g_2,g_3 \in \Lambda^2_+$
denote respectively the three elements in \eqref{eq:lambda2plus}. We also use
$u(e^\mu x,y,z) = -e^\mu u(x,y,z)$
and
$v(e^\mu x,y,z) = -e^\mu v(x,y,z)$
which follow from $[d^1,b_3]=0$. Consider now the two cases.
\begin{itemize}
\item 
Set $U_i = u(g_i,1,1) \in \C$.
Note $e^1 g_1 = e^2 g_2 \in \Lambda^3$, so
$u(e^1g_1,1,1) = u(e^2g_2,1,1)$,
so $e^1u(g_1,1,1) = e^2u(g_2,1,1)$,
so $e^1 U_1 = e^2 U_2 \in \Lambda^1$.
But $e^1,e^2 \in \Lambda^1$ are linearly independent, so $U_1=U_2=0$,
analogously $U_3=0$, so $U=0$.
\item 
Set $V(g_j,g_k,1) = \sum_{i=1}^3 V_{jk}^i g_i$,\,
$V_{jk}^i = -V_{kj}^i \in \C$.
Note that $g_i g_j = \delta_{ij} g_1g_1 \in \Lambda^4$.
So if $i \neq j$ then $v(g_ig_j,g_k,1) = 0$,
so $g_i v(g_j,g_k,1) = 0$, so $V^i_{jk}=0$, all when $i \neq j$.
This and antisymmetry in the two lower indices implies $V = 0$.
\end{itemize}
\end{itemize}
This concludes the proof of Theorem \ref{theorem:mainh}.

%%%%%%%%%%%%%%%%%%%%%%%%%%%%%%%%%%%%%%%%%%%%%%%%%%%%%%%%%%%%

\subsection{Strict structure via a cobar construction}\label{sec:rect}

This section is a continuation of Section \ref{subsec:mod},
and it is not specific to Yang-Mills.
It contains an explicit cobar construction
that associates to a $\bv{\Qx}$-algebra a
quasi-isomorphic strict $\bv{\Qx}$-algebra,
called a $\text{dgBVa}^{\Qx}$ below.
The main task, relative to standard cobar constructions,
is to accommodate the $\Qx$-dependent terms in \eqref{eq:defheq}.
\step
The standard cobar construction \cite{dtt,lv}
associates to a dg coalgebra $C$ without counit\footnote{%
Or a conilpotent coaugmented dg coalgebra \cite{lv}.
Then the cokernel of the coaugmentation yields the version without counit.
In any event, the cobar construction involves removing the counit.
}
a dg algebra $A$ without unit\footnote{%
Or an augmented dg algebra \cite{lv}.
Then the kernel of the augmentation
yields the version without unit.}.
Here $A$ is an appropriate free algebra generated by $C$.
Schematically,
if $u_C$ and $b_C$ are the co-unary and co-binary co-operations on $C$
respectively, $u_A$ and $b_A$ the unary and binary operations on $A$, then:
\begin{itemize}
\item $u_A$ is given by an explicit formula in terms of $u_C$ and $b_C$.
\item $b_A$ is the native algebra structure on $A$.
\end{itemize}
So all operations on $C$ are packed into the unary one on $A$.
The cobar and bar constructions form an adjunction;
they work for various (co)algebraic structures;
and they can be used to rectify $\infty$-algebraic structures
to $\infty$-quasi-isomorphic strict structures \cite{dtt,gtv,lv}.
The result below is not a full rectification result,
because we only establish the quasi-isomorphism
for the lowest operations,
which suffices for our application\footnote{%
It is possible that the extension
to an $\infty$-quasi-isomorphism is immediate,
once things are set up,
also because there is an explicit candidate for
the $\infty$-quasi-isomorphism, cf.~\cite[Proposition 3]{dtt}.}.
\step
\emph{Notation for the cobar construction.} 
Let $V$ be a graded vector space and $\Hopf$-module, and an element $\Qx$ is fixed,
just as in Section \ref{subsec:mod}. Define:
\begin{itemize}
\item
$C$ is the cofree Gerstenhaber coalgebra without counit cogenerated by $V$.
It is spanned by co-Gerstenhaber words (cf.~\cite{w}) in $V$.
It is formally the linear dual of $GV^\ast$.
\item
$A$ is the free Gerstenhaber algebra without unit
generated by $C[-2]$\footnote{%
For a graded vector space $C$,
one denotes by $C[k]^i = C^{i+k}$ the shifted space.}, not completed.
It is spanned by Gerstenhaber words in $C$.
So it is spanned by words-of-words in $V$.
\end{itemize}
Both are $\Hopf$-modules and, canonically, $V[-2] \hookrightarrow C[-2] \hookrightarrow A$.
Note that $C$ is precisely the object that we avoided working with
in Section \ref{sec:ginfMAIN}, see Remark \ref{remark:dualspaces},
but here we must make the switch.
The gLa of $\Hopf$-equivariant Gerstenhaber coderivations is
denoted $\Coder(C)^{\Hopf}$,
its elements are formal duals of the elements of $\Der(GV^\ast)^{\Hopf}$.

%%%%%%%%%%%%%%%%%%%%%%%%%%%%%%%%%%%%%%%%%%%%%%%%%%%%%%%%%%%%%%%%%%
\begin{lemma}[Cobar construction]\label{lemma:rcc}
Suppose $\delta \in \Coder(C)^{\Hopf}$ is a $\bv{\Qx}$-algebra structure on $V$,
see \eqref{eq:defheq}, with the following additional assumptions\footnote{%
These assumptions hold for the algebra in Theorem \ref{theorem:mainh}.
}:
\begin{itemize}
\item $\delta = \delta^1 + \delta^{-1}$ where $\delta^{\pm 1} \in \Coder^{\pm 1}(C)^{\Hopf}$,
so $\delta$ only has degree $1$ and $-1$ pieces\footnote{%
The notation does not distinguish
$\delta^{\pm 1} \in \Coder^{\pm 1}(C)^{\Hopf}$
and its formal dual
$\delta^{\pm 1} \in \Der^{\pm 1}(GV^\ast)^{\Hopf}$.}.
\item The only nonzero component of $\delta^{-1}$
is its $V \to V$ component\footnote{%
The $V \to V$ component is the formal
dual of the $V^\ast \to V^\ast$ component.}.
\end{itemize}
Then a cobar construction\footnote{%
The cobar construction is applied to $(C,\delta)$
here interpreted as a $\text{co-dgBVa}^{\Qx}$. See the proof.}
yields $\Hopf$-equivariant $d_A \in \End^1(A)$,
$h_A \in \End^{-1}(A)$
such that:
\begin{itemize}
\item 
Strict structure:
$A$ is a strict $\bv{\Qx}$-algebra,
or $\textnormal{dgBVa}^{\Qx}$, meaning:
\begin{itemize}
\item $A$ is a Gerstenhaber algebra, see the five axioms \eqref{eq:gax}.
\item
$d_A$ is a derivation for the gca product
and $h_A$ is a BV operator, so for all $a,b \in A$:
\begin{equation}\label{eq:unf}
\begin{aligned}
d_A(ab) & \;=\; d_A(a)b + (-1)^a a d_A(b) \\
[a,b] & \;=\; (-1)^a h_A(ab) - (-1)^a h_A(a)b - a h_A(b)
\end{aligned}
\end{equation}
\item $d_A^2 = 0$ and $d_Ah_A + h_Ad_A = \Qx \one$ and $h_A^2 = 0$.
\end{itemize}
So $(A,d_A)$ is a dgca and $(A,h_A)$ is a BV algebra,
in particular $h_A$ is second order.
\item
Quasi-isomorphism:
Denote by
$\imap : V[-2] \hookrightarrow A$
the canonical degree zero map,
and abbreviate the differentials\footnote{%
That $d$ and $h$ are differentials follows from
\eqref{eq:d2r} below
(which follows from \eqref{eq:defheq2}) using $\tt|_V=\qq|_V=0$.}
 $d = \delta^1|_V \in \End^1(V)$
and $h = \delta^{-1}|_V \in \End^{-1}(V)$.
There exist $\Hopf$-equivariant maps $\pmap$ of degree zero\footnote{%
This $\pmap$ is far from being
the canonical map $A \to V[-2]$.
},
$\zmap$ of degree $-1$, as in 
\begin{equation}\label{eq:defretract}
\vcenter{\hbox{%
\begin{tikzpicture}
  \draw[anchor=east] (0,0) node {$V[-2]$};
  \draw[anchor=west] (2,0) node {$A$};
  \draw[->] (0.1,0.2) to[bend left=15] node[midway,anchor=south] {$\imap$} (1.9,0.2);
  \draw[->] (1.9,-0.2) to[bend left=15] node[midway,anchor=north] {$\pmap$} (0.1,-0.2);
  \draw[->] (2.7,0.2) to[out=40,in=-40,looseness=8] node[midway,anchor=west] {$\zmap$} (2.7,-0.2);
\end{tikzpicture}}}
\end{equation}
such that
    $d_A \imap = \imap d$ and
    $h_A \imap = \imap h$
and $\pmap d_A = d \pmap$ and
    $\pmap h_A = h \pmap$
and $\pmap\imap = \one$
and $\imap\pmap = \one - d_A\zmap - \zmap d_A$ and $\zmap\imap=\pmap\zmap=\zmap^2=0$.
That is:
\begin{itemize}
\item $\imap$ and $\pmap$ are chain maps using the differentials
$d$ and $d_A$ respectively,
and they are quasi-isomorphisms
as witnessed by the contraction
\eqref{eq:defretract}.
%%%%%%%%%%%%%%%%%%%%%%%%%%%%%%%%%%%%%%%%%%%%%%%%%%%%
\item $\imap$ and $\pmap$ are chain maps
using the differentials $h$ and $h_A$ respectively.
\end{itemize}
Furthermore,
denoting by $\apr : A \otimes A \to A$ the gca product on $A$,
by $\theta : V \otimes V \to V$
the binary product of
$\delta^1$ as a $\cinf$-structure on $V[-2]$\footnote{%
The dual of the $V^\ast \to [V^\ast,V^\ast]$ component of $\delta^1$.
This need not be an associative product.}, one has
\begin{equation}\label{eq:3vert}
   \theta \;=\; \pmap \circ \apr \circ \imap^{\otimes 2}
\end{equation}
\end{itemize}
The construction of
this $\textnormal{dgBVa}^{\Qx}$, and of the
contraction \eqref{eq:defretract}, is functorial.
\end{lemma}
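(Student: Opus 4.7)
The plan is to treat $(C,\delta)$ as the Koszul-dual co-$\textnormal{dgBVa}^{\Qx}$ object and apply a standard operadic cobar construction \cite{dtt,gtv,lv}, modified to accommodate the $\Qx$-deformation in \eqref{eq:defheq}. First, decompose $\delta$: as an $\Hopf$-equivariant coderivation, $\delta^1$ and $\delta^{-1}$ are each determined by their corestriction to cogenerators $\pi\delta:C\to V$, where $\pi:C\to V$ is the canonical projection, and by assumption $\pi\delta^{-1}$ is concentrated in its unary component, which is $h$. Define $d_A,h_A\in\Der(A)^{\Hopf}$ as the unique $\Hopf$-equivariant Gerstenhaber derivations of $A$ extending the composites $C[-2]\xrightarrow{\delta^1[-2]} C[-2]\hookrightarrow A$ and $C[-2]\xrightarrow{\delta^{-1}[-2]} C[-2]\hookrightarrow A$ on generators. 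The first is the classical cobar differential associated with $\delta^1$: on a Gerstenhaber word of co-words it acts within each co-word by $\delta^1$, then unpacks higher-arity outputs of $\delta^1$ into Gerstenhaber words via the cofree coproduct. The second acts within each co-word by $h$ only; since $h$ is purely unary, $h_A$ is automatically a derivation, which is precisely a BV operator in the sense of \eqref{eq:unf}, so the BV identity holds tautologically.

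Second, I verify the three identities \eqref{eq:strdaha} from the MC equation \eqref{eq:defheq2}. The degree $+2$ equation $\tfrac12[\delta^1,\delta^1] + d_{\qq}\delta^1 = 0$ on $C$ propagates through the derivation extension to $d_A^2=0$; the degree $-2$ equation $\tfrac12[\delta^{-1},\delta^{-1}] = 0$ reduces on cogenerators to $h^2=0$, hence $h_A^2=0$; and the degree $0$ equation $d_{\tt}\delta^1 + [\delta^1,\delta^{-1}] = \curv - d_{\qq}\delta^{-1}$ yields $d_Ah_A + h_Ad_A = \Qx\one$. The latter identification hinges on two design features from Section~\ref{subsec:mod}: the operator $\tt$ was defined in \eqref{eq:bvtt} so that $[\tt,-]$ precisely detects the failure of a Gerstenhaber derivation to be a BV derivation, which is why $d_{\tt}\delta^1$ corresponds to $d_Ah_A+h_Ad_A$; and $\curv|_{V^\ast}=\Qx\one$, which extends under the derivation mechanism to the operator $\Qx\one$ on $A$ using the induced $\Hopf$-module structure, the extension being consistent because $\Qx$ is central. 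The 7-term identity \eqref{eq:7term} is what makes these correspondences compatible at all word arities simultaneously.

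Third, I construct the contraction \eqref{eq:defretract}. The inclusion $\imap: V[-2]\hookrightarrow A$ is a chain map to both $d_A$ and $h_A$ by construction on generators. For $\pmap$ and $\zmap$, I use the natural bigrading of $A$ by outer Gerstenhaber-word length and inner co-word length; $\imap(V[-2])$ sits at bigrading $(1,1)$, and the bar-cobar adjunction supplies a canonical contraction onto it. Normalizing $\zmap$ to vanish on $\imap(V[-2])$ and on its own image enforces $\zmap\imap = \pmap\zmap = \zmap^2 = 0$; $\Hopf$-equivariance is automatic since every piece of data is $\Hopf$-equivariant. The properties $\pmap d_A = d\pmap$ and $\pmap h_A = h\pmap$ follow from $d_A$ and $h_A$ respecting the bigrading filtration compatibly with $\zmap$. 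Finally, \eqref{eq:3vert} is immediate: $\imap(v_1)\imap(v_2)\in A$ is literally the length-two Gerstenhaber word $v_1\cdot v_2$ of cogenerators, and applying $\pmap$ returns its degree-two cogenerator part, which by the definition of $\theta$ as the binary gca component of $\delta^1$ equals $\theta(v_1,v_2)$. Functoriality follows from the universal properties used at each step.

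The main obstacle is the deformed identity $d_Ah_A+h_Ad_A = \Qx\one$: showing that the $\curv - d_{\qq}\delta^{-1}$ term on the coalgebra side produces exactly $\Qx\one$ on $A$ with no spurious higher-arity contributions requires careful bookkeeping of how $\qq$, and hence $\Qx$ through its coproduct $\Delta\Qx$, interacts with the cofree Gerstenhaber coproduct when transported along the derivation extension; the 7-term identity \eqref{eq:7term} is precisely what makes this cancellation consistent. A secondary delicate point is the simultaneous compatibility of $\zmap$ with \emph{both} $d_A$ and $h_A$, i.e.~the additional chain-map condition $\pmap h_A = h\pmap$ alongside the usual side conditions, which is a constraint beyond what a generic cobar contraction provides and may require a specific normalization of $\zmap$ using the unary character of $h$.
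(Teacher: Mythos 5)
Your construction of $d_A$ and $h_A$ is not the right one, and the two central identities you claim to verify fail for the operators you actually define. First, $h_A$: you take it to be the Gerstenhaber \emph{derivation} of $A$ extending $h$ on co-words and assert that the BV identity \eqref{eq:unf} then ``holds tautologically''. It does not --- a derivation has identically vanishing associated bracket, whereas the native bracket of the free Gerstenhaber algebra $A$ is nonzero, so no derivation can satisfy \eqref{eq:unf}. The operator must be genuinely second order: in the paper it is $h_A|_C = \tt(c) + \delta^{-1}(c)$, extended by the BV rule $h_A(ab) = (-1)^a[a,b] + h_A(a)b + (-1)^a a\,h_A(b)$; the canonical degree $-1$ operator $\tt$ (the dual of \eqref{eq:bvtt}) is the essential ingredient you are missing. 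It is what makes the native bracket of $A$ the one generated by $h_A$, and it is also what produces the $d_{\tt}\delta^{\pm 1}$ terms when one computes $h_A^2|_C$ and $[d_A,h_A]|_C$. Second, $d_A$: you define it as the classical cobar differential of $\delta^1$ extended as an ordinary derivation, and then claim the deformed Maurer--Cartan equation $\tfrac12[\delta^1,\delta^1]=-d_{\qq}\delta^1$ ``propagates'' to $d_A^2=0$. But for the classical cobar differential, $d_A^2|_C$ computes exactly $\tfrac12[\delta^1,\delta^1]$ plus terms that vanish by coassociativity, and here $\tfrac12[\delta^1,\delta^1]=-d_{\qq}\delta^1\neq 0$. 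The cancellation only occurs if the $\Qx$-deformation is built into the definition: $d_A|_C$ must contain the extra summand $\qq(c)$, and the extension to brackets must use the deformed rule $d_A([a,b]) = (-1)^a\big(\Qx(ab)-(\Qx a)b - a(\Qx b)\big) + [d_A a,b]+(-1)^{a-1}[a,d_A b]$, so $d_A$ is not an ordinary Gerstenhaber derivation either. These are not bookkeeping subtleties to be checked after the fact, as your closing paragraph suggests; they change what $d_A$ and $h_A$ are.

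By contrast, your outline of the contraction --- filtering $A$ by word lengths, getting acyclicity in positive filtration degree from Koszulness of the Gerstenhaber operad, running a perturbation argument, and noting that $\pmap h_A = h\pmap$ requires both the unary character of $\delta^{-1}$ and a compatible normalization of $\zmap$ --- is in the right spirit and matches the paper's strategy (which uses the gradings $p,q,r$ and an operator $g$ satisfying side conditions including $vg+gv=0$). But that part cannot be completed until $d_A$ and $h_A$ are corrected as above, since the whole point of the side conditions is their compatibility with the correct, deformed operators.
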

%%%%%%%%%%%%%%%%%%%%%%%%%%%%%%%%%%%%%%%%%%%%%%%%%%%%%%%%%%%%%%%%%%%%%%%%%%%%%%%%%%
\begin{proof}
Denote by $\Delta^0, \Delta^{-1} : C \to C \otimes C$ the native coproducts on $C$
of degree $0$ and $-1$ respectively,
satisfying axioms dual to \eqref{eq:gax}\footnote{%
They are formal duals of the gca and gLa products on $GV^\ast$ respectively.}.
Abusing notation, denote by
\begin{align*}
\tt & \in \End^{-1}(C) & 
\qq & \in \End^1(C) & \curv & \in \Coder^0(C)\\
d_{\tt} & \in \End^{-1}(\Coder(C)^{\Hopf}) &
d_{\qq} & \in \End^1(\Coder(C)^{\Hopf})
\end{align*}
the duals of the objects in Section \ref{sec:ginfMAIN}
of the same name. They satisfy analogous identities,
e.g.~$\tt^2=\qq^2=0$.
By \eqref{eq:defheq2}, and this 
is our implicit definition of a $\text{co-dgBVa}^{\Qx}$ on $C$:
\begin{equation}\label{eq:d2r}
\begin{aligned}
\tfrac12 [\delta^1,\delta^1] & = -d_{\qq} \delta^1 \\
d_{\tt} \delta^1 + [\delta^1,\delta^{-1}] & = \curv - d_{\qq}\delta^{-1}\\
d_{\tt}\delta^{-1} + \tfrac12 [\delta^{-1},\delta^{-1}] & = 0
\end{aligned}
\end{equation}
The assumption that $\delta^{-1}$
only has a $V \to V$ component is only used much later in this proof.
%%%%%%%%%%%%%%%%%%%%%%%%%%%%%%%%%%%%%%%%%%%%%
Denote the native products on $A$ by juxtaposition and $[-,-]$
like in Section \ref{sec:recapnew};
later in this proof we also denote them by
 $\nabla^0, \nabla^{-1} : A \otimes A \to A$.
All are $\Hopf$-equivariant.
Recall that, ignoring degrees, $V \hookrightarrow C \hookrightarrow A$.
First define $d_A|_C$ and $h_A|_C$ by
\begin{equation}\label{eq:dac}
\begin{aligned}
d_A(c) & = \qq(c) + \delta^1(c) + (-1)^{c_1} c_1c_2 + (-1)^{z_1-1} [z_1,z_2]\\
h_A(c) & = \tt(c) + \delta^{-1}(c)
\end{aligned}
\end{equation}
for all $c \in C$ where, using Sweedler notation,
      $\Delta^{-1} c = c_1 \otimes c_2$ and
      $\Delta^0 c = z_1 \otimes z_2$.
This is consistent with $d_A$ and $h_A$ having degrees $1$ and $-1$ respectively,
thanks to the $-2$ degree shift in the definition of $A$.
Extend $d_A$ and $h_A$ uniquely by 
\begin{equation}\label{eq:daha}
\begin{aligned}
d_A(ab) & = d_A(a)b + (-1)^a a d_A(b) \\
d_A([a,b]) & =
(-1)^a(
\Qx(ab) - (\Qx a)b - a (\Qx b)
)
+ [d_A a,b] + (-1)^{a-1} [a,d_A b] \\
\rule{0pt}{12pt}
h_A(ab) & = (-1)^a [a,b] + h_A(a)b + (-1)^a a h_A(b) \\
h_A([a,b]) & = [h_A a,b] + (-1)^{a-1} [a,h_A b]
\end{aligned}
\end{equation}
for all $a,b \in A$. 
These extensions exist\footnote{%
Just like $\tt$ and $\qq$ in Section \ref{sec:ginfMAIN}
and Appendix \ref{app:proofs}.
This uses the 7-term identity \eqref{eq:7term}.}
and they are $\Hopf$-equivariant.
To check that one has a $\text{dgBVa}^{\Qx}$
as claimed in Lemma \ref{lemma:rcc},
only $d_A^2=h_A^2=0$ and $[d_A,h_A]=\Qx\one$ remain todo.
Using \eqref{eq:daha} one checks $d_A^2, h_A^2 \in \Der(A)$,
so they vanish if their restriction to $C$ vanishes;
and $[d_A,h_A]$ fails to
be a derivation precisely in such a way\footnote{%
Analogous to the last two equations in \eqref{eq:nnonder}.}
that it suffices to show that $[d_A,h_A]|_C = \Qx \one$.
Lengthy calculations using \eqref{eq:dac} and \eqref{eq:daha} yield
\begin{align*}
h_A^2|_C &= d_{\tt}\delta^{-1} + \tfrac12 [\delta^{-1},\delta^{-1}] \\
[d_A,h_A]|_C &= d_{\tt}\delta^1 + d_{\qq}\delta^{-1} + [\delta^1,\delta^{-1}] + \tt\qq+\qq\tt \\
d_A^2|_C &= d_{\qq} \delta^1 + \tfrac12 [\delta^1,\delta^1]
\end{align*}
where the first also uses $\tt^2=0$;
the second also uses
$\delta^{-1} \in \Coder^{-1}(C)^{\Hopf}$ and the dual of \eqref{eq:bvtt};
the third also uses $\qq^2=0$
and $\delta^1 \in \Coder^1(C)^{\Hopf}$ and the duals of \eqref{eq:gax}
and \eqref{eq:qq81}.
Now \eqref{eq:d2r} imply $d_A^2|_C=h_A^2|_C=0$
and $[d_A,h_A]|_C = \curv + \tt\qq+\qq\tt = \Qx \one$,
where the very last equality uses \eqref{eq:cnq}.
This concludes the construction of the $\text{dgBVa}^{\Qx}$.

\newcommand{\nm}{n_{\text{gca}}}
\newcommand{\nl}{n_{\text{gLa}}}
\newcommand{\td}{\mathcal{D}}
To discuss \eqref{eq:defretract},
first note that $\Delta^{-1}|_V =\Delta^0|_V = 0$,
because $C$ is without counit,
and $\qq|_V = \tt|_V=0$, so \eqref{eq:dac} gives $d_A|_V = \delta^1|_V$
and $h_A|_V = \delta^{-1}|_V$,
so $\imap:V[-2] \hookrightarrow A$
so $d_A \imap = \imap d$ and $h_A \imap = \imap h$ as claimed.
To continue, note the equivalent definitions
\begin{equation}\label{eq:dahaalt}
\begin{aligned}
d_A & = \qq_A + \td(\qq) + \td(\delta^1)
+ \td(\pm \nabla^0 \circ \Delta^{-1})
+ \td(\pm \nabla^{-1} \circ \Delta^0)\\
h_A & = \tt_A + \td(\tt) + \td(\delta^{-1})
\end{aligned}
\end{equation}
where $\tt_A \in \End^{-1}(A)$, $\qq_A \in \End^1(A)$
are defined like \eqref{eq:bvtt} and \eqref{eq:qq81} respectively;
where $\mathcal{D}: \Hom(C,A) \to \Der(A)$ extends maps to derivations;
$\nabla^0, \nabla^{-1} : A \otimes A \to A$ are the two products\footnote{%
In particular, $\nabla^0 = \apr$ are two notations for the same object.};
the $\pm$ are detailed in \eqref{eq:dac};
and $C \hookrightarrow A$ is implicit in a few places.

We also introduce auxiliary gradings.
For every Gerstenhaber word, denote by $\nm$ and $\nl$
the number of gca and gLa products that it uses;
this yields a well-defined $\N^2$-grading\footnote{%
$\N = \{0,1,2,3,\ldots\}$.} on every free Gerstenhaber algebra.
Dualization and sign reversal yields an $\N^2$-grading on
every cofree Gerstenhaber coalgebra\footnote{%
Dualization produces non-positive degrees,
the sign reversal is to get an $\N^2$-grading.}.
Note that $A$ is spanned by Gerstenhaber words of
co-Gerstenhaber words in $V$. To every such
word-of-words $w \in A$ we associate:
$(\nm^A,\nl^A)$ is the $\N^2$-degree of $w$ as a Gerstenhaber word,
and $(\nm^C,\nl^C)$ is the sum of the $\N^2$-degrees of 
the co-Gerstenhaber words that constitute $w$.
This yields a well-defined $\N^4$-grading on $A$.
Also introduce the following coarser gradings:
\[
p = \nl^C + \nm^A
\qquad\quad
q = \nm^C + \nl^A
\qquad\quad
r = p+q
\]
giving decompositions $A = \oplus_{p,q} A_{p,q} = \oplus_r A_r$.
The summands in \eqref{eq:dahaalt} do not increase $r$,
as shown in Table \ref{tab:degs}.
%%%%%%%%%%%%%%%%%%%%%%%%%%%%%%%%%%%%%%%%%%%
\begin{table}
\centering
\begin{tabular}{r|cccc|cc|c}
& $\nm^A$ & $\nl^A$ & $\nm^C$ & $\nl^C$ & $p$ & $q$ & $r$\\
\hline
$\qq_A$ & $1$ & $-1$ & $0$ & $0$ & $1$ & $-1$ & $0$\\
$\td(\qq)$ & $0$ & $0$ & $-1$ & $1$ & $1$ & $-1$ & $0$\\
$\td(\delta^1)$ & $0$ & $0$ & $\leq 0$ & $\leq 0$ & $\leq 0$ & $\leq 0$ & $\leq 0$\\
$\td(\pm \nabla^0 \circ \Delta^{-1})$ & $1$ & $0$ & $0$ & $-1$ & $0$ & $0$ & $0$\\
$\td(\pm \nabla^{-1} \circ \Delta^0)$ & $0$ & $1$ & $-1$ & $0$ & $0$ & $0$ & $0$\\
\hline
$\tt_A$ & $-1$ & $1$ & $0$ & $0$ & $-1$ & $1$ & $0$\\
$\td(\tt)$ & $0$ & $0$ & $1$ & $-1$ & $-1$ & $1$ & $0$\\
$\td(\delta^{-1})$ & $0$ & $0$ & $\leq 0$ & $\leq 0$ & $\leq 0$ & $\leq 0$ & $\leq 0$
\end{tabular}
\caption{Degrees of the summands in \eqref{eq:dahaalt}.
They do not increase $r$.
The entries for $\td(\delta^{-1})$ are actually all zero if
one uses the assumption that 
$\delta^{-1}$ has only a $V \to V$ component.
}\label{tab:degs}
\end{table}
%%%%%%%%%%%%%%%%%%%%%%%%%%%%%%%%%%%%%%%%%%%%%%%%
Now consider the components that neither depend on $\delta^{\pm 1}$
nor on the $\Hopf$-module structure or $\Qx$,
only on $V$ itself. Namely, set
\begin{align*}
u & = \td(\pm \nabla^0 \circ \Delta^{-1}) + \td(\pm \nabla^{-1} \circ \Delta^0)
&
v & = \tt_A + \td(\tt)
\end{align*}
A byproduct of the above proof that we have a $\text{dgBVa}^{\Qx}$
is $u^2 = v^2 = uv+vu=0$.
Note that $u(A_r) \subset A_r$ and $v(A_r) \subset A_r$,
and decomposing
$A_r = \oplus_{p+q=r} A_{p,q}$ we get:
\begin{equation}\label{eq:uvdiag}
\vcenter{\hbox{%
\begin{tikzpicture}[node distance = 11mm and 13mm, auto]
  \node (a0) {$A_{r,0}$};
  \node (a1) [right=of a0] {$A_{r-1,1}$};
  \node (a2) [right=of a1] {$\ldots$};
  \node (a3) [right=of a2] {$A_{1,r-1}$};
  \node (a4) [right=of a3] {$A_{0,r}$};
  \draw[->] (a0) to node {$v$} (a1);
  \draw[->] (a1) to node {$v$} (a2);
  \draw[->] (a2) to node {$v$} (a3);
  \draw[->] (a3) to node {$v$} (a4);
  \draw[->] (a0) to[out=120,in=60,looseness=6] node[anchor=south] {$u$} (a0);
  \draw[->] (a1) to[out=120,in=60,looseness=6] node[anchor=south] {$u$} (a1);
  \draw[->] (a3) to[out=120,in=60,looseness=6] node[anchor=south] {$u$} (a3);
  \draw[->] (a4) to[out=120,in=60,looseness=6] node[anchor=south] {$u$} (a4);
\end{tikzpicture}}}
\end{equation}
To every graded vector space $V$ one can functorially
associate \eqref{eq:uvdiag};
it is for every $r$ a universal construction coming with the Gerstenhaber operad
that involves suitable tensoring with $V^{\otimes(r+1)}$.
The homology of $u|_{A_r}$ vanishes if $r>0$\footnote{%
Because the Gerstenhaber operad is Koszul \cite{gj,mkl,dtt},
and see \cite[Theorem 4.2.5]{gk}.}.
The homology of $v|_{A_r}$ vanishes if $r>0$\footnote{%
Let $\hh \in \End^1(C)$, $\uu \in \End^0(C)$
be dual to the objects in Section \ref{sec:gammax};
let $\hh_A \in \End^1(A)$, $\uu_A \in \End^0(A)$
be like in Section \ref{sec:gammax}.
Set $w = \hh_A + \td(\hh) \in \End^1(A)$.
Then $w(A_r) \subset A_r$,
$w^2=0$ and $wv+vw = \uu_A + \td(\uu)$.
The last operator commutes with $v$ and $w$
and is invertible on $A_r$ when $r>0$, using \eqref{eq:unn}.
}.
This implies that there exists a $g \in \End^{-1}(A)$
that satisfies\footnote{%
For $r>0$, this would follow from $H(u|_{A_r})=0$,
if it was not for the requirement $vg+gv=0$.
In \eqref{eq:uvdiag}, choose $g$ from left to right,
at each step consistent with $vg+gv=0$,
which works because $H(v|_{A_r})=0$.}
\begin{equation}\label{eq:gpr}
g(A_{p,q}) \subset A_{p,q}
\qquad
g|_{A_0} = 0
\qquad
g^2 = 0
\qquad
ug + gu = \one_{>0}
\qquad
vg + gv = 0
\end{equation}
where, by definition,
$\one_{>0}|_{A_r}=\one$ if $r>0$ and $\one_{>0}|_{A_0} = 0$.
This $g$ can be chosen functorially in $V$,
specifically so that $g$ is $\Hopf$-equivariant and
commutes with every derivation on $A$ that extends a coderivation on $C$
that extends an endomorphism on $V$.
In particular, for $d = \delta^1|_V$ and $h = \delta^{-1}|_V$,
here viewed as elements of $\Coder(C)$,
\begin{equation}\label{eq:zred}
\td(d)g + g\td(d) = 0
\qquad
\td(h)g + g\td(h) = 0
\end{equation}

By \eqref{eq:dahaalt}, \eqref{eq:gpr}, \eqref{eq:zred} and Table \ref{tab:degs},
the term $\one_{>0} - d_Ag - gd_A$ decreases $(r,q) \in \N^2$
ordered lexicographically\footnote{%
So it either decreases $r$, or it preserves $r$ and decreases $q$.},
so its restriction to $A_0 \oplus \ldots \oplus A_r$ is nilpotent.
Define\footnote{%
This can be interpreted as
a formula that implements back-substitution
to `invert' triangular matrices; or as a Neumann series; or
as an application of the homological perturbation lemma.}
\begin{equation}\label{eq:zr3}
    \zmap \;=\; \textstyle\sum_{n \geq 0} g (\one_{>0} - d_A g - g d_A)^n
\end{equation}
which, when applied to any given element of $A$,
has finitely many nonzero summands.
Set $\pi = \one - d_A \zmap - \zmap d_A$.
On $A/A_0$ we have $\one_{>0} = \one$ and,
considering \eqref{eq:zr3} on this quotient,
a standard calculation using $g^2=0$ implies $\pi = 0$ on $A/A_0$.
So $\image \pi \subset A_0$.
But $g|_{A_0}=0$, hence $\zmap|_{A_0}=0$, hence $\pi|_{A_0} = \one$.
Therefore, $\pi^2 = \pi$ with $\image \pi = A_0$.
Continuing, $g^2=0$ and
$\one_{>0} g = g \one_{>0}$
imply $\zmap g = 0$, hence $\zmap^2 = 0$.
This implies $\pi \zmap = \zmap \pi$.
Since $\zmap|_{A_0} = 0$ we have $\zmap \pi = 0$, hence $\pi \zmap = 0$.
Note $A_0 \simeq V$, and $\image \pi = A_0 = \image \imap$.
Let $\pmap$ be the composition of $\pi$
with the canonical $A \to V[-2]$,
so $\pmap\imap = \one$ and $\ker \pmap = \ker \pi$.

To prove \eqref{eq:3vert},
using $\image(\apr \circ \imap^{\otimes 2}) \subset A_0 \oplus A_{1,0}$
it suffices to consider
the subcomplex $(A_0 \oplus A_{1,0},d_A)$.
The diagonal blocks $A_0 \to A_0$ and $A_{1,0} \to A_{1,0}$
are described above,
the block $A_{1,0} \to A_0$ only uses $\theta$, a piece of $\delta^1$.
The construction of $\zmap$ implies \eqref{eq:3vert}.

Now recall the assumption
that the only nonzero component of $\delta^{-1}$ is its $V \to V$ component,
$\delta^{-1} = h$.
Then $h_A(A_r) \subset A_r$ hence $\one_{>0}h_A = h_A \one_{>0}$.
Using \eqref{eq:gpr} and \eqref{eq:zred} we have $g h_A + h_A g = 0$.
Using $d_A h_A + h_A d_A = \Qx \one$
and $\Qx g = g \Qx$, by $\Hopf$-equivariance of $g$,
it follows that $h_A$ commutes with $\one_{>0} - d_Ag - gd_A$.
Hence $\zmap h_A + h_A \zmap = 0$.
Now, using $\Qx \zmap = \zmap \Qx$, by $\Hopf$-equivariance of $\zmap$,
one gets $\pi h_A = h_A \pi$.
Hence $\pmap h_A = h \pmap$.
\qed\end{proof}

\section{Acknowledgements}
M.R.~is grateful to have received funding from ERC through grant agreement No.~669655.

%%%%%%%%%%%%%%%%%%%%%%%%%%%%%%%%%%%%%%%%%%%%%%%%%%%%%%%%%%%%

\appendix

%%%%%%%%%%%%%%%%%%%%%%%%%%%%%%%%%%%%%%%%%%%%%%%%%%%
\section{Addendum to Section \ref{subsec:mod}}\label{app:proofs}

Here are the details for several claims in Section \ref{subsec:mod}.
\step
To see that $\qq$ is well-defined, recall the definition of $GV^\ast$
via terms and relations.
Set $C(a,b) = ab-(-1)^{ab}ba$, $A(a,b,c) = (ab)c-a(bc)$ and $X(a,b) = [a,b]+(-1)^{(a+1)(b+1)}[b,a]$
where here $a,b,c$ are terms before imposing any relations.
By direct calculation,
\begin{align*}
\qq(C(a,b)) & = 
C(\qq(a),b) + (-1)^a C(a,\qq(b)) \\
\qq(A(a,b,c))
& = 
%& = \qq((ab)c) - \qq(a(bc))\\
%& = \qq(ab)c + (-1)^{a+b} (ab) \qq(c)
%- \qq(a) (bc) - (-1)^a a \qq(bc)\\
%& = (\qq(a)b)c
%+ (-1)^a (a\qq(b))c
%+ (-1)^{a+b} (ab) \qq(c) 
%- \qq(a) (bc) - (-1)^a a(\qq(b)c) - (-1)^{a+b} a(b\qq(c))\\
%& =
A(\qq(a),b,c)
+ (-1)^a A(a,\qq(b),c)
+ (-1)^{a+b} A(a,b,\qq(c))\\
\qq(X(a,b)) & = X(\qq(a),b) + (-1)^{a-1} X(a,\qq(b))
+ (-1)^a \big(
\Qx C(a,b) - C(\Qx a,b) - C(a,\Qx b)
\big)
\end{align*}
and note that\footnote{%
Using the $\Hopf$-module structure on the space of terms, and cocommutativity.}
$\Qx C(a,b) = C(\Qx_1 a,\Qx_2 b)$
using Sweedler notation $\Delta \Qx = \Qx_1 \otimes \Qx_2$.
Given this and given how $\qq$
acts on products, see \eqref{eq:qq81},
one concludes that $\qq$ preserves the `ideal' generated by $C$, $A$, $X$.
From here on we work modulo these relations.
Set
\begin{align*}
P(a,b,c) & = [ab,c] - a[b,c] - (-1)^{ab} b[a,c]\\
J(a,b,c) & = [a,[b,c]] + (-1)^{(a+1)(b+c)} [b,[c,a]]
+ (-1)^{(c+1)(a+b)} [c,[a,b]]
%\intertext{and}
%S(a,b,c) & = 
%\Qx(abc)
%- \Qx(ab)c - a\Qx(bc) - (-1)^{ab} b\Qx(ac)
%+ \Qx(a)bc +a \Qx(b)c + ab\Qx(c)
\end{align*}
One finds
\begin{align*}
& \qq(P(a,b,c)) = P(\qq(a),b,c) + (-1)^a P(a,\qq(b),c) + (-1)^{a+b-1}
P(a,b,\qq(c))\\
& \quad + (-1)^{a+b} \big(
\Qx(abc)
- (\Qx(ab))c - a\Qx(bc) - (-1)^{ab} b\Qx(ac)
+ (\Qx a)bc +a (\Qx b)c + ab(\Qx c)
\big)
\end{align*}
and note that the last line vanishes by the 7-term identity \eqref{eq:7term}.
One concludes that $\qq$ also preserves the `ideal'
generated by $P$. We work modulo $P$ from here on.
Similarly,
\begin{align*}
\qq(J(a,b,c)) 
& = J(\qq(a),b,c) + (-1)^{a-1} J(a,\qq(b),c) + (-1)^{a+b-2} J(a,b,\qq(c))\\
& \qquad
+ (-1)^a (\Qx(a[b,c]) - (\Qx a)[b,c] - a \Qx[b,c])\\
& \qquad
- (-1)^{a+b} [a,\Qx(bc)-(\Qx b)c - b(\Qx c)] \\
& \qquad
+ (-1)^{a(b+c)+c} (\Qx(b[c,a])-(\Qx b)[c,a] - b \Qx[c,a]) \displaybreak[0] \\ 
& \qquad
- (-1)^{a(b+c)} [b,\Qx(ca)-(\Qx c)a - c (\Qx a)] \\ 
& \qquad
+ (-1)^{c(a+b)+a+b+c} (\Qx(c[a,b]) - (\Qx c)[a,b] - c\Qx[a,b]) \\ 
& \qquad
- (-1)^{c(a+b)+b+c} [c,\Qx(ab)-(\Qx a)b - a (\Qx b)] 
\end{align*}
To see that the terms involving $\Qx$ cancel,
replace $[c,\Qx(ab)] = [c,(\Qx_1a)(\Qx_2b)]$ and so forth,
then rewrite using $P=0$
and use \eqref{eq:7term}.
So, $\qq$ is well-defined on $GV^\ast$.
\step
Suppose now $\delta \in \Der(GV^\ast)^{\Hopf}$.
We show $d_{\qq}(\delta) \in \Der(GV^\ast)^{\Hopf}$.
By direct calculation,
\[
d_{\qq}(\delta)(ab)
 = (d_{\qq}(\delta)(a))b + (-1)^{(\delta+1)a} a d_{\qq}(\delta)(b)
\]
for all $a,b \in GV^\ast$
without using $\Hopf$-equivariance of $\delta$.
By direct calculation,
\begin{align*}
d_{\qq}(\delta)([a,b])
& = [d_{\qq}(\delta)(a),b] + (-1)^{(\delta+1)(a-1)} [a,d_{\qq}(\delta)(b)]\\
& \qquad
+ (-1)^{\delta + a}\big(
\Qx(\delta(a)b) - (\Qx\delta(a))b - \delta(a)(\Qx b)\\
& \qquad\qquad\qquad\qquad
+ (-1)^{\delta a}\Qx(a\delta(b))
    - (-1)^{\delta a} (\Qx a)\delta(b) - (-1)^{\delta a} a (\Qx \delta(b))\\
& \qquad\qquad\qquad\qquad
-  \delta(\Qx(ab)) + \delta((\Qx a)b) + \delta(a(\Qx b))
\big)
\end{align*}
To see that the $\Qx$ terms cancel,
replace $\Qx(\delta(a)b) = (\Qx_1 \delta(a)) (\Qx_2 b)
= (\delta(\Qx_1 a)) (\Qx_2 b)$ using $\Hopf$-equivariance of $\delta$;
similarly $\delta(\Qx(ab)) = \delta((\Qx_1 a)(\Qx_2b))$; and so on.
Since $\delta$
is a derivation, the $\Qx$ terms cancel.
Here \eqref{eq:7term} is not used.
\step
To check \eqref{eq:dddd},
set $\naivecurv = \tt\qq + \qq \tt \in \End^0(GV^\ast)$
and check by direct calculation that
\begin{equation}\label{eq:nnonder}
\begin{aligned}
\naivecurv|_{V^\ast} & = 0 \\
\naivecurv(ab) & = \Qx(ab)-(\Qx a)b-a(\Qx b) + \naivecurv(a)b + a\naivecurv(b)\\
\naivecurv([a,b]) & = \Qx[a,b]-[\Qx a,b]-[a,\Qx b] + [\naivecurv(a),b] + [a,\naivecurv(b)]
\end{aligned}
\end{equation}
for all $a,b \in GV^\ast$;
this uses the $\Hopf$-equivariance of $\tt$ and $\qq$
but not \eqref{eq:7term}. Hence
\begin{equation}\label{eq:cnq}
\naivecurv + \curv = \Qx \one
\end{equation}
Clearly $d_{\tt}d_{\qq}+d_{\qq}d_{\tt} = [\naivecurv,-]$.
Hence $[\naivecurv,-] = -[\curv,-]$ when restricted to $\Der(GV^\ast)^{\Hopf}$
since $[\Qx \one,-]$ annihilates $\Hopf$-equivariant derivations.
This gives \eqref{eq:dddd}.
To see $d_{\tt}(\curv) = 0$,
note that 
$d_{\tt}(\Qx \one) = 0$ since $\tt$ is $\Hopf$-equivariant,
and $d_{\tt}(\naivecurv) = \tt(\tt \qq + \qq \tt) - (\tt \qq + \qq \tt) \tt
= 0$ because $\tt^2 = 0$.
The proof that $d_{\qq}(\curv)=0$ goes the same way.

\section{Some remarks about the BV formalism}\label{app:bv}

The $\bv{\Box}$ algebra in Theorem \ref{theorem:mainh}
is not directly related, to the author's knowledge,
to the algebraic structures in the BV formalism
for the quantization of gauge theories.
Nevertheless, for comparison,
here are some remarks about the BV formalism \cite{ksm,schw}.
Costello \cite{costello}
uses the BV formalism to introduce the
dgca denoted $\ax$ in this paper.
\begin{enumerate}
\item Let $M$ be a finite-dimensional manifold.
The gca of polyvector fields $A = \Gamma(\wedge(TM))$
with the Schouten bracket $\{-,-\}$ is a Gerstenhaber algebra\footnote{%
The differential on $A$ is zero:
By the Hochschild-Kostant-Rosenberg theorem,
the graded vector space
$A$ is the homology of a subcomplex of the
complex of Hochschild cochains for the algebra $C^{\infty}(M)$.}.
\item Suppose $M$ has a volume form, $\mu$.
It induces an isomorphism between $A$ and the de Rham complex of $M$.
Define the degree $-1$ map
$\Delta_\mu = \mu^{-1} \circ \ddr \circ \mu : A \to A$.
Then $\Delta_\mu$ is second order on $A$; evidently $\Delta_\mu^2=0$;
and the associated Gerstenhaber bracket is the Schouten bracket
\cite[Theorem 2.8]{ksm}.
This turns $A$ into a BV algebra.
\item The supergeometry perspective is that $A$
is the `algebra of functions' on the graded supermanifold $T^{\ast}[-1]M$,
the odd cotangent bundle of $M$;
and the Schouten bracket is an odd Poisson,
in fact odd symplectic, structure \cite[Section 2.3]{ksm}.
\item For every $S \in A^\text{even}$,
the quantum master equation
$\Delta_\mu S + \tfrac12 \{S,S\} = 0$
is formally equivalent to $\Delta_\mu e^S = 0$,
and formally implies that the integration
of $e^S$ over a Lagrangian
submanifold $L \subset T^\ast[-1]M$,
a small perturbation of the
canonical Lagrangian submanifold $M \subset T^\ast[-1]M$,
is independent of $L$, see \cite{schw}.
\item 
If $M$ is a finite-dimensional vector space,
then $\mu$ is canonical up to normalization.
There is an analogous story where $A$ is replaced
by formal power series on the graded vector space $T^\ast[-1]M = M \oplus M^\ast[-1]$;
the supergeometry perspective is that this is
the `algebra of functions' on a formal neighborhood of the origin $0 \in T^\ast[-1]M$.
\item
Applying this to gauge theory
involves generalizations:
$M$ is an infinite-dimensional vector space
of `fields',
say the space of sections of a vector bundle on a $4$-manifold;
$M$ is itself graded,
say due to the addition of BRST ghosts.
Passing from $M$ to $T^{\ast}[-1]M$
is referred to as adding antifields.
In \cite[Chapter 6.2]{costello} the BV formalism
is used to introduce fields for first order Yang-Mills
and, based on this, to define the dgca $\ax$.
\end{enumerate}

%%%%%%%%%%%%%%%%%%%%%%%%%%%%%%%%%%%%%%%%%%%%%%%%%%%%%%%%%%%%%%
%%%%%%%%%%%%%%%%%%%%%%%%%%%%%%%%%%%%%%%%%%%%%%%%%%%%%%%%%%%%%%

\section{Code}\label{app:code}

This Wolfram Mathematica code is not part of the logic of this paper.
It can be used to construct $\theta_n$ for small arity $n$
and check all axioms up to that arity (cf.~Section \ref{sec:proofmain}).
%%%%%%%%%%%%%%%%%%%%%%%%%%%%%%%%%%%%%%%%%%%%%%%%%
\step
The code uses ordered bases, so
\begin{vv}
e01i23
\end{vv}
{} corresponds to $e^0 e^1 + i e^2 e^3 \in \Lambda^2_+$ in row $0$ in \eqref{eq:cpx},
\begin{vv}
s01i23
\end{vv}
{} corresponds to $e^0 e^1 + ie^2 e^3 \in \Lambda^2_+$ in row $1$.
The degree 
\begin{vv}
deg
\end{vv}
{} is that in $V^{-2} \oplus \ldots \oplus V^1$;
\begin{vv}
rdeg
\end{vv}
{} is the row degree.
The gca product and differential are given by pre-calculated arrays.
For instance $e^0e^1 + ie^2e^3$ in row 0 times $e^0e^1+ie^2e^3$
in row 1 yields $2i e^0e^1e^2e^3$ in row 1,
corresponding to the entry
\begin{vv}
{16,6,9}->2*I
\end{vv}
{} below. Further,
\begin{vv}
k[0],k[1],k[2],k[3]
\end{vv}
{} are symbols for the four partial derivatives,
or equivalently, for the four components of the momentum.
\begin{verbatim}
(* ordered bases and degrees *)
bas={one, e0,e1,e2,e3, e01i23,e02i31,e03i12,
     s01i23,s02i31,s03i12, s123,s023,s031,s012, s0123};
dual=Map[ToExpression["dual"<>ToString[#]]&,bas];
MapThread[(deg[#1]=deg[#5]=#3;deg[#2]=-#3;rdeg[#1]=rdeg[#5]=#4;pos[#2]=#5;ndeg[#2]=1;)&,
  {bas,dual,{-2,-1,-1,-1,-1,0,0,0,-1,-1,-1,0,0,0,0,1},
            {0,0,0,0,0,0,0,0,1,1,1,1,1,1,1,1},Range[1,16]}];

(* pre-calculated gca product and differential, and h *)
prod=SparseArray[{{1,1,1}->1,{2,1,2}->1,{2,2,1}->1,{3,1,3}->1,{3,3,1}->1,{4,1,4}->1,{4,4,1}->1,{5,1,5}->1,
  {5,5,1}->1,{6,1,6}->1,{6,2,3}->1/2,{6,3,2}->-1/2,{6,6,1}->1,{6,4,5}->-I/2,{6,5,4}->I/2,{7,1,7}->1,
  {7,2,4}->1/2,{7,4,2}->-1/2,{7,7,1}->1,{7,3,5}->I/2,{7,5,3}->-I/2,{8,1,8}->1,{8,2,5}->1/2,{8,5,2}->-1/2,
  {8,8,1}->1,{8,3,4}->-I/2,{8,4,3}->I/2,{9,1,9}->1,{9,9,1}->1,{10,1,10}->1,{10,10,1}->1,{11,1,11}->1,
  {11,11,1}->1,{12,1,12}->1,{12,3,9}->I,{12,4,10}->I,{12,5,11}->I,{12,9,3}->-I,{12,10,4}->-I,{12,11,5}->-I,
  {12,12,1}->1,{13,1,13}->1,{13,2,9}->I,{13,4,11}->-1,{13,5,10}->1,{13,9,2}->-I,{13,10,5}->-1,{13,11,4}->1,
  {13,13,1}->1,{14,1,14}->1,{14,2,10}->I,{14,3,11}->1,{14,5,9}->-1,{14,9,5}->1,{14,10,2}->-I,{14,11,3}->-1,
  {14,14,1}->1,{15,1,15}->1,{15,2,11}->I,{15,3,10}->-1,{15,4,9}->1,{15,9,4}->-1,{15,10,3}->1,{15,11,2}->-I,
  {15,15,1}->1,{16,1,16}->1,{16,2,12}->1,{16,3,13}->-1,{16,4,14}->-1,{16,5,15}->-1,{16,6,9}->2*I,
  {16,7,10}->2*I,{16,8,11}->2*I,{16,9,6}->2*I,{16,10,7}->2*I,{16,11,8}->2*I,{16,12,2}->1,{16,13,3}->-1,
  {16,14,4}->-1,{16,15,5}->-1,{16,16,1}->1},{16,16,16}];
dmat=SparseArray[{{2,1}->k[0],{3,1}->k[1],{4,1}->k[2],{5,1}->k[3],{6,9}->1,{6,4}->I*k[3]/2,
  {6,5}->-I*k[2]/2,{6,2}->-k[1]/2,{6,3}->k[0]/2,{7,10}->1,{7,3}->-I*k[3]/2,{7,2}->-k[2]/2,{7,5}->I*k[1]/2,
  {7,4}->k[0]/2,{8,11}->1,{8,2}->-k[3]/2,{8,3}->I*k[2]/2,{8,4}->-I*k[1]/2,{8,5}->k[0]/2,{12,11}->I*k[3],
  {12,10}->I*k[2],{12,9}->I*k[1],{13,10}->k[3],{13,11}->-k[2],{13,9}->I*k[0],{14,9}->-k[3],{14,11}->k[1],
  {14,10}->I*k[0],{15,9}->k[2],{15,10}->-k[1],{15,11}->I*k[0],{16,15}->-k[3],{16,14}->-k[2],{16,13}->-k[1],
  {16,12}->k[0]},{16,16}];
nvar:=With[{x=Unique[]},x/:HoldPattern[NumericQ[x]]=True;x];
hmat=Outer[If[deg[#1]==deg[#2]-1,Switch[rdeg[#1]-rdeg[#2],0,
           Sum[nvar*k[i],{i,0,3}],-1,nvar,_,0],0]&,bas,bas];
\end{verbatim}
%%%%%%%%%%%%%%%%%%%%%%%%%%%%%%%%%%%%%%%%%%%%%%%%%%%%%%%%%%%
\indent\indent
The ad-hoc implementation
of a free Gerstenhaber algebra with product
\begin{vv}
mul
\end{vv}
, bracket
\begin{vv}
lie
\end{vv}
{} below
is not guaranteed to reduce all expressions to a normal form\footnote{%
Unlike, say, Gr\"obner basis reduction for
the quotient of a polynomial algebra by an ideal.},
but suffices here.
\begin{verbatim}
(* degrees and linearity *)
setdeg[s_,offset_:0]:=( s[_?NumericQ]=0;
                        s[mul[a_,b__]]:=s[a]+s[mul[b]];
                        s[lie[a_,b_]]:=s[a]+s[b]+offset;
                        s[(_?NumericQ|k[_]|Power[k[_],_])*a_]:=s[a];);
lin[s_]:=( s[a___,(b_:1)*c_Plus,d___]:=Map[s[a,b*#,d]&,c];
           s[a___,n_?NumericQ*b_,c___]:=n*s[a,b,c];
           s[___,0,___]=0;);
setdeg[deg,-1]; setdeg[ndeg]; setdeg[rdeg]; lin[mul]; lin[lie];
set[pairs_]:=Scan[(Evaluate[First[#]]=Last[#])&,pairs];

(* ad-hoc implementation of free Gerstenhaber algebra *)
mul[a_]:=a; mul[a___,n_?NumericQ,b___]:=n*mul[a,b]; lie[_,_?NumericQ]=lie[_?NumericQ,_]=0;
HoldPattern[mul[a___,mul[b___],c___]]:=mul[a,b,c];
mul[a___,b_,c_,d___]/;Not[OrderedQ[{b,c}]]:=(-1)^(deg[b]*deg[c])*mul[a,c,b,d];
lie[a_,b_]/;Not[OrderedQ[{a,b}]]:=-(-1)^((deg[a]-1)*(deg[b]-1))*lie[b,a];
lie[mul[a_,b__],c_]:=mul[a,lie[mul[b],c]]+(-1)^(deg[a]*deg[mul[b]])*mul[b,lie[a,c]];
lie[a_,mul[b_,c__]]:=mul[lie[a,b],c]+(-1)^((deg[a]-1)*deg[b])*mul[b,lie[a,mul[c]]];
mul/:Power[k[i_],j_:1]*mul[a_,b__]:=k[i]^(j-1)*(mul[k[i]*a,b]+mul[a,k[i]*mul[b]])//Expand;
lie/:Power[k[i_],j_:1]*lie[a_,b_]:=k[i]^(j-1)*(lie[k[i]*a,b]+lie[a,k[i]*b])//Expand;
Scan[(lie[#,#]=0;mul[___,#,#,___]=0;)&,
  Flatten[Map[{1,k[0],k[1],k[2],k[3]}*#&,Select[dual,OddQ[deg[#]]&]]]];
J[a_,b_,c_]:=(lie[lie[a,b],c]+(-1)^((deg[a]-1)*(deg[b]+deg[c]))*lie[lie[b,c],a]
                             +(-1)^((deg[c]-1)*(deg[a]+deg[b]))*lie[lie[c,a],b]);
Outer[set[First[Solve[Thread[J[##]==0]]]]&,dual,dual,dual]; (* Jacobi identities *)
\end{verbatim}
%%%%%%%%%%%%%%%%%%%%%%%%%%%%%%%%%%%%%%%%%%%%%%%%%%%%%%%%%%%
\indent\indent
Below is an implementation of Gerstenhaber derivations;
the operations $\tt$, $\qq$, $\hh$ and $d_{\tt}$, $d_{\qq}$, $\Gamma$ and the curvature $\curv$;
the operations $\nu_k$, $\mu_k$ in terms of $\theta_k$ where at the moment
$\theta_k$ is given by an ansatz with undetermined coefficients; and the axioms $A_k$, $B_k$, $C_k$.
\begin{verbatim}
(* Gerstenhaber derivation associated to f in the default case g=h=0 *)
op[ddeg_,dndeg_,f_,g_:(0&),h_:(0&)]:=Module[{s},
  lin[s]; s[_?NumericQ]=0; Scan[(s[#]:=s[#]=Expand[f[#]];)&,dual];
  s[mul[a_,b__]]:=(-1)^deg[a]*g[a,mul[b]]+mul[s[a],b]+(-1)^(ddeg*deg[a])*mul[a,s[mul[b]]];
  s[lie[a_,b_]]:=(-1)^deg[a]*h[a,b]+lie[s[a],b]+(-1)^(ddeg*(deg[a]-1))*lie[a,s[b]];
  s/:deg[s]=ddeg; s/:ndeg[s]=dndeg;
  s[Power[k[i_],j_:1]*a_]:=k[i]^j*s[a];
  s/:(n_?NumericQ)*s:=op[ddeg,dndeg,n*f[#]&,n*g[#]&,n*h[#]&];
  s];
comm[x_,y_]:=op[deg[x]+deg[y],ndeg[x]+ndeg[y],x[y[#]]-(-1)^(deg[x]*deg[y])*y[x[#]]&];

(* alpha, beta, gamma, curvature, etc *)
Q=-k[0]^2+k[1]^2+k[2]^2+k[3]^2;
alpha=op[-1,0,0&,lie,0&];
beta=op[1,0,0&,0&,Q*mul[#1,#2]-mul[Q*#1,#2]-mul[#1,Q*#2]&];
gamma=op[1,0,0&,0&,ndeg[#1]*ndeg[#2]*mul[#1,#2]&];
da=comm[alpha,#]&; db=comm[beta,#]&; K=op[0,0,Q*#&];
GAMMA[x_]:=With[{n=ndeg[x]},op[deg[x]+1,n,If[n==0,0,2/((n+1)*n)]*gamma[x[#]]&]];

(* d, h and gca product *)
d=op[1,0,dmat[[pos[#]]].dual&]; h=op[-1,0,hmat[[pos[#]]].dual&];
theta[2]=op[2,1,Flatten[prod[[pos[#]]]].Flatten[Outer[mul,dual,dual]]&];

(* ansatz for theta_n, n>=3: graded symmetric, deg=2 and rdeg=0 *)
incr[n_]:=incr[n]=Select[Select[Tuples[Range[1,16],n],LessEqual@@#&],
                         DuplicateFreeQ[Cases[#,Alternatives@@Select[Range[1,16],OddQ[deg[#]]&]]]&];
incr2[n_]:=incr2[n]=Table[Select[incr[n],And[Total[Map[deg,#]]+2==deg[i],
                                             Total[Map[rdeg,#]]==rdeg[i]]&],{i,1,16}];
theta[n_]:=theta[n]=op[2,n-1,Total[Map[nvar*(mul@@dual[[#]])&,incr2[n][[pos[#]]]]]&];
nu[k_]:=nu[k]=da[theta[k]]; mu[k_]:=mu[k]=-GAMMA[comm[h,nu[k]]];

(* auxiliary derivations, and axioms A,B,C decomposed by degree *)
dd=comm[d,d]; dh=comm[d,h]; hh=comm[h,h];
dn[k_]:=dn[k]=comm[d,nu[k]]; dm[k_]:=dm[k]=comm[d,mu[k]];
hn[k_]:=hn[k]=comm[h,nu[k]]; hm[k_]:=hm[k]=comm[h,mu[k]];
mm[m_,n_]:=mm[m,n]=comm[mu[m],mu[n]]; mn[m_,n_]:=mn[m,n]=comm[mu[m],nu[n]];
nn[m_,n_]:=nn[m,n]=comm[nu[m],nu[n]];
dam[k_]:=dam[k]=da[mu[k]]; dan[k_]:=dan[k]=da[nu[k]]; dbn[k_]:=dbn[k]=db[nu[k]];
Unprotect[C];
A[1,2]=dd/2; A[1,0]=op[0,0,-K[#]+dh[#]&]; A[1,-2]=hh/2;
A[k_,2]:=A[k,2]=op[2,k-1,dbn[k][#]+dm[k][#]+1/2*Sum[mm[k+1-n,n][#],{n,2,k-1}]&];
A[k_,0]:=A[k,0]=op[0,k-1,hm[k][#]&];
B[k_,2]:=B[k,2]=op[2,k-1,dn[k][#]+Sum[mn[k+1-n,n][#],{n,2,k-1}]&];
B[k_,0]:=B[k,0]=op[0,k-1,dam[k][#]+hn[k][#]&];
C[k_,2]:=C[k,2]=op[2,k-1,1/2*Sum[nn[k+1-n,n][#],{n,2,k-1}]&];
C[k_,0]:=C[k,0]=op[0,k-1,dan[k][#]&];
q[k_]:=q[k]=GAMMA[op[2,k-1,Sum[mn[k+1-n,n][#],{n,2,k-1}]&]];
dt[k_]:=dt[k]=comm[d,theta[k]];
b[k_]:=b[k]=op[3,k-1,-dt[k][#]+q[k][#]&];
\end{verbatim}
%%%%%%%%%%%%%%%%%%%%%%%%%%%%%%%%%%%%%%%%%%%
\indent\indent
Here an $h$ is fixed such that $A_1=0$, note that this $h$ is not unique;
then the coefficients of $\theta_n$ are computed for small $n$;
finally all the axioms up to that arity are checked.
\begin{verbatim}
(* fix h, not unique *)
coeffs[expr_]:=Map[Last,Flatten[CoefficientRules[expr,Join[Table[k[i],{i,0,3}],dual]]]];
set[First[Solve[Thread[coeffs[Map[A[1,0],dual]]]==0]]];
set[With[{X=coeffs[Map[A[1,-2],dual]]},First[FindInstance[Thread[X==0],Variables[X]]]]];

(* compute theta_n for n=3...nmax *)
nmax=3;
sol[expr_]:=Module[{x},CoefficientList[expr/.MapIndexed[(#1->x^First[#2])&,
                  DeleteDuplicates[Extract[expr,Position[expr,_mul]]]],x]
                    //DeleteDuplicates//First[Solve[Thread[#==0]]]&]; 
Do[Scan[set[sol[b[n][#]]]&,dual],{n,3,nmax}];

(* check all axioms up to arity nmax *)
Table[Map[A[1,i],dual],{i,{-2,0,2}}]//Flatten//DeleteDuplicates
Table[Map[A[n,i],dual],{n,2,nmax},{i,{0,2}}]//Flatten//DeleteDuplicates
Table[Map[B[n,i],dual],{n,2,nmax},{i,{0,2}}]//Flatten//DeleteDuplicates
Table[Map[C[n,i],dual],{n,3,nmax},{i,{0,2}}]//Flatten//DeleteDuplicates
\end{verbatim}

%%%%%%%%%%%%%%%%%%%%%%%%%%%%%%%%%%%%%%%%%%%%%%%%%%%%%%%%%%%%%%%%%%%%%%%%%%%%
%%%%%%%%%%%%%%%%%%%%%%%%%%%%%%%%%%%%%%%%%%%%%%%%%%%%%%%%%%%%%%%%%%%%%%%%%%%%
%%%%%%%%%%%%%%%%%%%%%%%%%%%%%%%%%%%%%%%%%%%%%%%%%%%%%%%%%%%%%%%%%%%%%%%%
{\footnotesize
}
\end{document}